\documentclass[11pt]{article}
\usepackage[utf8]{inputenc}
\usepackage[T1]{fontenc}
\usepackage[margin=1in]{geometry}
\usepackage{natbib}
\setcitestyle{authoryear}
\bibpunct{[}{]}{;}{a}{}{,}
\let\cite\citep

\newif\ifanonymousmode
\anonymousmodefalse

\newif\ifshowendacks
\showendackstrue

\title{Simultaneous EF1 and approximate MMS allocations for submodular valuations}
\author{Uriel Feige\thanks{Weizmann Institute of Science, Rehovot, Israel. \texttt{uriel.feige@weizmann.ac.il}}
  \and
  Assaf Fine\thanks{Weizmann Institute of Science, Rehovot, Israel. \texttt{assaf.fine@weizmann.ac.il}}
}
\date{\today}

\newcommand{\renderbackmatter}{\renderbibliography
  \appendix
  \section{Missing proofs from Section~\ref{sec:RECE1}}\label{app:missing-ece}

{
\subsection{ECE outputs an EF1 allocation}\label{app:ECE_EF1}
}

{Here we prove Proposition~\ref{prp:ECE_EF1}.}

\begin{proof}
Since a new item $e$ is always allocated to an agent that no other agent envies, if $\hat{A}$ is EF1 just before $e$ is allocated, 
then after allocating $e$  to such an agent the updated partial allocation remains EF1, by definition.
It remains to show that EF1 is preserved by the Cycle Rotation Process. 
{This holds because cycle rotations only permute existing bundles among agents (no bundle's contents change), and each agent's value for her own bundle does not decrease.}

\end{proof}

\subsection{The allocation produced by RECE1 is EFL}\label{app:RECE1_EFL_as}

{Here we prove Proposition~\ref{prop:RECE1_EFL_as}.}

\begin{proof}[Proof sketch] 
Since RECE1 is an instantiation of ECE, Proposition~\ref{prp:ECE_EF1} gives EF1.
For the stronger EFL property, fix agents $i,j$ with $|A_j|\ge 2$ and let $g$ be the last item added to $A_j$ (necessarily in the random phase).
At the round $g$ was allocated, the recipient had no incoming edges in the envy graph, so removing $g$ eliminates $i$'s envy: $v_i(A_i) \ge v_i(A_j \setminus \{g\})$.
Moreover, $g$ was still unallocated during the greedy phase, so by the greedy choice $v_i(g) \le v_i(e_i) \le v_i(A_i)$.
Together, these two conditions establish EFL.
\end{proof}

\section{Missing proofs from Section~\ref{sec:W_i_expected}}\label{app:Expected_lower_bound}

\subsection{Proof of Claim \ref{clm:marginal-beta}} \label{app:marginal-beta}
This is a well known fact. For completeness, we present a short proof.
\begin{proof}
By monotonicity, $v(A\cup B)\ge v(B)$, and therefore
\[
v(B\mid A)=v(A\cup B)-v(A)\ \ge\ v(B)-v(A)\ \ge\ \beta-\alpha.
\]
Consider adding the items of $B=(e_1,\dots,e_{|B|})$ one by one to $A$.
Submodularity implies:
\[
v\left(B\mid A\right)
=\sum_{l=1}^{\left|B\right|}v\left(e_l\mid A\cup_{k<l}e_{k}\right)
\leq\sum_{l=1}^{\left|B\right|}v\left(e_l\mid A\right).
\]
Combining the two inequalities yields
\[
(\beta-\alpha)\ \le\ \sum_{l=1}^{|B|} v(e_l\mid A).
\]
Finally, the random item $e$ is uniform in $B$, so averaging over $|B|$ gives the claimed inequality.
\end{proof}

\subsection{Averaging bound for monotone submodular valuations}\label{app:expected_lower_sub}

{We prove the standard averaging bound for monotone submodular functions: if $S$ is a uniformly random $r$-subset of a set $B$ with $|B|=k$, then $\mathbb{E}[v(S)] \ge (r/k)\,v(B)$. }
{
\begin{proof}
By Lemma~\ref{lem:MMSoverAdditive}, there exists an additive $v' \le v$ with $v'(B) = v(B)$.  Since each $e \in B$ belongs to a uniformly random $r$-subset with probability $r/k$, linearity of expectation gives
\[
\mathbb{E}[v(S)] \;\ge\; \mathbb{E}[v'(S)] \;=\; \sum_{e \in B} v'(\{e\})\,\Pr[e \in S] \;=\; \frac{r}{k}\,v'(B) \;=\; \frac{r}{k}\,v(B). \qedhere
\]
\end{proof}}

\subsection{Calculation omitted from Proposition \ref{prop:expected_c_j}} \label{app:expected_c_j}

\begin{lemma}[Sum bound for $g(\alpha)$]\label{lem:g-sum-bound}
For every integer $k\ge 1$ and every $\alpha\in(0,1)$,
\[
\sum_{r=1}^{k}\Bigl(\frac{1}{k}-\frac{\alpha}{r}\Bigr)_+\ \ge\ g(\alpha).
\]
\end{lemma}

{
\begin{proof}
If $k\alpha<1$, then $\alpha/r\le\alpha<1/k$ for every $r\in\{1,\dots,k\}$, so all $k$ summands are positive and the LHS equals $1-\alpha H_k$, where $H_k:=\sum_{r=1}^{k}1/r$. Since $H_k\le 1+\ln k$ and $k<1/\alpha$,
\[
1-\alpha H_k\ \ge\ 1-\alpha(1+\ln k)\ \ge\ 1-\alpha-\alpha\ln(1/\alpha)\ =\ g(\alpha).
\]

Assume henceforth that $k\alpha\ge 1$. Let $a:=\lfloor \alpha k\rfloor\ge 1$, and write $x:=\alpha k$, so $x\in[a,a+1)$. The positive summands are exactly those with $r>a$, hence
\[
\sum_{r=1}^{k}\Bigl(\frac{1}{k}-\frac{\alpha}{r}\Bigr)_+\ =\ \frac{k-a}{k}-\alpha\sum_{r=a+1}^{k}\frac{1}{r}.
\]
Use the elementary bound
\[
\sum_{r=a+1}^{k}\frac{1}{r}\ \le\ \frac{1}{a+1}+\int_{a+1}^{k}\frac{dt}{t}\ =\ \frac{1}{a+1}+\ln\frac{k}{a+1}.
\]
Therefore
\[
\sum_{r=1}^{k}\Bigl(\frac{1}{k}-\frac{\alpha}{r}\Bigr)_+\ \ge\ 1-\frac{a}{k}-\frac{\alpha}{a+1}-\alpha\ln\frac{k}{a+1}.
\]
Subtracting $g(\alpha)=1-\alpha-\alpha\ln(1/\alpha)$ and using $1/\alpha=k/x$ (so $\ln(1/\alpha)=\ln(k/x)$), the remaining gap, divided by $\alpha$, equals
\[
\psi(x)\ :=\ 1-\frac{a}{x}-\frac{1}{a+1}-\ln\frac{x}{a+1}.
\]
We wish to show that \(\psi(x)\ge 0\) for every possible \(x\in[a,a+1)\).
Fix \(a\). Differentiating
with respect to \(x\) gives
\[
\psi'(x)\ =\ \frac{a}{x^2}-\frac{1}{x}\ =\ \frac{a-x}{x^2}\ \le\ 0,
\]
so $\psi$ is non-increasing on $[a,a+1)$, and
\[
\psi(x)\ \ge\ \lim_{y\uparrow a+1}\psi(y)\ =\ 1-\frac{a}{a+1}-\frac{1}{a+1}-\ln 1\ =\ 0.
\]
This proves the bound.
\end{proof}
}

{\subsection{Proof of the drop lemma (Lemma~\ref{lem:drop})}\label{app:drop-lemma}}

{
\begin{proof}
For every $j\in[n]$, every nonempty $R\subseteq B_j$, and every $e\in R$, we prove the two inequalities of the lemma:
\[
P(R)-v'_i(\{e\})\ \le\ P(R\setminus\{e\})\ \le\ P(R).
\]
Throughout, the function $(x-\alpha)_+=\max\{x-\alpha,0\}$ is monotone non-decreasing and $1$-Lipschitz on $\mathbb{R}$, so $0\le (x+a-\alpha)_+-(x-\alpha)_+\le a$ for every $x\ge 0$ and $a\ge 0$. We also use subset-additivity of $v'_i$ on $B_j$: $v'_i(T)=v'_i(T\setminus\{e\})+v'_i(\{e\})$ for every $T\subseteq B_j$ and every $e\in T$.

Fix $R$ and $e$ as above. Set $s:=|R|$, $a:=v'_i(\{e\})$, $R^-:=R\setminus\{e\}$, and abbreviate $D(R,e):=P(R)-P(R^-)$; the lemma's two inequalities are then equivalent to the single bound $D(R,e)\in[0,a]$, which we prove by induction on $s$.

\paragraph{Step 1: A recursion for $D(R,e)$.}
By Claim~\ref{obs:P-recursion} applied to $R$, splitting the inner expectation $\kaliEXPECTED_{e'\sim\mathbb{U}(R)}$ on whether $e'=e$ or $e'\in R^-$, and using $v'_i(R)=v'_i(R^-)+a$,
\begin{equation}\label{eq:drop-PR}
P(R)\ =\ \frac{(v'_i(R^-)+a-\alpha)_+}{s}\ +\ \frac{1}{s}\,P(R^-)\ +\ \frac{s-1}{s}\,\kaliEXPECTED_{e'\sim\mathbb{U}(R^-)}\bigl[P(R\setminus\{e'\})\bigr].
\end{equation}
For $s\ge 2$, applying Claim~\ref{obs:P-recursion} to the nonempty set $R^-$,
\begin{equation}\label{eq:drop-PRminus}
P(R^-)\ =\ \frac{(v'_i(R^-)-\alpha)_+}{s-1}\ +\ \kaliEXPECTED_{e'\sim\mathbb{U}(R^-)}\bigl[P(R^-\setminus\{e'\})\bigr].
\end{equation}
For $e'\in R^-$, $R\setminus\{e'\}\ni e$ and $R^-\setminus\{e'\}=(R\setminus\{e'\})\setminus\{e\}$, so $P(R\setminus\{e'\})-P(R^-\setminus\{e'\})=D(R\setminus\{e'\},e)$. From~\eqref{eq:drop-PR},
\[
D(R,e)\ =\ P(R)-P(R^-)\ =\ \frac{(v'_i(R^-)+a-\alpha)_+}{s}\ -\ \frac{s-1}{s}P(R^-)\ +\ \frac{s-1}{s}\,\kaliEXPECTED_{e'\sim\mathbb{U}(R^-)}\bigl[P(R\setminus\{e'\})\bigr].
\]
Substituting~\eqref{eq:drop-PRminus} for $P(R^-)$ and simplifying,
\begin{align*}
D(R,e)
&=\ \frac{(v'_i(R^-)+a-\alpha)_+}{s}\ -\ \frac{s-1}{s}\!\left[\frac{(v'_i(R^-)-\alpha)_+}{s-1}+\kaliEXPECTED_{e'}[P(R^-\setminus\{e'\})]\right]\\
&\quad +\ \frac{s-1}{s}\,\kaliEXPECTED_{e'}[P(R\setminus\{e'\})]\\
&=\ \frac{(v'_i(R^-)+a-\alpha)_+-(v'_i(R^-)-\alpha)_+}{s}\ +\ \frac{s-1}{s}\,\kaliEXPECTED_{e'\sim\mathbb{U}(R^-)}\bigl[D(R\setminus\{e'\},e)\bigr].
\end{align*}
Hence, for $s\ge 2$,
\begin{equation}\label{eq:drop-recur}
D(R,e)\ =\ \frac{(v'_i(R^-)+a-\alpha)_+-(v'_i(R^-)-\alpha)_+}{s}\ +\ \frac{s-1}{s}\,\kaliEXPECTED_{e'\sim\mathbb{U}(R^-)}\bigl[D(R\setminus\{e'\},e)\bigr].
\end{equation}

\paragraph{Step 2: Induction on $|R|$.}
\emph{Base ($s=1$, i.e., $R=\{e\}$).} By Claim~\ref{obs:P-recursion} applied to $\{e\}$ together with $P(\emptyset)=0$, $P(\{e\})=(a-\alpha)_+$. Hence $D(\{e\},e)=(a-\alpha)_+-0\in[0,a]$. 

\emph{Inductive step ($s\ge 2$).} Assume $D(R',e')\in[0,v'_i(\{e'\})]$ for every (subset, element) pair with subset of size $<s$. The first term of~\eqref{eq:drop-recur} is $(v'_i(R^-)+a-\alpha)_+-(v'_i(R^-)-\alpha)_+\in[0,a]$ (with $x:=v'_i(R^-)\ge 0$). The second term, $\kaliEXPECTED_{e'\sim\mathbb{U}(R^-)}[D(R\setminus\{e'\},e)]$, is an average of quantities each in $[0,a]$ by the inductive hypothesis (each $R\setminus\{e'\}$ contains $e$ and has size $s-1<s$). A convex combination, with weights $1/s$ and $(s-1)/s$, of values in $[0,a]$ lies in $[0,a]$, so $D(R,e)\in[0,a]$.

In either case $D(R,e)=P(R)-P(R^-)\in[0,a]$, equivalently $P(R)-v'_i(\{e\})\le P(R\setminus\{e\})\le P(R)$.
\end{proof}
}

\subsection{Proof of Proposition~\ref{prop:EW-true}}\label{app:EW-true}

\begin{proof}
Set $d_i(e):=v_i(e\mid A_e^{<e})$, so $W_i=\sum_{e\in\items}d_i(e)$. Fix an MMS bundle $B_j$, partition $B_j=G_j\sqcup R_j^0$ as in Proposition~\ref{prop:expected_c_j}, set $k:=|B_j|$ and $k':=|R_j^0|$.

\paragraph{Per-round bound.} The recipient bundle $A_{e_r}^{<e_r}$ is an envy-graph sink; combined with monotonicity of $v_i(A_i)$ along the run, $X_r:=v_i(A_{e_r}^{<e_r})\le v_i(A_i)$ deterministically, so $\kaliEXPECTED[X_r]\le\mu_i\le\alpha$. Applying Claim~\ref{clm:marginal-beta} to $v_i$ with $A:=A_{e_r}^{<e_r}$ and source set $R$ (a uniform $r$-subset of $R_j^0$ at round $r$), then the inequality $\kaliEXPECTED[\max\{Y,0\}]\ge \max\{\kaliEXPECTED[Y],0\}$ applied with $Y:=v_i(R)-\alpha$ and $\kaliEXPECTED[v_i(R)]\ge \kaliEXPECTED[v'_i(R)]=(r/k')\,v'_i(R_j^0)$ (Claim~\ref{clm:avg-submodular}, additivity of $v'_i$ on $B_j$, and $v'_i\le v_i$), gives
\begin{equation}\label{eq:EWtrue-perround}
\kaliEXPECTED[d_i(e_r)]\ \ge\ \Bigl(\tfrac{v'_i(R_j^0)}{k'}-\tfrac{\alpha}{r}\Bigr)_+.
\end{equation}

\paragraph{Reduction to the $W'_i$ argument.} From here the proof   {has a structure similar to that of the proof} of Proposition~\ref{prop:expected_c_j}, with $d_i$ in place of $c_i$, and~\eqref{eq:EWtrue-perround} in place of Claim~\ref{clm:per-round-contrib}. 

{Set $\tau:=v'_i(R_j^0)$ Since $v'_i$ is additive on $B_j$ with $v'_i(B_j)=1$ and $R_j^0\subseteq B_j$, $\tau\in[0,1]$. The greedy items of $B_j$ contribute
\[
\sum_{g\in G_j}d_i(g)\ =\ \sum_{g\in G_j}v_i(\{g\})\ \ge\ \sum_{g\in G_j}v'_i(\{g\})\ =\ 1-\tau,
\]
using $v'_i\le v_i$ and additivity of $v'_i$ on $B_j$. We split into two cases.

\textit{Case 1: $\tau\le\alpha$.} The random-phase contribution is non-negative, so the greedy contribution alone gives
\[
\kaliEXPECTED\Bigl[\sum_{e\in B_j}d_i(e)\Bigr]\ \ge\ 1-\tau\ \ge\ 1-\alpha\ \ge\ g(\alpha),
\]
where the last inequality, {which can be written as} {$1-\alpha-g(\alpha)=\alpha\ln(1/\alpha)\ge 0$}, {holds} for $\alpha\in(0,1)$.

\textit{Case 2: $\tau>\alpha$.} Set $\theta:=\alpha/\tau\in(0,1)$. Summing~\eqref{eq:EWtrue-perround} over $r=1,\dots,k'$ and applying Lemma~\ref{lem:g-sum-bound} with parameter $\theta$,
\[
\kaliEXPECTED\Bigl[\sum_{e\in R_j^0}d_i(e)\Bigr]\ \ge\ \tau\sum_{r=1}^{k'}\Bigl(\tfrac{1}{k'}-\tfrac{\theta}{r}\Bigr)_+\ \ge\ \tau\,g(\theta)\ =\ \tau\,g(\alpha/\tau).
\]
Adding the greedy contribution,
\[
\kaliEXPECTED\Bigl[\sum_{e\in B_j}d_i(e)\Bigr]\ \ge\ (1-\tau)+\tau\,g(\alpha/\tau)\ =\ {1-\alpha-\alpha\ln(\tau/\alpha)}\ \ge\ {1-\alpha-\alpha\ln(1/\alpha)}\ =\ g(\alpha),
\]
using $\tau\le 1$ in the penultimate inequality.

In either case, $\kaliEXPECTED[\sum_{e\in B_j}d_i(e)]\ge g(\alpha)$. Summing over $j\in[n]$ yields $\kaliEXPECTED[W_i]\ge n\cdot g(\alpha)$.}\qedhere
\end{proof}

\section{Missing proofs from Section~\ref{sec:high_prob}}\label{app:missing-high_prob}

{\subsection{Proof of Proposition~\ref{prop:Mk-submart}}\label{app:Mk-submart}

\textbf{Proposition~\ref{prop:Mk-submart}} \textbf{.} \emph{$\{\Lambda_k\}_{k=0}^{T}$ is a submartingale with respect to $\{\mathcal{F}_k\}$, i.e.\ $\kaliEXPECTED[\Lambda_k-\Lambda_{k-1}\mid\mathcal{F}_{k-1}]\ge 0$ for every $k\in[T]$.}

\begin{proof}
Fix $k\in[T]$ and let $j^\star\in[n]$ be the (random) index for which $e_k\in B_{j^\star}$. It suffices to show
\begin{equation}\label{eq:Mk-conditional-goal}
\kaliEXPECTED[\Lambda_k-\Lambda_{k-1}\mid\mathcal{F}_{k-1},\,j^\star=j]\ \ge\ 0
\end{equation}
for every $j$, since averaging over $j^\star$ then yields the unconditional submartingale property.

\paragraph{Increment decomposition.} Only $R_j^{k-1}$ shrinks at round $k$; the other remaining sets are unchanged. Hence the only term in $\Lambda_k$ that changes besides the credit $c_i(e_k)$ is the bundle potential of $B_j$, giving
\begin{equation}\label{eq:Mk-increment}
\Lambda_k-\Lambda_{k-1}\ =\ c_i(e_k)\ -\ \Delta P_k,
\qquad
\Delta P_k\ :=\ P(R_j^{k-1})-P(R_j^{k-1}\setminus\{e_k\}).
\end{equation}
Goal~\eqref{eq:Mk-conditional-goal} thus reduces to
\begin{equation}\label{eq:Mk-credit-vs-drop}
\kaliEXPECTED[c_i(e_k)\mid\mathcal{F}_{k-1},\,j^\star=j]\ \ge\ \kaliEXPECTED[\Delta P_k\mid\mathcal{F}_{k-1},\,j^\star=j].
\end{equation}

\paragraph{Drop side (RHS of~\eqref{eq:Mk-credit-vs-drop}).} Conditional on $\mathcal{F}_{k-1}$ and $\{j^\star=j\}$, the item $e_k$ is uniformly distributed on $R_j^{k-1}$. Therefore
\[
\kaliEXPECTED[\Delta P_k\mid\mathcal{F}_{k-1},\,j^\star=j]\ =\ \kaliEXPECTED_{e'\sim\mathbb{U}(R_j^{k-1})}\bigl[P(R_j^{k-1})-P(R_j^{k-1}\setminus\{e'\})\bigr]\ =\ {\frac{(v'_i(R_j^{k-1})-\alpha)_+}{|R_j^{k-1}|}},
\]
where the last equality is Claim~\ref{obs:P-recursion} (the recursion for $P$, rearranged).

\paragraph{Credit side (LHS of~\eqref{eq:Mk-credit-vs-drop}).} Let $A:=A_{e_k}^{<e_k}$ denote the contents of the recipient bundle just before $e_k$ is allocated.

\emph{Case 1: $v'_i(A)<\alpha$.} Then $c_i(e_k)=v'_i(e_k\mid A)$. Apply Claim~\ref{clm:marginal-beta} to $v'_i$ with the conditioning set $A$ (which satisfies $v'_i(A)<\alpha$) and source set $B:=R_j^{k-1}\subseteq B_j$, using $\beta:=v'_i(R_j^{k-1})$. Together with the uniformity of $e_k$ on $R_j^{k-1}$,
\[
\kaliEXPECTED[c_i(e_k)\mid\mathcal{F}_{k-1},\,j^\star=j]\ =\ \kaliEXPECTED_{e\sim\mathbb{U}(R_j^{k-1})}[v'_i(e\mid A)]\ \ge\ \frac{(v'_i(R_j^{k-1})-\alpha)_+}{|R_j^{k-1}|}.
\]

\emph{Case 2: $v'_i(A)\ge\alpha$.} Then $c_i(e_k)=v'_i(\{e_k\})$. By the uniformity of $e_k$ on $R_j^{k-1}$ and subset-additivity of $v'_i$ on $B_j$,
\[
\kaliEXPECTED[c_i(e_k)\mid\mathcal{F}_{k-1},\,j^\star=j]\ =\ \frac{1}{|R_j^{k-1}|}\sum_{e\in R_j^{k-1}} v'_i(\{e\})\ =\ \frac{v'_i(R_j^{k-1})}{|R_j^{k-1}|}\ {\ge\ \frac{(v'_i(R_j^{k-1})-\alpha)_+}{|R_j^{k-1}|}},
\]
{since $(x-\alpha)_+ \le x$ for every $x\ge 0$.}

In both cases, the LHS of~\eqref{eq:Mk-credit-vs-drop} is at least {$(v'_i(R_j^{k-1})-\alpha)_+/|R_j^{k-1}|$}, which equals the RHS. This proves~\eqref{eq:Mk-credit-vs-drop}, hence~\eqref{eq:Mk-conditional-goal}. \qedhere
\end{proof}

}

\subsection{Proof of Lemma \ref{lem:MMSoverAdditive}}\label{app:vprime}

{In our proof of Lemma \ref{lem:MMSoverAdditive} we make use of the following claim.}

\begin{claim}
\label{cl:decreaseSubmodular}
Let $v$ be a monotone submodular valuation function and consider any item $e\in\items$.
Then, for every $0 \le t {\le} v(\{e\})$, the following function $v'$ is a monotone submodular valuation function:
\begin{enumerate}[label=\textbf{\arabic*.}, leftmargin=2.2em]
    \item $v'(\{e\}) = t$.
    \item $v'(S) = v(S)$ for every $S \subseteq (\items \setminus \{e\})$.
    \item $v'(S \cup \{e\}) = \min\{\,v(S \cup \{e\}),\ v'(S) + t\,\}$ for every $S \subseteq (\items \setminus \{e\})$.
\end{enumerate}
Moreover, $v'(S)\le v(S)$ for every $S\subseteq \items$.
\end{claim}

\begin{proof}
\textbf{Pointwise domination.}
If $e\notin S$, then $v'(S)=v(S)$. If $e\in S$, write $S=T\cup\{e\}$ with $T\subseteq \items\setminus\{e\}$.
Then
\[
v'(S)=v'(T\cup\{e\})=\min\{v(T\cup\{e\}),\,v'(T)+t\}\le v(T\cup\{e\})=v(S).
\]

\medskip
\noindent\textbf{Monotonicity.}
Let $S\subseteq \items$ and $x\in \items\setminus S$. We prove $v'(S\cup\{x\})\ge v'(S)$.

If $e\notin S$ and $x\neq e$, then $v'(S)=v(S)$ and $v'(S\cup\{x\})=v(S\cup\{x\})\ge v(S)=v'(S)$.

If $x=e$ and $e\notin S$, then
\[
v'(S\cup\{e\})=\min\{v(S\cup\{e\}),\,v(S)+t\}\ge v(S),
\]
since both terms inside the minimum are at least $v'(S)$.

If $e\in S$ and $x\neq e$, write $S=T\cup\{e\}$ with $T\subseteq \items\setminus\{e\}$.
Then $S\cup\{x\}=(T\cup\{x\})\cup\{e\}$, and using monotonicity of $v$ and Item~2,
\[
v(T\cup\{x\}\cup\{e\})\ge v(T\cup\{e\}),\qquad v'(T\cup\{x\})=v(T\cup\{x\})\ge v(T)=v'(T).
\]
Therefore
\[
v'(S\cup\{x\})
=\min\{v(T\cup\{x\}\cup\{e\}),\,v'(T\cup\{x\})+t\}
\ge \min\{v(T\cup\{e\}),\,v'(T)+t\}=v'(S).
\]

\medskip
\noindent\textbf{Submodularity.}
We prove $v'(S)+v'(T)\ge v'(S\cup T)+v'(S\cap T)$ for all $S,T\subseteq\items$.

If $e\notin S\cup T$, then $v'=v$ on all four sets and the inequality holds since $v$ is submodular.

Assume $e\in S\cup T$. We handle the remaining cases.

\smallskip
\noindent\emph{Case 1: $e\in S$ and $e\notin T$.}
Write $S=S_0\cup\{e\}$ where $S_0\subseteq \items\setminus\{e\}$, and note $T\subseteq \items\setminus\{e\}$.
Then $v'(T)=v(T)$ and $v'(S\cap T)=v(S_0\cap T)$.
Also,
\[
v'(S)=\min\{v(S_0\cup\{e\}),\,v(S_0)+t\},\quad
v'(S\cup T)=\min\{v(S_0\cup T\cup\{e\}),\,v(S_0\cup T)+t\}.
\]
If $v'(S)=v(S_0\cup\{e\})$, then using $v'(S\cup T)=v'(S_0\cup T\cup\{e\})\le v(S_0\cup T\cup\{e\})$,
\[
v'(S)+v'(T)=v(S_0\cup\{e\})+v(T)\ge v(S_0\cup T\cup\{e\})+v(S_0\cap T)\ge v'(S\cup T)+v'(S\cap T),
\]
by submodularity of $v$.
If $v'(S)=v(S_0)+t$, then using $v'(S\cup T)\le v(S_0\cup T)+t$,
\[
v'(S)+v'(T)=v(S_0)+t+v(T)\ge v(S_0\cup T)+v(S_0\cap T)+t\ge v'(S\cup T)+v'(S\cap T),
\]
again by submodularity of $v$.

\smallskip
\noindent\emph{Case 2: $e\in S$ and $e\in T$.}
Write $S=S_0\cup\{e\}$ and $T=T_0\cup\{e\}$ where $S_0,T_0\subseteq \items\setminus\{e\}$.
By the definition of $v'$, we have
\[
v'(S)=\min\{\,v(S),\,v(S_0)+t\,\},\qquad
v'(T)=\min\{\,v(T),\,v(T_0)+t\,\}.
\]
Also,
\[
S\cup T=(S_0\cup T_0)\cup\{e\},\qquad S\cap T=(S_0\cap T_0)\cup\{e\},
\]
and therefore
\begin{align*}
v'(S\cup T)
&=\min\{\,v(S\cup T),\,v(S_0\cup T_0)+t\,\},\\
v'(S\cap T)
&=\min\{\,v(S\cap T),\,v(S_0\cap T_0)+t\,\}.
\end{align*}

We prove $v'(S)+v'(T)\ge v'(S\cup T)+v'(S\cap T)$ by a case analysis according to which term
attains each minimum.

\smallskip
\noindent\underline{Subcase 2.1:} $v'(S)=v(S)$ and $v'(T)=v(T)$.
Then $v'(S\cup T)\le v(S\cup T)$ and $v'(S\cap T)\le v(S\cap T)$, hence
\[
v'(S)+v'(T)=v(S)+v(T)\ge v(S\cup T)+v(S\cap T)\ge v'(S\cup T)+v'(S\cap T),
\]
where the first inequality is submodularity of $v$.

\smallskip
\noindent\underline{Subcase 2.2:} $v'(S)=v(S)$ and $v'(T)=v(T_0)+t$.
Since $T=T_0\cup\{e\}$, we also have $S\cup T=S\cup T_0$ and $S\cap T=(S\cap T_0)\cup\{e\}$.
Moreover, $v'(S\cup T)\le v(S\cup T)$ and $v'(S\cap T)\le v(S\cap T_0)+t$.
Therefore,
\begin{align*}
v'(S)+v'(T)
&=v(S)+v(T_0)+t\\
&\ge v(S\cup T_0)+v(S\cap T_0)+t \\
&= v(S\cup T)+\bigl(v(S\cap T_0)+t\bigr)\\
&\ge v'(S\cup T)+v'(S\cap T),
\end{align*}
where the inequality is submodularity of $v$ applied to the pair $(S,T_0)$.

\smallskip
\noindent\underline{Subcase 2.3:} $v'(S)=v(S_0)+t$ and $v'(T)=v(T)$.
This is symmetric to Subcase 2.2 (swap the roles of $S$ and $T$).

\smallskip
\noindent\underline{Subcase 2.4:} $v'(S)=v(S_0)+t$ and $v'(T)=v(T_0)+t$.
In this subcase,
\[
v'(S)+v'(T)=v(S_0)+v(T_0)+2t.
\]
Also, $v'(S\cup T)\le v(S_0\cup T_0)+t$ and $v'(S\cap T)\le v(S_0\cap T_0)+t$.
Thus,
\begin{align*}
v'(S\cup T)+v'(S\cap T)
&\le \bigl(v(S_0\cup T_0)+t\bigr)+\bigl(v(S_0\cap T_0)+t\bigr)\\
&= v(S_0\cup T_0)+v(S_0\cap T_0)+2t\\
&\le v(S_0)+v(T_0)+2t\\
&= v'(S)+v'(T),
\end{align*}
where the second inequality is submodularity of $v$ applied to $(S_0,T_0)$.

This completes the case analysis for $e\in S$ and $e\in T$, and therefore establishes submodularity of $v'$.

Thus $v'$ is submodular in all cases, completing the proof.
\end{proof}

\subsubsection{Additivizing disjoint bundles while preserving submodularity}

\paragraph{Lemma \ref{lem:MMSoverAdditive}}\label{par:MMSoverAdditive}
Let $v$ be a monotone submodular valuation function and let $B_1,\ldots,B_n$ be disjoint subsets of $\items$.
Then there exists a monotone submodular function $v'$ with the following properties:
\begin{itemize}
    \item $v'(S)\le v(S)$ for every $S\subseteq \items$.
    \item For each $j\in[n]$,
    \[
    v'(B_j)=v(B_j)
    \qquad\text{and}\qquad
    \sum_{e\in B_j} v'(\{e\}) = v'(B_j).
    \]
  {(Note: submodularity of $v'$ together with $\sum_{e\in B_j} v'(\{e\}) = v'(B_j)$ implies that $v'$ is additive on every subset of each $B_j$: for every $S\subseteq B_j$, $v'(S)=\sum_{e\in S} v'(\{e\})$.)}
\end{itemize}

\begin{proof}
Fix a bundle index $j$ and fix an \emph{internal order} of the items in $B_j$:
\[
B_j=\{e^j_1,\ldots,e^j_{k_j}\}.
\]
Define the chain prefixes $P^j_r:=\{e^j_1,\ldots,e^j_r\}$ (with $P^j_0=\emptyset$) and define additive weights
\[
w(e^j_r)\ :=\ v(P^j_r)-v(P^j_{r-1}) \ =\ v(e^j_r\mid P^j_{r-1}).
\]
By telescoping,
\[
\sum_{r=1}^{k_j} w(e^j_r) \ =\ v(P^j_{k_j})-v(P^j_0)\ =\ v(B_j).
\]

\medskip
\noindent\textbf{Step 1: $w$ lower-bounds $v$ on subsets of $B_j$.}

{By Claim~\ref{clm:chain-marginal-lower-bound}, applied with $B:=B_j$ under the internal order $e^j_1,\ldots,e^j_{k_j}$, the chain-order marginals $w(e^j_r)=v(e^j_r\mid P^j_{r-1})$ satisfy
\[
\sum_{e\in S} w(e)\ \le\ v(S)\qquad\text{for every }S\subseteq B_j,
\]
with equality on $S=B_j$ .}

\medskip
\noindent\textbf{Step 2: Build $v'$ by iterated truncations.}
Start with $v^{(0)}:=v$.
We will process items in a global sequence that respects the internal order inside each bundle:
process $e^1_1,e^1_2,\ldots,e^1_{k_1}$, then $e^2_1,\ldots,e^2_{k_2}$, and so on.

Suppose we are processing an item $e=e^j_r$.
Let the current function be $v^{(\ell)}$.
We apply Claim~\ref{cl:decreaseSubmodular} to $v^{(\ell)}$ and $e$ with parameter $t:=w(e)$, obtaining $v^{(\ell+1)}$.
By Claim~\ref{cl:decreaseSubmodular}, each step preserves {monotonicity}, submodularity and ensures pointwise domination:
\[
v^{(\ell+1)}(S)\le v^{(\ell)}(S)\le v(S)\qquad\forall S.
\]
Moreover, this truncation forces the \emph{maximum possible marginal} of $e$ into any set to be at most $t=w(e)$,
while leaving all sets not containing $e$ unchanged.

Because the items in different bundles are disjoint, truncating an item in $B_j$ never changes values of sets contained in $B_{j'}$
for $j'\neq j$ (those sets do not contain $e$). Hence we may analyze each bundle separately.

\medskip
\noindent\textbf{Step 3: {The final $v'$ satisfies $v'(B_j)=v(B_j)$ on every bundle $B_j$, and its restriction to each $B_j$ is additive with $v'(\{e\})=w(e)$.}}

Fix $j$ and consider the moment immediately after processing all items in $B_j$ in the iterated truncations.
We claim that for every $r\in\{0,1,\ldots,k_j\}$, the length-$r$ prefix $P^j_r=\{e^j_1,\ldots,e^j_r\}$ of $B_j$ (in the fixed internal order), at every iteration of the process after the truncation of item $e^j_r$, satisfies
\[
v'(P^j_r)\ =\ \sum_{u=1}^{r} w(e^j_u),
\]
and in particular $v'(B_j)=\sum_{u=1}^{k_j} w(e^j_u)=v(B_j)$.

We prove this by induction on $r$.
For $r=0$, $v'(\emptyset)=0$.
Assume it holds for $r-1$.
When processing $e^j_r$, we apply Claim~\ref{cl:decreaseSubmodular} with $t=w(e^j_r)$.
At that moment, the set $P^j_{r-1}$ contains no unprocessed items from $B_j$ and does not contain $e^j_r$. 
Therefore the new value of the prefix including $e^j_r$ is
\[
v'(P^j_r)=v'(P^j_{r-1}\cup\{e^j_r\})=\min\{\,v(\,P^j_r\,),\ v'(P^j_{r-1})+w(e^j_r)\,\}.
\]
By the induction hypothesis, $v'(P^j_{r-1})=\sum_{u=1}^{r-1} w(e^j_u)$, so the second term is
$\sum_{u=1}^{r} w(e^j_u)=v(P^j_r)$ by the definition of $w$ (telescoping along the chain).
Hence the two arguments of the minimum are equal, and we get
\[
v'(P^j_r)=v'(P^j_{r-1})+w(e^j_r)=\sum_{u=1}^{r} w(e^j_u).
\]
{By Claim \ref{cl:decreaseSubmodular}, this value remains the same for all the subsequent truncations.} This completes the induction and proves $v'(B_j)=v(B_j)$.

Finally, because the construction sets $v'(\{e^j_r\})=w(e^j_r)$ at the moment $e^j_r$ is processed, {and this value is preserved by every subsequent truncation: each later step processes some other item $e'\ne e^j_r$, and by Claim~\ref{cl:decreaseSubmodular} such a truncation leaves the value of every set not containing $e'$ unchanged. In particular, of $\{e^j_r\}$, so $v'(\{e^j_r\})=w(e^j_r)$ remains true at the end of the construction.} Therefore, we have
\[
\sum_{e\in B_j} v'(\{e\}) = \sum_{r=1}^{k_j} w(e^j_r) = v'(B_j).
\]
\end{proof}

}

\usepackage{booktabs} \usepackage[ruled]{algorithm2e} 
\SetAlFnt{\small}
\SetAlCapFnt{\small}
\SetAlCapNameFnt{\small}
\SetAlCapHSkip{0pt}
\IncMargin{-\parindent}

\usepackage{amsmath,amssymb,amsthm,mathtools}
\usepackage{pdfpages} \usepackage{hyperref}
\usepackage{enumitem}
\usepackage{array}

\newcommand{\kaliEXPECTED}{\mathbb{E}}

\newcommand{\Prb}{\mathbb{P}}

\newcommand{\items}{\mathcal{M}}
\newcommand{\agents}{\mathcal{N}}
\newcommand{\bundles}{\mathcal{B}}
\newcommand{\allocs}{\mathcal{A}} 
\theoremstyle{plain}
\newtheorem{theorem}{Theorem}[section]
\newtheorem{lemma}[theorem]{Lemma}
\newtheorem{proposition}[theorem]{Proposition}
\newtheorem{claim}[theorem]{Claim}
\newtheorem{observation}[theorem]{Observation}
\newtheorem{definition}[theorem]{Definition}
\newtheorem{example}[theorem]{Example}

\newtheorem{corollary}[theorem]{Corollary}
\newtheorem{assumption}[theorem]{Assumption}

\theoremstyle{remark}
\newtheorem{remark}[theorem]{Remark}

\begin{document}

\maketitle
  \begin{abstract}There are two common classes of fairness notions that are considered when allocating $m$ indivisible items to $n$ agents of equal entitlements. One is that of share-based fairness notions, with the maximin share (MMS) and its relaxations to $\rho$-MMS being prominent representatives of this class. The other is that of comparison-based fairness notions, with envy-freeness (EF) and its relaxations such as EF1 being prominent representatives of this class. In general, no class offers good guarantees for the other class. In this work, we design allocations that simultaneously satisfy notions from both classes, and specifically, are $\rho$-MMS for constant $\rho$ and EF1 (in fact, also EFL). Such results were previously known when agents have additive valuations, and we prove such results for the more general class of submodular valuations.
\end{abstract}
  \newpage

\section{Introduction} \label{sec:introduction}

We consider the well-studied problem of fair allocation of a set $\items$ of $m$ indivisible items to $n$ agents of equal entitlement. Each agent $i$ has her own valuation function $v_i$ over subsets of items, where we assume that valuations are normalized ($v_i(\emptyset) = 0$) and monotone ($v(S) \le v(T)$ for all $S \subset T$). An allocation $A = A_1, \ldots, A_n$ is a partition of $\items$ into $n$ sets, where for every $i$, agent $i$ gets set $A_i$. Given the valuations of the agents, we desire to output an allocation that satisfies some fairness criteria.

There are two common principles {by which fairness} of allocations is judged. One is {\em share based}, and the other is {\em comparison based}. In both cases, each agent can individually judge whether an allocation is fair for her, and if all agents find the allocation fair, then we say that the allocation is acceptable. 

In share-based notions, an agent $i$ fixes a share value $s_i$ based on her valuation function (and the number $n$ of agents), and judges an allocation to be fair if the value of the bundle $A_i$ that she receives satisfies $v_i(A_i) \ge s_i$. The share notion that we consider in this paper is the maximin share (MMS), and its relaxations.

\begin{definition}
\label{def:mms}
For an agent $i$, the \emph{maximin share} (MMS) is

\[
MMS(\items, v_i, n) :=\ \max_{\text{partitions }(B_1,\dots,B_n)\text{ of }\items}\ \min_{j\in[n]} v_i(B_j).
\]

A partition $(B_1,\dots,B_n)$ attaining the maximum is called an \emph{MMS partition} of agent $i$.
In addition, for $\rho \in (0,1)$, we use $\rho$-MMS to denote the share whose value is a $\rho$ fraction of the value of the MMS.  
\end{definition}

In comparison-based notions, an agent evaluates the fairness of an allocation by comparing the bundle that she receives with the bundles received by other agents. In this paper we consider envy-freeness and some of its relaxations.

\begin{definition}[Envy‐Free (EF)]
Allocation \(A\) is EF if for every \(i,j\):
\[
v_i(A_i)\;\ge\;v_i(A_j).
\]
Allocation \(A\) is EF1 if for every \(i,j\), either $A_j=\emptyset$, or there exists an item \(g\in A_j\) such that
\[
v_i(A_i)\;\ge\;v_i(A_j\setminus\{g\}).
\]

Allocation \(A\) is EFL if for every \(i,j\) one of the following holds:
\begin{itemize}
    \item $|A_j| \leq 1$   
    \item There exists an item \(g\in A_j\) such that
        \[
            v_i(A_i)\;\ge\;v_i(A_j\setminus\{g\}).
        \]
        and in addition $v_i(g)\leq v_i(A_i)$.
\end{itemize}
Note that every allocation that is EFL is also EF1.
\end{definition}

Both share-based and comparison-based notions are well accepted fairness notions, and neither one of them implies the other (except in special cases). Our goal in this paper is to prove the existence of allocations that simultaneously satisfy both notions. 
Our main result is the following.

\begin{theorem}
    \label{thm:EFLMMSsubmodular}
    There is some universal constant $\rho > 0$ such that every allocation instance in which valuation functions are submodular has an allocation that is simultaneously EFL and $\rho$-MMS.
\end{theorem}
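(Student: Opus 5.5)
We will prove Theorem~\ref{thm:EFLMMSsubmodular} with a randomized instantiation of the envy-cycle-elimination (ECE) framework, which we call RECE1. EFL will hold deterministically, and a constant fraction of MMS will hold with positive probability for every agent simultaneously. We normalize so that $MMS(\items,v_i,n)=1$ for each agent $i$. We also discard any agent whose valuation makes the bound trivial.

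The algorithm has two phases. In the \emph{greedy phase}, every agent in turn picks her single most valuable item $e_i$, possibly after a reduction that removes agents who value some single item at a constant fraction of their MMS. In the \emph{random phase}, the remaining items are processed in uniformly random order. Each item is given to a source of the envy graph, and envy cycles are rotated after each step.

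\textbf{Fairness.} By Proposition~\ref{prp:ECE_EF1} the output is EF1. To get EFL, let $g$ be the last item added to $A_j$. At that moment the recipient had no incoming envy edges, so $v_i(A_i)\ge v_i(A_j\setminus\{g\})$. Since $g$ was still unallocated during the greedy phase, the greedy choice gives $v_i(g)\le v_i(e_i)\le v_i(A_i)$.

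\textbf{Value guarantee.} Fix an agent $i$, her MMS partition $B_1,\dots,B_n$, and a threshold $\alpha$. The argument proceeds in three steps.
\begin{enumerate}
\item Replace $v_i$ by the additivized $v'_i\le v_i$ of Lemma~\ref{lem:MMSoverAdditive}. It is additive on each $B_j$ and satisfies $v'_i(B_j)=1$.
\item Define the ``welfare'' $W_i=\sum_e v_i(e\mid A_e^{<e})$, which agent $i$ measures on the bundle that receives each item. Since bundles move only via rotations, $W_i=\sum_k v_i(A_k)$. Moreover, every recipient is a sink that $i$ does not envy, so each term $v_i(A_e^{<e})$ is at most the final $v_i(A_i)$. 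Hence if $v_i(A_i)<\alpha$ at the end, every item was added to a bundle that $i$ values below $\alpha$.
\item Apply Claim~\ref{clm:marginal-beta} to a uniform item of the remaining part of $B_j$. A random item then has expected marginal at least $(\tau/k'-\alpha/r)_+$. Summing over rounds with Lemma~\ref{lem:g-sum-bound} gives a contribution of at least $g(\alpha)$ per bundle, as in Proposition~\ref{prop:EW-true}. Hence $\kaliEXPECTED[W_i]\ge n\,g(\alpha)$ whenever $v_i(A_i)<\alpha$.
\end{enumerate}

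\textbf{Concluding from the welfare bound.} We then relate $W_i$ to $v_i(A_i)$. Either no bundle is envied by $i$, in which case $W_i\le n\,v_i(A_i)$. Or, in the EF1 style, each bundle $A_k$ is worth at most $v_i(A_i)$ plus one item, and each item is worth at most $v_i(e_i)$ by the greedy phase. For small constant $\alpha$ (so that $g(\alpha)$ is bounded away from the large-item contribution), this forces $v_i(A_i)\ge\rho$ for a constant $\rho$.

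\textbf{Main obstacle.} An expectation bound for each agent separately does not yield one allocation good for all agents. We need a high-probability bound so that a union bound over the $n$ agents works. We would use the potential $P(R)$ over the remaining parts $R_j$ of the MMS bundles, together with truncated credits $c_i(e)$. The process $\Lambda_k=\sum_{t\le k}c_i(e_t)+\sum_j P(R_j^k)$ is a submartingale by Proposition~\ref{prop:Mk-submart}. The drop lemma (Lemma~\ref{lem:drop}) bounds each increment by $v'_i(\{e_k\})$, which is small after the large-item reduction. Azuma/Freedman-type concentration then gives failure probability $e^{-\Omega(n)}$ per agent, which is enough for a union bound when $n$ is large. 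Small $n$ would need a separate argument, for example by treating large items as reductions. Controlling the increments and handling the regime where some items are large is where we expect most of the work to lie.
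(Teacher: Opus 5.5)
Your framework matches the paper's: RECE1, the EFL argument, $v'_i$, the truncated credits $c_i$, the potential $P$ and the submartingale $\Lambda_k$. But there is a genuine gap where the simultaneous guarantee is decided, namely large items and the union bound.

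\textbf{Large items.} Your claim that every item is worth at most $v_i(e_i)$ to $i$ ``by the greedy phase'' is false for items picked \emph{before} $i$'s greedy turn. Such items can be $\alpha$-heavy for $i$. While $i$ envies them they remain singleton bundles, which EFL does not constrain, and they can contribute $\Theta(n)$ to $W_i$. So $W_i\le(2n-1)v_i(A_i)$, or anything like it, fails, and shrinking $\alpha$ does not repair it. The paper's fix has two parts. If a heavy-item bundle ever grows, then $v_i(A_i)\ge\alpha$ already, since the recipient was unenvied. Otherwise, discard the heavy singletons and the MMS bundles they hit, and run the $W'_i$/Freedman argument only over the $n_i$ light-items bundles. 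For an effective agent, every random-phase item is automatically light, so bounding increments needs no reduction.

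\textbf{Union bound.} After this restriction, the per-agent failure probability is about $1/n_i$, not $e^{-\Omega(n)}$. Since $n_i$ depends on $i$ and can be small even when $n$ is huge, ``union bound over all $n$ agents for large $n$, separate argument for small $n$'' does not cover large $n$ with some small $n_i$. The missing idea is the paper's threshold argument:
\begin{itemize}
\item Sort effective agents so that $n_1\ge\cdots\ge n_u$ and let $\ell=\max\{k: k\le n_k\}$.
\item Union-bound only over the first $\ell$ agents. Each fails with probability $<1/n_k\le 1/\ell$. Those with $n_k<8$ succeed deterministically via Proposition~\ref{prop:EF1_is1_over_n} on the reduced instance together with Claim~\ref{clm:not_too_small_mms}.
\item Handle every $j>\ell$ deterministically. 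Since $n_j\le\ell$, some $p\le\ell$ holds a bundle whose greedy item $e$ is heavy for $j$. As $p$ succeeded but never held an item heavy for herself, $|A_p|\ge 2$. So that bundle received a random-phase item while unenvied, giving $v_j(A_j)\ge v_j(e)\ge\alpha$.
\end{itemize}

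\textbf{Your reduction.} If it really removes agents from the envy-cycle process, it breaks the other half of the theorem. A removed agent holding one item is no longer protected by the envy graph, and bundles that keep growing afterwards can exceed her value by far more than one item, so even EF1 fails for her. It can be repaired by letting agents who value some still-available item at $\ge\alpha$ pick first in the greedy phase while staying in the envy graph; this is RECE1 with a particular order. Then every unreduced agent has $n_i\ge n'$, the number of unreduced agents, and a single union bound over those $n'$ agents suffices. That is a legitimate and simpler alternative to the $\ell$-argument. It uses knowledge of the MMS values, however, and it is not what you wrote. You would still have to prove the per-agent bound in terms of $n_i$ rather than $n$.

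\textbf{Two smaller points.}
\begin{itemize}
\item Azuma measures deviations against the number of rounds $T=m-n$, which can far exceed $n$. You need Freedman together with the deterministic bound $\sum_k(\Lambda_k-\Lambda_{k-1})^2\le\alpha\sum_e v'_i(\{e\})=\alpha n$. That bound requires capping $v_i$ at $1$ before applying Lemma~\ref{lem:MMSoverAdditive}; the lemma alone gives $v'_i(B_j)=v_i(B_j)$, possibly much larger than $1$.
\item ``$\mathbb{E}[W_i]\ge n\,g(\alpha)$ whenever $v_i(A_i)<\alpha$'' conditions an expectation on an outcome event, which destroys the uniform-order argument. The truncated proxy $W'_i$ exists precisely to make that bound unconditional.
\end{itemize}
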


{In our proof of Theorem~\ref{thm:EFLMMSsubmodular}, $\rho \ge 0.142$.} Our proof of Theorem~\ref{thm:EFLMMSsubmodular} also provides a randomized algorithm that with high probability produces an allocation as guaranteed by the theorem {(with a slightly worse value of $\rho$)}. The algorithm runs in polynomial time, and needs only {comparison} query access to the valuations.
{(A query specifies two disjoint sets of items, and the agent provides a one-bit response, specifying whether she values the first set at least as much as the second.)}

\subsection{Some related work}

There is a hierarchy of valuation classes introduced in~\cite{journals/geb/LehmannLN06}. Prominent classes in this hierarchy in increasing level of generality are additive, submodular, XOS and subadditive.
{(Here and elsewhere we use the following notation. For an item $e\in \items$ and valuation $v$, we write $v(e)$ instead of $v(\{e\})$. For a set $S\subseteq \items$, the \emph{marginal value} of $e$ given $S$ is $v(e\mid S)\ :=\ v(S\cup\{e\})-v(S)$.)}

\begin{definition}[Valuation classes]\label{def:valuation-classes}
Let $v:2^{\items}\to \mathbb{R}_{\ge 0}$ be a (normalized) monotone valuation.
\begin{enumerate}
    \item \textbf{Additive.} $v$ is \emph{additive} if for every $S\subseteq \items$,
    \[
        v(S)=\sum_{e\in S} v(e).
    \]

    \item \textbf{Submodular.} $v$ is \emph{submodular} if it satisfies diminishing marginal {values}: for all $S\subseteq T\subseteq \items$ and all $e\in \items\setminus T$,
    \[
        {v(e\mid S)\ \ge\ v(e\mid T).}
    \]

    \item \textbf{XOS (a.k.a. fractionally subadditive).} $v$ is \emph{XOS} if there exist additive valuations
    $\{v^r\}_{r\in R}$ such that for every $S\subseteq \items$,
    \[
        v(S)=\max_{r\in R} v^r(S)
    \]

    \item \textbf{Subadditive.} $v$ is \emph{subadditive} if for all $S,T\subseteq \items$,
    \[
        v(S\cup T)\ \le\ v(S)+v(T).
    \]
\end{enumerate}
\end{definition}

It {turns out} that even for additive valuations, MMS allocations need not exist{~\cite{journals/jacm/KurokawaPW18}}. This motivates considering $\rho$-MMS allocations. Known lower and upper bounds on the values of $\rho$ for the four classes of interest are presented in Table~\ref{tab:MMS}.

\begin{table}[hbt!]
        \centering
\renewcommand{\arraystretch}{1.4}\begin{tabular}{|l||l|l|}
 \hline
 \multicolumn{3}{|c|}{Feasible $\rho$ for $\rho$-MMS} \\
 \hline
  & lower bound & upper bound  \\
 \hline\hline
 Additive valuations    & $\frac{7}{9}$ \; \; \cite{journals/corr/abs-2511-13056}                  & $\frac{39}{40}$ \; \; \cite{conf/wine/FeigeST21} \\
 \hline
 Submodular valuations  & $\frac{10}{27}$ \; \; \cite{journals/corr/abs-2303-12444}                & $\frac{2}{3}$ \; \; \cite{conf/ijcai/0001CMS25} \\
 \hline
 XOS valuations         & $\frac{4}{17}$ \; \; \cite{conf/sigecom/FeigeG25}                        & $\frac{1}{2}$ \; \; \cite{journals/ai/GhodsiHSSY22} \\
 \hline
 Subadditive valuations & $\Omega\!\left(\frac{1}{\log\log n}\right)$ \; \; \cite{journals/corr/abs-2506-21493} & $\frac{1}{2}$ \; \; \cite{journals/ai/GhodsiHSSY22} \\
 \hline
\end{tabular}
        \caption{Known lower and upper bounds on $\rho$ for the existence of $\rho$-MMS allocations. Additional results are surveyed in Section~\ref{sec:related}.}
        \label{tab:MMS}
\end{table}

EF allocations clearly need not exist (e.g., when the number of items is smaller than the number of agents). However, EF1 allocations exist for all monotone valuations. This was shown in~\cite{conf/sigecom/LiptonMMS04} by an algorithm that we refer to as {\em envy cycle elimination} (ECE). This algorithm proceeds in rounds, where in every round, one agent gets an item. The agent that gets an item is one that no other agent envies, and the existence of such an agent can be enforced by a procedure referred to as {\em cycle rotation}. See more details in Section~\ref{sec:ECE}.

The ECE algorithm {comes in} several versions that differ in how the next item to be allocated is determined. In all of them, the EF1 guarantee holds, but they differ with respect to other guarantees that they may offer. In the worst case, the item is determined by an adversary, and we refer to this version as {WECE} {(W for worst)}. A more useful version is one in which for every agent, the first item that she receives is the one of highest value for her among the remaining items. We refer to this as a {\em greedy} first choice, and use a suffix of~1 to denote the use of this version. For example, WECE1 denotes the version {in which each agent} makes a greedy first choice, but other than that, the order of items is adversarial. WECE1 has the advantage over WECE in that it produces allocations that are not only EF1, but also EFL.

For additive valuations, the difference between EF1 and EFL is significant if one wished to also achieve an MMS approximation. An EF1 allocation might only be $\frac{1}{n}$-MMS, as this well known example illustrates.

\begin{example}
    \label{ex:1}
    Suppose that all agents have the same additive valuation, with $n-1$ large items, each of value $n$, and $n$ small items, each of value~1. Then, the MMS is $n$ ($n-1$ bundles each have one large item, and all small items are in the remaining bundle), whereas there is an EF1 allocation in which one bundle has value of only 1 ($n-1$ bundles each contain one small and one large item, and the remaining bundle contains one small item). Moreover, WECE will produce this EF1 allocation (under the order in which all small items precede all large ones).
\end{example}

For additive valuations, EFL implies $\frac{n}{2n-1}$-MMS {($\frac{1}{2}$-MMS is claimed in~\cite{conf/aaai/BarmanBMN18}, and the proof there shows the stronger implication of $\frac{n}{2n-1}$-MMS)}.
We may consider also a version of ECE, that we refer to as GECE (G for greedy) in which in every {round (and not just in the first round)} agents are greedy. A related version is known to offer a somewhat better guarantee of {$\frac{4}{7}$-MMS {\cite{journals/tcs/AmanatidisMN20}.}} Moreover, there are other allocation procedures {\cite{conf/aaai/AkramiR25a, journals/corr/abs-2505-19961}} that produce allocations that are EFL and $\frac{2}{3}$-MMS. 

However, beyond additive valuations, we are not aware of any significant positive results. This is true already for submodular valuations. It is no longer true that EFL allocations imply a constant approximation to MMS. In fact, neither do EF allocations,

\begin{example}
    \label{ex:2}
    There are $n^2$ items arranged in $n$ rows, each with $n$ items. The first item in the first row is referred to as $e_1$. For $v_1$, the value of a set $S$ equals the number of rows that $S$ intersects, plus a bonus of $\epsilon$ if $e_1 \in S$. This $v_1$ is submodular and satisfies $MMS(\items, v_1, n) = n$. For each $i \ge 2$, $v_i$ is additive, with each item in row $i$ having value~1, $e_1$ having value $n {- \epsilon}$, and other items have negligible value. An allocation in which each agent $i$ gets all items in row $i$ is envy free, but agent~1 gets a value of only $1 + \epsilon$.
\end{example}

GECE can be adapted to submodular valuations, by giving the agent the item that adds the highest marginal to the bundle that she holds at the time. However, this adaptation fails to give better than $\frac{1 + \epsilon}{n}$-MMS in the above example, if the first agent to receive an item is agent~1 (who will pick $e_1$). 
Also, other known approaches that give EF1 allocations do not give constant approximations for the MMS when valuations are submodular. For example, for two share notions, MXS and RMMS, it is known that they can be achieved simultaneously with EF1 (and also EFL), see~\cite{conf/aaai/AkramiR25, journals/corr/abs-2505-19961}, but their value might be as small as $\frac{1}{n}$-MMS when valuations are submodular. We remark that also allocations that maximize Nash Social Welfare might not give a constant approximation to the MMS, not even when valuations are additive, a case in which they are known to imply EF1~\cite{journals/teco/CaragiannisKMPS19}.

The strongest previous result that we are aware of (in terms of combining EF1 and MMS approximation for valuations beyond additive) is simultaneous EFL and $\frac{1}{n}$-MMS, which holds for subadditive valuations. In fact, this is an immediate consequence of the existence of EFL allocations, and the fact that every EF1 allocation is at least $\frac{1}{n}$-MMS (for subadditive valuations). 

\begin{proposition} \label{prop:EF1_is1_over_n}
    \label{pro:EF1subadditive}
    For subadditive valuations, every EF1 allocation offers every agent at least $\frac{1}{n}$-MMS.
\end{proposition}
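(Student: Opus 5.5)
Fix an agent $i$ and an EF1 allocation $A=(A_1,\dots,A_n)$. Let $(B_1,\dots,B_n)$ be an MMS partition of $i$ and write $\mu:=MMS(\items,v_i,n)$, so $v_i(B_j)\ge\mu$ for every $j$. The goal is $v_i(A_i)\ge \mu/n$. The idea is to cover the whole item set $\items$ by few sets, each worth at most $v_i(A_i)$, and then use subadditivity to compare with $v_i(\items)\ge n\mu$.

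\textbf{Constructing the cover.} For each $j\ne i$ with $A_j\neq\emptyset$, EF1 gives an item $g_j\in A_j$ with $v_i(A_j\setminus\{g_j\})\le v_i(A_i)$. Then
\[
\items \;=\; A_i \,\cup\, \bigcup_{j\ne i}(A_j\setminus\{g_j\}) \,\cup\, G,
\qquad G:=\{g_j : j\ne i\},
\]
where empty $A_j$ contribute nothing. The first two groups consist of at most $n$ sets, each of value at most $v_i(A_i)$.

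\textbf{Handling the removed items $G$.} This is the step that needs care, since $|G|\le n-1$ and $v_i(g_j)$ may be large. The fix is to discard whole MMS bundles rather than bound $v_i(G)$. The items of $G$ meet at most $n-1$ of the bundles $B_1,\dots,B_n$, so some bundle $B_{j^*}$ is disjoint from $G$. Then $B_{j^*}\subseteq \items\setminus G$, and by monotonicity and subadditivity
\[
\mu\;\le\; v_i(B_{j^*}) \;\le\; v_i(\items\setminus G)\;\le\; v_i(A_i)+\sum_{j\ne i} v_i(A_j\setminus\{g_j\})\;\le\; n\, v_i(A_i).
\]
Hence $v_i(A_i)\ge \mu/n$.

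The only points to check are that subadditivity extends by induction to finite unions, that monotonicity gives $v_i(B_{j^*})\le v_i(\items\setminus G)$, and that the degenerate cases (empty bundles, or $|G|<n-1$) only make the bound easier. No earlier result of the paper is needed beyond the definitions.
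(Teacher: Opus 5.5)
Your proof is correct and is essentially the paper's argument: pick the EF1 witness items $g_j$, note that some MMS bundle avoids all of them, and use monotonicity and subadditivity to bound that bundle's value by $v_i(A_i)+\sum_{j\ne i}v_i(A_j\setminus\{g_j\})\le n\,v_i(A_i)$. One small slip is in your opening heuristic: $v_i(\mathcal{M})\ge n\mu$ can fail for subadditive (even submodular) valuations, e.g.\ $v(S)=\min\{|S|,1\}$, but your actual chain of inequalities compares only against the single bundle $B_{j^*}$, so the proof is unaffected.
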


{As we did not find a reference to this proposition, we present its simple proof for completeness.}

\begin{proof}
    Consider agent $i$ in an EF1 allocation $A = A_1, \ldots, A_n$. Then for every bundle $A_j$ (for $j \not= i$) there is an item $e_j \in A_j$ such that $v_i(A_i) \ge v_i(A_j \setminus \{e_j\})$. In the MMS partition for agent $i$ there must be a bundle not containing any of these items $e_j$ (as there are only $n-1$ such items and $n$ parts), and even if this bundle contains all remaining items, the MMS value  is (by subadditivity) at most $v_i(A_i) + \sum_{j \not= i} v_i(A_j \setminus \{e_j\}) \le n\cdot v_i(A_i)$.
\end{proof}

\subsection{Overview of our proof}

Our proof of Theorem~\ref{thm:EFLMMSsubmodular} is based on an approach that is natural to try, but seems difficult to implement. The idea is to use the ECE algorithm, that guarantees an EF1 allocation, but choose the order of items in such a way that in addition it will provide a $\rho$-MMS guarantee. We refer to the choice of order that offers the best MMS guarantee as the best choice, and hence refer to this version of ECE as BECE (B for best). 

Of course, we do not insist on finding the absolute best order for the items, but rather are satisfied with finding a ``good" order, as long as this order  provides an $\Omega(1)$-MMS guarantee. But even coming up with a good order, let alone analyzing the MMS approximation that it offers, appears elusive.

On the one hand, the order must depend on the valuation functions of the agents. The other alternative, of having a so called oblivious order over the items, provably fails, even if this order is randomized. For example, a uniformly random order over the items {(a version that we refer to as RECE) will} with high probability produce allocations that are only $O(\frac{\log n}{n})$-MMS, on the instance of Example~\ref{ex:1}.

On the other hand, natural approaches to make the order depend on the valuations (such as the use of GECE) also fail, as exemplified by Example~\ref{ex:2} and variations of it.

Our approach is to use a version of ECE that we refer to as RECE1. Recall that the suffix of~1 indicates that the first item that each agent receives is greedy, and the prefix of R indicates that all other items are allocated in a random order (in each round, to the agent dictated by the ECE algorithm). Observe that neither Example~\ref{ex:1} nor Example~\ref{ex:2} serve as a negative example for RECE1. The first greedy step handles Example~\ref{ex:1}, whereas for Example~\ref{ex:2} to be a negative example, typical agents should get almost all their items from the same row, a highly unlikely event when the order of the items is random.

How do we lower bound $\rho$ for the $\rho$-MMS guarantee of RECE1? 
As RECE1 is a randomized algorithm (there is a random order over the items), we should settle for a lower bound that holds with positive probability. This suffices in order to imply the same $\rho$ for BECE with certainty. Our proof will in fact prove lower bounds that hold not just with positive probability, but also with high probability, implying that if RECE1 is used in practice, it is likely to produce a $\rho$-MMS allocation (and surely produces an EFL allocation). 

At a high level, our proof uses a standard methodology in probabilistic analysis. It lower bounds for every $i$ the expected fraction of the MMS that agent $i$ receives, it shows strong concentration results around the expectation, and then uses a union bound to infer that with positive probability, all agents simultaneously receive value not much lower than the expectation. However, implementing this methodology in our case raises some challenges. Below, we state the two main lemmas that we prove, discuss some challenges in proving them, and then explain the key ideas that allow us to overcome these challenges. 

The first lemma provides a lower bound on the expected value that an agent receives. 

\begin{lemma} \label{lemma:exp_gurantee}
    \label{lem:expectation}
    {Choose {$\alpha = 0.2219$}}.  For every agent holding a submodular valuation $i$, the expected value received by agent $i$ in the RECE1 algorithm is at least $\alpha \cdot MMS(\items, v_i, n)$. (Note that we do not require other agents to have submodular valuations.)
\end{lemma}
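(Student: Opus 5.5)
We argue by contradiction with a double-counting quantity. Normalize $MMS(\items,v_i,n)=1$, write $\mu_i:=\kaliEXPECTED[v_i(A_i)]$, and suppose $\mu_i<\alpha$. For each item $e$, let $A_e^{<e}$ be the contents of the bundle that receives $e$, just before $e$ is added. Define the \emph{credit} $d_i(e):=v_i(e\mid A_e^{<e})$ and $W_i:=\sum_{e\in\items}d_i(e)$. The plan is to prove an upper bound and a lower bound on $\kaliEXPECTED[W_i]$ that are incompatible when $\alpha=0.2219$.

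\emph{Upper bound.} Cycle rotations only move bundles between agents and never change their contents. So every bundle is built item by item, and its credits telescope to its final value: $W_i=\sum_{j}v_i(A_j)$. The final allocation is EFL (Proposition~\ref{prop:RECE1_EFL_as}). Hence for each $j$ with $|A_j|\ge2$ there is an item $g$ with $v_i(A_j)\le v_i(A_j\setminus\{g\})+v_i(g)\le 2v_i(A_i)$, using subadditivity. If $|A_j|\le1$, then $A_j$ is at most a single item that was available when $i$ made her greedy first choice, so again $v_i(A_j)\le v_i(A_i)$. Summing over the $n$ bundles gives $W_i\le 2n\,v_i(A_i)$, and therefore $\kaliEXPECTED[W_i]\le 2n\mu_i<2n\alpha$.

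\emph{Lower bound.} Fix an MMS partition $B_1,\dots,B_n$ of agent $i$. Use Lemma~\ref{lem:MMSoverAdditive} to get $v'_i\le v_i$ that is additive on each $B_j$ with $v'_i(B_j)=v_i(B_j)\ge1$. Split each $B_j$ into $G_j$, the items taken in the greedy phase, and $R_j^0$, the items left for the random phase.
\begin{itemize}
\item A greedy item goes into an empty bundle, so its credit is $v_i(g)\ge v'_i(g)$.
\item For the $r$-th random item of $B_j$, the receiving bundle is an envy-graph sink. Hence its value for $i$ is at most $v_i(A_i)$, and its expectation is at most $\mu_i\le\alpha$.
\item The item itself is uniform over a uniformly random $r$-subset of $R_j^0$. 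Combining Claim~\ref{clm:marginal-beta}, the averaging bound for random subsets, and Jensen's inequality applied to $(\cdot)_+$ gives
\[
\kaliEXPECTED[d_i(e_r)]\ \ge\ \Bigl(\tfrac{v'_i(R_j^0)}{|R_j^0|}-\tfrac{\alpha}{r}\Bigr)_+ .
\]
\end{itemize}
Sum this over $r$ and apply Lemma~\ref{lem:g-sum-bound} with $\theta=\alpha/\tau$, where $\tau:=v'_i(R_j^0)$. Add the greedy contribution $1-\tau$, and optimize over $\tau\in[0,1]$ (the case $\tau\le\alpha$ is trivial). This gives, for each $j$,
\[
\kaliEXPECTED\Bigl[\sum_{e\in B_j}d_i(e)\Bigr]\ \ge\ g(\alpha)=1-\alpha-\alpha\ln(1/\alpha).
\]
Summing over $j$, $\kaliEXPECTED[W_i]\ge n\,g(\alpha)$. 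This is Proposition~\ref{prop:EW-true}.

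\emph{Conclusion and main obstacle.} The two bounds contradict each other as soon as $g(\alpha)\ge 2\alpha$, that is, $1-3\alpha-\alpha\ln(1/\alpha)\ge 0$. This holds at $\alpha=0.2219$, where the left side is about $3\cdot10^{-4}$. So $\mu_i\ge\alpha$.

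The step I expect to be delicate is the per-round bound. The receiving bundle and the random item are correlated through the random history, so the argument must condition correctly. It needs the deterministic bound $v_i(A_{e_r}^{<e_r})\le v_i(A_i)$, which uses the sink property and the fact that an agent's value never decreases along the run. It must then apply Jensen only after taking expectation over the uniform random subset, so that an expected budget of $\alpha$ suffices rather than a pointwise one. Only agent $i$'s submodularity is used; the other agents enter only through the sink and EFL structure.
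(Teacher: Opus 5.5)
\textbf{The gap is in your upper bound.} Your lower bound $\kaliEXPECTED[W_i]\ge n\,g(\alpha)$ is exactly the paper's Proposition~\ref{prop:EW-true}, and the final comparison $g(\alpha)\ge 2\alpha$ matches the paper. The problem is the singleton case of the upper bound. You claim that a singleton bundle $A_j=\{e\}$ holds an item that was available when $i$ made her greedy choice, but $e$ may have been picked in the greedy phase by an agent who came \emph{before} $i$. Such an item can be worth far more to $i$ than $v_i(A_i)$. The bundle can also stay a singleton forever, because $i$ envies it and so it is never a sink.

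Here is a concrete counterexample. Take $n=2$ with agent~1 choosing first. Agent~1 values only an item $h$, with $v_2(h)=10$. The remaining items are small and have total $v_2$-value $1$, so $MMS(\items,v_2,2)=1$. Agent~1 takes $h$, agent~2 is the only sink throughout and collects everything else, and $W_2=11>4=2n\,v_2(A_2)$ deterministically. So $W_i\le 2n\,v_i(A_i)$ fails pointwise, and for a fixed greedy order even $\kaliEXPECTED[W_i]\le 2n\mu_i$ can fail. Your argument never uses randomness of the agent order, and this is exactly the step where it is needed.

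\textbf{The missing idea is the paper's averaging argument over a random greedy order.} The paper orders agents uniformly at random in the greedy phase; the remark after Lemma~\ref{lem:expectation} notes this is needed only here. It then proceeds in three steps.

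First, for each $j\ne i$ it uses only the EF1 part of EFL: $v_i(A_j)\le v_i(A_i)+v_i(e_j^\star)$, where the items $e_j^\star$ are distinct. Hence $W_i\le n\,v_i(A_i)+(n-1)\beta$, where $\beta$ is the average $v_i$-value of the $n-1$ items most valuable to $i$.

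Second, $i$'s position $r$ in the greedy order is uniform on $[n]$, and at position $r$ some item among $i$'s top $r$ is still available. This gives $\mu_i\ge\kaliEXPECTED[v_i(e_{i,\mathrm{gr}})]\ge\frac1n\sum_{r=1}^{n}v_i(e_r)\ge\frac{n-1}{n}\beta$.

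Third, adding the two bounds yields $2\mu_i\ge\kaliEXPECTED[W_i]/n$. This is your inequality $\kaliEXPECTED[W_i]\le 2n\mu_i$, but it holds only in expectation, not pointwise. From there your conclusion goes through unchanged.

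A minor point: if you apply Lemma~\ref{lem:MMSoverAdditive} to $v_i$ rather than to the capped $\min\{v_i,1\}$, then $\tau$ can exceed $1$. The per-bundle bound still holds because $V-\alpha\ln V$ is increasing for $V\ge\alpha$, but you should say so.
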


\begin{remark}
{Our proof of Lemma~\ref{lem:expectation} requires  ordering the agents at random in the greedy phase of RECE1. All other proofs in this paper, including that of Theorem~\ref{thm:EFLMMSsubmodular}, hold regardless of how agents are ordered.}
\end{remark}

The second lemma offers a concentration result, showing that it is highly unlikely that the agent will receive a value significantly lower than the expectation.

\begin{lemma}
    \label{lem:deviation}
    {For $\alpha = 0.2219$ }and every $0 < \delta < 1$, for every agent holding a submodular valuation $i$, the value received by agent $i$ in the RECE1 algorithm is at least $(1 - \delta)\alpha \cdot MMS(\items, v_i, n)$ with probability at least {$1 - e^{-\Omega(\delta^2 n_i)}$}. In this last expression, $n_i$ denotes the number of agents whose first item $e$ that they receive in RECE1 satisfies $v_i(e) \le \alpha \cdot MMS(\items, v_i, n)$.

\end{lemma}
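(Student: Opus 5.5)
We prove Lemma~\ref{lem:deviation} with a potential-function martingale argument, followed by Azuma--Hoeffding (Freedman) concentration. Normalize $MMS(\items,v_i,n)=1$ and fix an MMS partition $B_1,\dots,B_n$ of agent $i$.

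\textbf{Step 1: reduce to a one-sided surrogate.} Apply Lemma~\ref{lem:MMSoverAdditive} to obtain a monotone submodular $v'_i\le v_i$. It is additive on every subset of each $B_j$ and satisfies $v'_i(B_j)=v_i(B_j)\ge 1$; we may truncate so that $v'_i(B_j)=1$. Define the \emph{credit} of each item allocated in the random phase by
\[
c_i(e)=v'_i(e\mid A)\ \text{ if } v'_i(A)<\alpha, \qquad c_i(e)=v'_i(\{e\})\ \text{ otherwise},
\]
where $A$ is the recipient bundle just before $e$ is added. The key deterministic observation is that the total credit lower bounds (up to an additive loss) $n\cdot v_i(A_i)$. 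The argument has two parts. First, by the envy-graph sink property, each recipient bundle has $v_i$-value at most the final $v_i(A_i)$. Second, a bundle accumulates marginal credits only while its $v'_i$-value is below $\alpha$, and after that it contributes only singleton values of items whose $v'_i$-value is at most $\alpha$. For this we use the greedy phase: each agent's greedy item is worth at least $v_i(e)$ to the recipient in $i$'s eyes, and we restrict attention to the $n_i$ agents whose first item has $v_i(e)\le\alpha$, whose bundles receive items of small value. Items of larger value are handled deterministically through the greedy contribution, as in the proof of Proposition~\ref{prop:EW-true}.

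\textbf{Step 2: a submartingale.} Define the bundle potential $P(R)$ on remaining sets $R\subseteq B_j$ by the recursion of Claim~\ref{obs:P-recursion}, and set $\Lambda_k=\sum_{t\le k}c_i(e_t)+\sum_j P(R_j^{k})$. Proposition~\ref{prop:Mk-submart} shows that $\Lambda_k$ is a submartingale. Its initial value $\Lambda_0=\sum_j P(R_j^0)$ is at least the expectation bound, essentially $n\,g(\alpha)$ minus the greedy items, via Lemma~\ref{lem:g-sum-bound}. The final potential is $0$, so the total credit is $\Lambda_T\ge\Lambda_0-(\text{martingale deviation})$.

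\textbf{Step 3: bounded increments and concentration.} By the drop lemma (Lemma~\ref{lem:drop}), the potential drop at each step lies in $[0,v'_i(\{e_k\})]$. The credit also lies in $[0,v'_i(\{e_k\})]$, so each increment is bounded by $v'_i(\{e_k\})$. Because the $v'_i(\{e\})$ sum to at most $n$ over all items, the increments have conditional variance bounded by $\max_e v'_i(\{e\})\cdot n$. Items of large singleton value (above $\alpha$) must be removed from consideration. They are taken in the greedy phase or charged to agents not counted in $n_i$; this is exactly why $n_i$ appears. After this removal every increment is at most $\alpha$ and the total variance is $O(\alpha n)$. Freedman's inequality then gives a deviation of $\delta\alpha n$ with probability at most $e^{-\Omega(\delta^2 n)}$, with $n$ effectively replaced by $n_i$ once the bundles of the remaining agents are discounted.

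\textbf{Main obstacle.} The hardest part is Step 1 together with the $n_i$ bookkeeping. We must convert concentration of the total credit into a lower bound on the single quantity $v_i(A_i)$, at scale $n_i$ rather than $n$. The plan is to average the credit only over the random-phase items received by the $n_i$ ``small-first-item'' agents, each of whose bundles is dominated by $v_i(A_i)$. We would first try to show that the remaining agents' bundles absorb at most their greedy items, so that the potential argument loses only a lower-order term.
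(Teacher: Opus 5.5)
Your Steps 2 and 3 reproduce the paper's machinery: the credit $c_i$, the potential $P$, the submartingale $\Lambda_k$, the drop lemma, and Freedman's inequality with quadratic variation controlled by additivity of $v'_i$ on the MMS bundles. The genuine gap is the reduction from scale $n$ to scale $n_i$, which you flag as the main obstacle. As proposed, it does not work. You keep the full $n$-bundle submartingale and then count only the credit landing in the $n_i$ light bundles, expecting the heavy bundles to cost only a lower-order term. They do not. Start from $\Lambda_0\ge n\,g(\alpha)$ and discard the credits $v'_i(\{g\})$ of the $n-n_i$ heavy greedy items, each of which can equal $1$. What remains is the guarantee $n_i\,g(\alpha)-(n-n_i)\bigl(1-g(\alpha)\bigr)$, which is already negative for $n_i=n/2$ because $g(\alpha)<1/2$. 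Now suppose you repair the accounting by charging each heavy item against its own MMS bundle inside the full submartingale. The only deterministic variance bound you then have is $\alpha n$, while the slack you must protect is only of order $\delta\alpha n_i$. Freedman then yields $e^{-\Omega(\delta^2 n_i^2/n)}$, not $e^{-\Omega(\delta^2 n_i)}$.

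The missing idea, which is how Proposition~\ref{prop:reduce_to_nprime_i} proceeds, is to discard entire MMS bundles rather than individual items. If agent $i$'s own greedy item is heavy, she is done. Otherwise every item still available after her greedy turn has $v_i$-value at most $\alpha$. Hence the heavy items form a set $S$ of greedy items of other agents, with $|S|=n-n_i$; this is also what bounds every random-phase increment by $\alpha$. If a bundle containing an item of $S$ ever receives another item, then $v_i(A_i)>\alpha$, because the recipient was a sink of the envy graph. Now delete the at most $n-n_i$ MMS bundles that meet $S$. At least $n'\ge n_i$ bundles survive, and all their items end up in the $n_i$ light allocated bundles. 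Run the submartingale only over the surviving bundles; dropping the other items is harmless since credits are nonnegative. This gives initial value at least $n'g(\alpha)$ via the drop lemma, increments at most $\alpha$, and quadratic variation at most $\alpha n'$. Freedman with $t=\delta\alpha n'$ therefore fails with probability $e^{-\Omega(\delta^2 n')}\le e^{-\Omega(\delta^2 n_i)}$. On the complementary event, one of two things holds. Either some bundle of $v'_i$-value at least $\alpha$ received an item, so $v_i(A_i)\ge\alpha$. Or EFL on the $n_i$ light bundles, whose items all have value at most $\alpha$, gives $n_i v_i(A_i)+(n_i-1)\alpha\ge n'\bigl(g(\alpha)-\delta\alpha\bigr)$. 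In the latter case $v_i(A_i)\ge g(\alpha)-\alpha-\delta\alpha\ge(1-\delta)\alpha$ by Observation~\ref{obs:g-ge-2alpha}. A smaller point: cap $v_i$ at $1$ \emph{before} invoking Lemma~\ref{lem:MMSoverAdditive}. Truncating $v'_i$ afterwards destroys the additivity on each $B_j$, and both the drop lemma and the variance bound depend on it.
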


Lemma~\ref{lem:deviation} (in combination with a union bound on the $n$ agents) implies that as $n$ grows, {if all $n_i$ grow at a sufficiently large rate as a function of $n$,} then the approximation of the MMS produced by RECE1 approaches $\alpha$. 
{We explain here how we handle all values of $n$ in the special case that $n_i = n$ for every $i$.} Select $\delta$ and $n_0$ that maximize $\min[(1 - \delta)\alpha, \frac{1}{n_0}]$ (this last value will serve as our $\rho$) subject to the constraint that $n_0$ is the largest integer satisfying {$e^{-\Omega(\delta^2 {n_0})} \ge \frac{1}{n_0}$}. For all $n > n_0$ the bounds of Lemma~\ref{lem:deviation} allow us to use a union bound over all agents, and infer that with positive probability every agent gets at least $(1 - \delta)\alpha$-MMS. For all $n \le n_0$, we use Proposition~\ref{pro:EF1subadditive} to deduce $\frac{1}{n}$-MMS, which is at least $\frac{1}{n_0}$-MMS. 
Handling all values of $n$ when the $n_i$ might be smaller than $n$ uses principles similar to the above and gives the same value of $\rho$. 
See details in {Section~\ref{subsec:constant-all-nprime}}.

Before explaining key ideas in the proofs of our main lemmas, we observe that Lemma~\ref{lem:deviation} seems to fail the following elementary sanity check. On average, each agent receives $\frac{m}{n}$ items. If we consider a sequence of instances with growing $n$ in which the ratio $\frac{m}{n}$ remains the same, say $d$, then a typical agent gets roughly $d$ items, independently of the value of $n$. As these $d$ items have random values (because we have a random order over the items), one might expect that the probability of large deviations from the expected value will be exponentially small in $d$. So, how could it possibly be exponentially small in {$n_i$ (which can be as large as $n$), when the number of received items does not depend on $n_i$?} 

The answer to the above sanity check is the following. Though it is true that on average an agent receives $d$ items, the ECE algorithm may spread the items unevenly among the agents. In particular, the number of items that an agent receives may deviate drastically from $d$ (by a factor that depends on $n$) if there is a need for it, breaking the argument that with only $d$ items concentration results can depend only on $d$. As a concrete example, $d < 2$ in Example~\ref{ex:1}, but ECE1 gives one of the agents $n$ items.

\subsection{Key ideas in our proofs}

Here we explain the key ideas that we use in order to prove our two main lemmas. We also mention technical refinements that are used in order to implement these ideas. For simplicity and without loss of generality, we assume that $MMS(\items, v_i, n) = 1$.

{\bf Key idea 1.} It is very difficult to analyse $v_i(A_i)$ directly. So instead, we analyse a proxy for $v_i(A_i)$. This proxy is $W_i = \sum_{j \in [n]} v_i(A_j)$. It can serve as a proxy to $v_i(A_i)$, because due to the EFL property, we can lower bound $v_i(A_i)$ by $\frac{1}{2n-1} W_i$. (This statement is not accurate.  For a more accurate statement, see the technical refinements below.) The use of $W_i$ helps in both lemmas. For Lemma~\ref{lem:expectation}, we are now in the situation in which every item contributes to $W_i$ (though it might not be easy to figure out what this contribution is), avoiding the need to try to analyse the composition of $A_i$. For Lemma~\ref{lem:deviation}, as $W_i$ depends on all items (and we may assume that $m \ge n$, as otherwise the MMS is~0), Lemma~\ref{lem:deviation} now passes the sanity check of having the probability depend on $n$.

{\bf Technical refinement 1a.} Recall that $W_i = \sum_{j \in [n]} v_i(A_j)$. However, we remove from this sum those bundles $A_j$ for which their first item $e$ satisfies $v_i(e) \ge \alpha$. In ECE, such bundles cannot receive any additional item, unless agent $i$ already holds a bundle of value at least  $\alpha$. By removing these bundles, we will be able to assume that every item has value at most $\alpha$ (in fact, strictly smaller).

{{\bf Technical refinement 1b.} For our analysis of $W_i$, we may assume that for every item $e$, the bundle $A_j$ receiving it had {value at most $\alpha$ under $v_i$} at the time. This is because otherwise, agent $i$ already holds a bundle of value at least $\alpha$. To incorporate this assumption into our analysis, we replace $W_i$ by $W'_i$, defined via per-item contributions. {(Important remark. In our actual formal definitions, as appearing in Section~\ref{subsec:proxies}, $W'_i$ is defined with respect to a modified valuation $v'_i$ rather than the original $v_i$. The reason for this will be explained in key {idea~4} 
below. However, for this part of the overview, we pretend that {$W'_i$} is defined with respect to $v_i$.)} Each item $e$ contributes its marginal value {$v_i(e \mid A_e^{<e})$} if the receiving bundle had {$v_i$ value} less than $\alpha$ at the time, and contributes its full value {$v_i(e)$} otherwise.  If any bundle crosses the threshold $\alpha$, the EFL property already ensures {$v_i(A_i) \ge \alpha$}, so the ``otherwise'' case is benign.} 
As a result of key idea 1 and of its technical refinements, we have the following proposition.

\begin{proposition}
    \label{pro:proxyforv}
    Let {$n_i$} denote the number of bundles that remain after technical refinement 1a. Then $v_i(A_i) \ge \min[\alpha, \; \frac{W'_i}{2n_i - 1}]$.
\end{proposition}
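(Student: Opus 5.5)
The plan is to fix agent $i$ with $MMS(\items,v_i,n)=1$ and assume $v_i(A_i)<\alpha$; otherwise the bound holds trivially. Let $J$ be the set of $n_i$ bundles that survive technical refinement 1a, namely those whose first item $e$ satisfies $v_i(e)<\alpha$. We first want to show that $i$'s own bundle is in $J$, or that this does not matter, and then show $W'_i\le (2n_i-1)v_i(A_i)$. Since $W'_i$ is defined as a sum of per-item contributions over items landing in bundles of $J$, we bound the contribution of each bundle $A_j$, $j\in J$, separately.

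\textbf{Step 1: the threshold case is vacuous.} If some item landed in a bundle whose $v_i$-value was at least $\alpha$ at that moment, then that bundle was a sink of the envy graph at the time. In particular, agent $i$ did not envy it, so $i$'s bundle value (which never decreases along the run, as noted in the proof of Proposition~\ref{prp:ECE_EF1}) was already at least $\alpha$, a contradiction. Similarly, a bundle whose first item has value at least $\alpha$ can receive a second item only if $i$ already holds value at least $\alpha$. Under $v_i(A_i)<\alpha$, every item of a bundle in $J$ therefore contributes its marginal $v_i(e\mid A_e^{<e})$. Bundles are only permuted by cycle rotations and are never split, so within each bundle these marginals telescope, and the contribution of bundle $A_j$ to $W'_i$ equals $v_i(A_j)$. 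Hence $W'_i=\sum_{j\in J}v_i(A_j)$.

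\textbf{Step 2: EFL per bundle.} For $j\in J$, $j\ne i$: if $|A_j|\le1$, then $v_i(A_j)<\alpha$. That alone does not give $v_i(A_j)\le 2v_i(A_i)$, so we need care here. We handle it by noting that the singleton item of $A_j$ was allocated in the greedy phase. Agent $i$'s greedy first item $e_i\in A_i$ was then chosen while that item was still available, or was chosen before it; in either case $v_i(A_j)\le v_i(e_i)\le v_i(A_i)$, using the argument of Proposition~\ref{prop:RECE1_EFL_as}, whichever agent picked first. If $|A_j|\ge2$, EFL gives an item $g$ with $v_i(A_j\setminus\{g\})\le v_i(A_i)$ and $v_i(g)\le v_i(A_i)$. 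Subadditivity, which follows from submodularity, then gives $v_i(A_j)\le 2v_i(A_i)$.

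\textbf{Step 3: sum.} If $i\in J$, then summing gives $W'_i\le v_i(A_i)+2(n_i-1)v_i(A_i)=(2n_i-1)v_i(A_i)$. If $i\notin J$, then $v_i(A_i)\ge v_i(e_i)\ge\alpha$, contradicting our assumption. In both cases $v_i(A_i)\ge\min[\alpha,W'_i/(2n_i-1)]$.

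The main obstacle is Step 1, the bookkeeping showing that the per-item contributions in $W'_i$ (in its formal definition with $v'_i\le v_i$) are upper bounded by $v_i(A_j)$. With the modified valuation, the marginals telescope to $v'_i(A_j)\le v_i(A_j)$, using the pointwise domination from Lemma~\ref{lem:MMSoverAdditive}. The threshold case must also be shown to be excluded under $v_i(A_i)<\alpha$, whether the threshold is measured by $v_i$ or by $v'_i$; it is excluded because $v'_i\le v_i$.
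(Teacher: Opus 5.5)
Your Step~2 breaks down for light singleton bundles. The bound $v_i(A_j)\le v_i(e_i)$ ``whichever agent picked first'' holds only if the item $f$ with $A_j=\{f\}$ was still available when $i$ made her greedy choice. If $f$ was taken earlier in the greedy phase, the greedy rule gives no relation between $v_i(f)$ and $v_i(e_i)$. EFL also imposes nothing on bundles of size at most one: the argument of Proposition~\ref{prop:RECE1_EFL_as} concerns only the last item of a bundle of size at least $2$, which is a random-phase item. Such a singleton can be envied by $i$ until the end, because while $i$ envies it, its holder has an incoming edge and never receives an item. So $v_i(f)$ can be close to $\alpha$ while $v_i(A_i)$ is far below $\alpha/2$.

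This is not just a missing case: the inequality $\sum_{j\in J}v_i(A_j)\le(2n_i-1)v_i(A_i)$ that your Steps~2--3 aim for can be false. Here is a sketch.
\begin{itemize}
\item Let $s$ agents take items $f_1,\dots,f_s$ before $i$ chooses.
\item Let the remaining items be $N+s$ copies of each of $K\gg N$ colors.
\item Let $v_i(S)=\frac{1}{K}\#\{\text{colors met by }S\}+y\,|S\cap\{f_1,\dots,f_s\}|$ with $y<\alpha$. Then $\mathrm{MMS}=1$ and all $n_i=N+s$ bundles are light.
\item Give the other agents valuations under which only $i$ ever envies.
\item Let the random order and the choice of sources hand out whole color classes, in a balanced way, to the $N$ non-singleton bundles.
\end{itemize}
Then $v_i(A_i)\approx 1/N$ and $\sum_j v_i(A_j)\approx 1+sy$. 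This exceeds $(2n_i-1)v_i(A_i)\approx 2+2s/N$ once $s(y-2/N)>1$; for example, $y=0.2$, $s=6$, $N=100$ works. With the $v_i$-based proxy of the overview, $W'_i$ equals this sum, so the literal $\min$-form cannot be proved in this generality.

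What the paper actually establishes is the threshold version, which is all the overview needs (``we shall want to show that $W'_i\ge(2n_i-1)\alpha$''). It does so via Claim~\ref{clm:Wi-ge-Wiprime-or-alpha} together with Proposition~\ref{pro:WversusAlpha}, applied to the light bundles in Section~\ref{sec:general-case}. To get it, replace your multiplicative per-bundle bound with the additive one $v_i(A_j)\le v_i(A_i)+\alpha$ for every light $A_j$ with $j\ne i$. This is trivial for a light singleton. For $|A_j|\ge 2$, your EFL bound $v_i(A_j)\le 2v_i(A_i)$ already implies it when $v_i(A_i)<\alpha$. Combined with your Step~1, this gives $W'_i\le n_i\,v_i(A_i)+(n_i-1)\alpha$ whenever $v_i(A_i)<\alpha$. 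Hence $W'_i\ge(2n_i-1)\alpha$ forces $v_i(A_i)\ge\alpha$. Your Steps~1 and~3 carry over unchanged.
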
 

The proof of Proposition~\ref{pro:proxyforv} appears in {Section~\ref{subsec:proxies}}. 
Thus, we shall want to show that $W'_i \ge (2n_i - 1) \alpha$. This involves computing the expectation of $W'_i$ (implying Lemma~\ref{lem:expectation}), and proving concentration results for $W'_i$ (implying Lemma~\ref{lem:deviation}). {(Important technical note. The proof of Lemma~\ref{lem:expectation} actually uses $W_i$, not $W'_i$. Modifying $W_i$ to $W'_i$ is done for the purpose of proving Lemma~\ref{lem:deviation}. )}

{\bf Key idea 2.} Consider an MMS partition $B_1, \ldots, B_n$ for agent $i$. For each $B_j$, let $C_j$ denote the total contribution of its items to $W'_i$. A key observation that we make is that we can lower bound the expectation of each $C_j$ only based on the fact that the order of items within the respective $B_j$ is random, and independently of anything else (for lower bounding the expectation, there is no need to try to infer anything about individual values of items, the rounds in which they were chosen or the bundles which they entered). This is done using the following principles.  Suppose without loss of generality that $MMS(\items, v_i, n) = 1$ and let $k$ denote the initial number of items in $B_j$. Then when only $r$ items remain in $B_j$, the expectation for the total value that remains in $B_j$ is at least $\frac{r}{k}$. This follows easily from the random order of items in $B_j$, and from submodularity of {$v_i$}. Now, one of these items $e$ is selected at random, and {is allocated to some bundle $A_{\ell}$. If {$v_i(A_\ell)$} $< \alpha$ at that time, the contribution of $e$ to $W'_i$ is the marginal value. The expected contribution of this random $e$ to $W'_i$ is at least} $\frac{1}{r}(\frac{r}{k} - \alpha) = \frac{1}{k} - \frac{\alpha}{r}$. Also here, this follows easily from the random order of items in $B_j$, and from submodularity of {$v_i$}. 
Summing over all $k$ items of $B_j$, we get a lower bound on the expected value of $C_j$. This lower bound depends on $k$, but only to a small extent, and we can easily derive from it a lower bound that applies to all possible values of $k$. Summing over all $B_j$ (which simply means multiplying this lower bound by $n$), we get a lower bound on the expected value of {$W'_i$}. 

{\bf Technical refinement 2a.} As we consider RECE1 rather than RECE, the order of items within $B_j$ is only partly random. The first item to enter each $A_{\ell}$ may be arbitrary (in RECE1 it is chosen greedily according to $v_\ell$, and $v_{\ell}$ may have an arbitrary item as its most valuable item), and if one or more of these first items come from $B_j$, then the order over these items is not random. {However, each of these first items contributes its full $v_i$ value to $W'_i$ (as the receiving bundle was previously empty, so the contribution equals the full value).} Consequently, it is not hard to show that our lower bound on the expectation of $C_j$ still applies.

As a result of key idea 2 and of its technical refinement, we have the following lemma.

\begin{lemma}
\label{lem:bundleContribution}
Suppose that bundle $B_j$ in the MMS partition for $v_i$ does not contain any item $e$ satisfying $v_i(e) \ge \alpha$.  Then, its expected contribution to $W'_i$ satisfies:
\[
\kaliEXPECTED[C_{j}]\ \geq\ {1-\alpha-\alpha\ln\left(\frac{1}{\alpha}\right).}
\]
\end{lemma}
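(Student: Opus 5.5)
We normalize $MMS(\items,v_i,n)=1$, so $v_i(B_j)\ge 1$. We then pass to the additivized valuation $v'_i$ of Lemma~\ref{lem:MMSoverAdditive}, applied to the MMS partition $B_1,\dots,B_n$. It satisfies $v'_i\le v_i$, it is monotone submodular, it is additive on every subset of $B_j$, and $v'_i(B_j)=v_i(B_j)\ge 1$. Since only a lower bound is needed, we may shrink weights and assume $v'_i(B_j)=1$. The contribution $C_j$ is the sum over $e\in B_j$ of $c_i(e)$. Here $c_i(e)=v'_i(e\mid A_e^{<e})$ if the receiving bundle had $v'_i$-value below $\alpha$ at that time, and $c_i(e)=v'_i(\{e\})$ otherwise.

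\textbf{Step 1 (greedy items).} Split $B_j=G_j\sqcup R_j^0$, where $G_j$ consists of items of $B_j$ that are first items of some bundle (allocated in the greedy phase), and $R_j^0$ consists of the rest. An item of $G_j$ enters an empty bundle, so it contributes $v'_i(\{g\})$ in full. Writing $\tau:=v'_i(R_j^0)\in[0,1]$, additivity on $B_j$ gives a deterministic greedy contribution of $1-\tau$.

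\textbf{Step 2 (per-round bound for random items).} Let $k':=|R_j^0|$. Condition on the history, and consider the moment when $r$ items of $R_j^0$ remain, with remaining set $R$. Because the random phase uses a uniformly random order, the next item of $B_j$ to be allocated is uniform in $R$. Moreover, $R$ is a uniform $r$-subset of $R_j^0$, so additivity gives $\kaliEXPECTED[v'_i(R)]=(r/k')\tau$.
\begin{itemize}
\item If the receiving bundle $A$ has $v'_i(A)<\alpha$, apply Claim~\ref{clm:marginal-beta} to $v'_i$. This gives expected contribution at least $(v'_i(R)-\alpha)_+/r$.
\item Otherwise the contribution is $v'_i(e)$, whose average over $R$ is $v'_i(R)/r$, which dominates the same quantity.
\end{itemize}
Taking expectations and using Jensen for $(\cdot)_+$, the $r$-th item contributes at least $\bigl(\tfrac{\tau}{k'}-\tfrac{\alpha}{r}\bigr)_+$ in expectation.

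\textbf{Step 3 (summation).} There are two cases.
\begin{itemize}
\item If $\tau\le\alpha$, then $\kaliEXPECTED[C_j]\ge 1-\tau\ge1-\alpha\ge g(\alpha)$.
\item If $\tau>\alpha$, set $\theta=\alpha/\tau$. Summing over $r=1,\dots,k'$ and applying Lemma~\ref{lem:g-sum-bound} gives $\tau g(\theta)=\tau-\alpha-\alpha\ln(\tau/\alpha)$. Adding the greedy part yields $1-\alpha-\alpha\ln(\tau/\alpha)\ge g(\alpha)$, since $\tau\le1$.
\end{itemize}

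\textbf{Main obstacle.} The delicate step is Step 2's probabilistic claim. We must justify that, conditioned on the past, the next $B_j$-item is uniform on the remaining set $R$. We must also justify that $R$ is a uniform subset of $R_j^0$ despite the greedy phase having removed $G_j$ adaptively. The greedy items are fixed before the random phase starts, and the random phase draws from the remaining items uniformly, so conditioning on $G_j$ (hence on $R_j^0$) makes the induced order on $R_j^0$ a uniform permutation. Because the receiving bundle is determined by the history before the draw, the conditional bound of Claim~\ref{clm:marginal-beta} applies round by round. The hypothesis that $B_j$ has no item of value at least $\alpha$ is needed only to make sure no $B_j$ item lands in a bundle excluded by refinement 1a. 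Such an item would have been a first item of value at least $\alpha$, which the hypothesis rules out.
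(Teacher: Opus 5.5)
Your proof is correct, but it does not follow the paper's proof of this lemma. It is essentially the argument the paper gives in Appendix~\ref{app:EW-true} for Proposition~\ref{prop:EW-true}, carried over to $c_i$.

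The paper proves the lemma via Proposition~\ref{prop:expected_c_j}, using the bundle potential $P$ of Definition~\ref{def:P-pot}, in three steps:
\begin{enumerate}
\item It shows $\kaliEXPECTED[C_j^{\mathrm{rnd}}]\ge P(R_j^0)$ from the per-round bound.
\item It absorbs the greedy items by iterating the drop lemma (Lemma~\ref{lem:drop}, an induction on $|R|$ via the recursion for $P$), obtaining $P(R_j^0)+\sum_{g\in G_j}v'_i(\{g\})\ge P(B_j)$.
\item Only then does it apply Jensen and Lemma~\ref{lem:g-sum-bound} to the whole bundle $B_j$.
\end{enumerate}
You instead apply Jensen directly to uniform subsets of $R_j^0$. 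You then absorb the greedy items through the explicit case split on $\tau=v'_i(R_j^0)$, using that $1-\alpha-\alpha\ln(\tau/\alpha)$ is decreasing in $\tau$.

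Your route is more elementary, and it shows that the drop lemma is not needed for the expectation bound on its own. The paper's route is chosen because the drop lemma is required anyway for the bounded increments in Lemma~\ref{lem:Mk-bounded-var}. Stating the bound as $\sum_{g\in G_j}v'_i(\{g\})+P(R_j^0)\ge P(B_j)\ge g(\alpha)$ then makes it exactly the starting inequality $\Lambda_0\ge n\,g(\alpha)$ of the submartingale (Proposition~\ref{prop:Mk-endpoints}). So the expectation bound and the concentration argument share one potential framework.

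Two minor remarks.
\begin{itemize}
\item No ``shrinking'' is needed, and taken literally it would change $c_i$ and hence $W'_i$. The paper builds $v'_i$ from the capped valuation $\min\{v_i,1\}$, so $v'_i(B_j)=1$ holds exactly.
\item The hypothesis that $B_j$ contains no $\alpha$-heavy item is used neither by your argument nor by the paper's Proposition~\ref{prop:expected_c_j}, which holds for every $j$. You do not need to justify its role here.
\end{itemize}
Your final paragraph also identifies the right conditioning for Step~2. You must condition on the history just before the round in which the next $B_j$-item is drawn, together with the event that the draw lands in $B_j$, because the recipient bundle is fixed only at that point. Step~2 as first written, conditioning on the moment the previous $B_j$-item left, blurs this.
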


{The proof of Lemma~\ref{lem:bundleContribution} appears in {Section~\ref{subsec:avg-complement}}.} Lemma~\ref{lem:bundleContribution} suffices in order to prove Lemma~\ref{lem:expectation} {(see Section~\ref{lemm:expected_proof})}. Thus, it remains to prove Lemma~\ref{lem:deviation}.

{{\bf Key idea~3.} We would like to show that the value of the random variable $W'_i$ is very unlikely to be much smaller than its its expectation.
This seems plausible, as $W'_i$ is a sum of $m$ per-item contributions, and the contribution of each item is bounded (by $\alpha$). Thus, it seems natural to define a Doob martingale whose initial value is $E[W'_i]$ and whose final value is $W'_i$, and to apply standard bounded-difference martingale concentration results. However, more careful inspection of this martingale shows that the bounded difference property does not hold. }
Changing the item allocated in round~$r$ can dramatically change later rounds, so the effect of a single round on the final value of $W'_i$ is not bounded. Our next key idea 
is to lower bound the above Doob martingale by a submartingale, that we shall refer to as $\Lambda_0, \ldots, \Lambda_m$. Unlike the standard Doob martinagle,  $\Lambda_0$ will not be equal to $E[W'_i]$, but rather to our lower bound on $E[W'_i]$, the one implied by Lemma~\ref{lem:bundleContribution} and used in our proofs. In later steps $r$, the value $\Lambda_r$ will be the sum of two terms: the value accumulated into $W'_i$ so far, and a lower bound on the expectation of the additional value that $W'_i$ will accumulate in future rounds. Thus, $\Lambda_m = W'_i$, as desired. We will show that the steps of the submartingale do have the bounded difference property: the value  cannot change in a single step by more than $\alpha$ (see part (a) of Lemma~\ref{lem:Mk-bounded-var}.)

{{\bf Key idea 4.} For the submartingale $\Lambda_0, \ldots, \Lambda_m$, we wish to show that $\Lambda_m \ge \Lambda_0 - o(n)$, with probability higher than $1 - \frac{1}{n}$. The standard Azuma inequality for large deviations of submartingales does not apply here, as it would bound the deviation as a function of $m$, not $n$. For this reason, we use a different concentration result for martingales, Freedman's inequality (Theorem~\ref{thm:freedman}). In our use of Freedman's inequality, we need a deterministic $O(n)$ upper bound on $W'_i$. We achieve this in two steps. Recall that we assume that the value of the MMS is~1. The first step is standard and easy: replace $v_i$ by a ``capped" submodular valuation $\hat{v_i} = \min[v_i, 1]$. Consequently, each bundle in the MMS partition has value~1, and the sum of their values is $n$. However, this by itself does not upper bound $W'_i$ by $n$, because for submodular valuations, a single MMS bundle $B_j$ may contribute to $W_i$ (and thus to $W'_i$) much more than its value. This may happen if items in $B_j$ are substitutes of each other, thus contributing to $B_j$ marginal values that are much less than their stand-alone values, whereas in the allocation with respect to which $W'_i$ is computed, they are in different bundles and contribute their full stand-alone values. Our next key idea is to lower bound $\hat{v}_i$ by a submodular function $v'_i$ that satisfies $v'_i(B_j)=\hat{v}_i(B_j) = 1$, and importantly, is additive on each~$B_j$. With respect to this $v'_i$, we indeed have a deterministic upper bound of $n$ on $W'_i$.
(This upper bound is used in item (b) of Lemma~\ref{lem:Mk-bounded-var}.)} 
Importantly, RECE1 is run with the true valuations~$v_i$, so the final allocation is EFL with respect to the true valuations; the replacement of~$v_i$ by~$v'_i$ is done only in the analysis of $W'_i$. Lemma~\ref{lem:MMSoverAdditive} states the properties of~$v'_i$ and shows that such a $v'_i$ is guaranteed to exist.

\begin{lemma}
\label{lem:MMSoverAdditive}
    Let $v$ be a submodular valuation function and let $B_1, \ldots B_n$ be disjoint subsets of $\items$. Then there is a submodular function $v'$ with the following properties:
    \begin{itemize}
        \item $v'(S) \le v(S)$ for every $S \subseteq \items$.
        \item $v'(B_j) = v(B_j)$ for each of the disjoint subsets $B_j$.
        \item $v'$ is additive on each $B_j$: for every $j\in[n]$ and every $S\subseteq B_j$, $v'(S)=\sum_{e\in S} v'(\{e\})$, and in particular $\sum_{e\in B_j} v'(\{e\})=v'(B_j)$.
    \end{itemize}
\end{lemma}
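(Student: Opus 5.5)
The plan is to build $v'$ from $v$ by a sequence of single-item ``truncations,'' each of which lowers the stand-alone value of one item while keeping the function monotone and submodular. The basic tool is a one-item truncation claim. Fix a monotone submodular $v$, an item $e$, and a parameter $0\le t\le v(\{e\})$. Define $v'$ by two rules:
\[
v'(S)=v(S)\quad\text{for } e\notin S,
\qquad
v'(S\cup\{e\})=\min\{v(S\cup\{e\}),\,v(S)+t\}.
\]
I would show that $v'$ is monotone, submodular, and pointwise dominated by $v$. Domination is immediate. Monotonicity follows by a case check according to whether $e$ is in the set and whether $e$ is the added element.

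Submodularity I would verify in the lattice form $v'(S)+v'(T)\ge v'(S\cup T)+v'(S\cap T)$, splitting into three cases:
\begin{itemize}
\item $e$ lies in neither set. Then $v'=v$ on all four sets and there is nothing to prove.
\item $e$ lies in exactly one of the sets.
\item $e$ lies in both sets.
\end{itemize}
In the last two cases I split further by which argument of the min is attained at $S$ and at $T$. For each choice, I upper bound $v'(S\cup T)$ and $v'(S\cap T)$ by the matching argument of their own min, and then apply submodularity of $v$ to the pair of $e$-free parts, or to a mixed pair such as $(S,T_0)$. This case analysis is the main obstacle. It is not deep, but one must pick the right comparison term in each subcase. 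The mixed subcase, with $v'(S)=v(S)$ and $v'(T)=v(T_0)+t$, uses $S\cup T=S\cup T_0$ and $S\cap T=(S\cap T_0)\cup\{e\}$.

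Next I fix an internal order $e^j_1,\dots,e^j_{k_j}$ of each $B_j$ and set $w(e^j_r):=v(e^j_r\mid P^j_{r-1})$, where $P^j_r$ is the prefix of length $r$. By telescoping, these weights sum to $v(B_j)$. By submodularity, $w(e^j_r)\le v(\{e^j_r\})$, so $t=w(e^j_r)$ is a legal truncation parameter. It also stays legal after earlier truncations, since those do not touch singletons of other items.

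I then process all items bundle by bundle in this order, applying the truncation claim with $t=w(e)$ at each step. Monotonicity, submodularity and domination $v'\le v$ are preserved throughout.

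Finally, I prove by induction on $r$ that after processing $e^j_r$ we have $v'(P^j_r)=\sum_{u\le r}w(e^j_u)$. The truncation formula gives the minimum of $v(P^j_r)$ and $v'(P^j_{r-1})+w(e^j_r)$. By the induction hypothesis and telescoping, both arguments equal $v(P^j_r)$. Later truncations concern other items. A truncation only changes values of sets containing the truncated item, and it cannot change $v'(P^j_r)$ once this equals the original $v$-value, given domination and the min form. Similarly $v'(\{e\})=w(e)$ persists. Hence $v'(B_j)=v(B_j)=\sum_{e\in B_j}v'(\{e\})$.

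Additivity on every $S\subseteq B_j$ then follows from submodularity. We have $v'(S)\le\sum_{e\in S}v'(\{e\})$ and $v'(B_j\setminus S)\le\sum_{e\notin S}v'(\{e\})$. Summing the two and comparing with $v'(B_j)\le v'(S)+v'(B_j\setminus S)$ forces equality in both.
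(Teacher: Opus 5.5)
Your proposal follows the paper's proof essentially step for step. It uses the same one-item truncation claim and case analysis, the same chain weights $w$, the same bundle-by-bundle processing, the same prefix induction, and the same derivation of additivity on subsets.

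One step, however, is not justified as written. The paper's Step~3 writes the same identity and shares the omission. When $e^j_r$ is processed, the truncation formula gives $\min\{f(P^j_r),\, f(P^j_{r-1})+w(e^j_r)\}$, where $f$ is the \emph{current} function, not the original $v$. Your induction hypothesis controls $f(P^j_{r-1})$. But $P^j_r$ contains the already-truncated items $e^j_1,\dots,e^j_{r-1}$, and each of those truncations may have changed the value of $P^j_r$. Domination only gives $f(P^j_r)\le v(P^j_r)$. If this were strict, you would get $v'(P^j_r)<\sum_{u\le r}w(e^j_u)$, and the induction would fail. A related remark: later truncations leave $P^j_r$ alone only because they are at items outside $P^j_r$. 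Domination and the min form by themselves would not prevent a drop.

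The fix is to carry a stronger invariant through the whole process: at every stage, every prefix $P^j_r$ of every bundle still has its original value $v(P^j_r)$. The invariant holds trivially at the start. A truncation at $e$ does not touch sets avoiding $e$. Now consider truncating $e=e^j_u$ with current function $f$, which is submodular by your claim. Every $r\ge u$ has $P^j_{u-1}\subseteq P^j_r\setminus\{e\}$, so
\[
f(P^j_r)-f(P^j_r\setminus\{e\})\ \le\ f(P^j_u)-f(P^j_{u-1})\ =\ v(P^j_u)-v(P^j_{u-1})\ =\ w(e),
\]
using the invariant for the two prefixes on the right. Hence the minimum defining the new value of $P^j_r$ is attained by $f(P^j_r)$, and that value is unchanged.

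This invariant gives $v'(B_j)=v(B_j)$ directly. The rest of your argument is correct, namely $v'(\{e\})=w(e)$ and additivity on subsets of $B_j$ via subadditivity.
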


The proof of Lemma~\ref{lem:MMSoverAdditive} appears in Section~\ref{app:vprime}. {The existence of such a $v'$ that is XOS can be shown quite easily, but does not suffice for our intended use of $v'_i$ (we make use of Claim~\ref{clm:marginal-beta}, but this claim does not hold for XOS valuations). The difficulty {in the proof of Lemma~\ref{lem:MMSoverAdditive}} is to ensure that $v'$ is submodular, which our construction achieves.}

This completes our overview for the proof of Theorem~\ref{thm:EFLMMSsubmodular}. 
\subsection{Discussion}

{We introduced the randomized allocation algorithm RECE1. Its output is always an EFL allocation. Moreover, we showed that if agents have submodular valuations, its distribution over allocations is {$0.2219$-MMS} {\em ex-ante} (every agent gets in expectation a value of at least {$0.2219$-MMS}), and the support of this distribution includes allocations that are {$0.142$-MMS} {\em ex-post} (allocations in which every agent gets at least {$0.142$-MMS}). Our analysis of RECE1 is not tight, and it would be interesting to see if significantly better approximation ratios can be proved for it (both ex-ante and ex-post).}

Recall that our intended approach was to analyse BECE, and RECE1 serves as a lower bound for BECE. It would be desirable to find stronger lower bounds for BECE than those implied by RECE1, or alternatively, to be able to analyse those approximations to the MMS that RECE1 offers with only exponentially small probability. This seems necessary if one wishes to extend our results to XOS valuations (and later, also to {subadditive} valuations), because RECE1 has at best exponentially small probability of offering a constant factor approximation for the MMS in this case. 

Finally, note that there are EF1 allocations that cannot be obtained by any version of the ECE algorithm. For example, consider two agents and seven items. Agents have additive valuations, where $v_1(e_1) = v_1(e_2) = 8$ and $v_2(e_1) = v_2(e_2) = 13$, {and every} other item has value~3 for each of the agents. The allocation $A$ that gives {$A_1 = \{e_1, e_2\}$ to agent~1  and $A_2 = \{e_3, \dots, e_7\}$ to agent~2 is EF1 (and EFL). ECE cannot reach this allocation. Removing a single item from $A_1$, the allocation is EF1 only if agent~2 is the one who holds the bundle $\{e_2\}$, and after receiving also the removed item, agent~2 would not agree to swap bundles with agent~1. Likewise, removing a single item from $A_2$, the allocation is EF1 only if agent~2 is the one who holds the bundle $A_1$, and after inserting the removed item back to $A_2$, agent~2 would not agree to swap bundles with agent~1}. Hence, no order of the items in ECE can result in the allocation $A$. This shows that ECE cannot be used to generate all possible EF1 allocations, suggesting that the best approximation for MMS offered by EF1 allocations might not be achievable by BECE. Thus, it may be fruitful to develop other approaches for producing allocations that are simultaneously EF1 and approximate MMS.

\subsection{Additional related work} \label{sec:related}

Envy-freeness (EF) is a classical comparison-based fairness notion
\cite{Foley67, Varian74}. For indivisible goods, EF may fail to exist, motivating relaxations.
{\citet{conf/sigecom/LiptonMMS04} proved that an EF1 allocation exists for every profile of monotone valuations, and gave the envy-cycle elimination algorithm that computes an EF1 allocation.} {The stronger relaxation, envy-freeness up to any good (EFX), has been introduced in~\cite{journals/teco/CaragiannisKMPS19}. It is known to exist in special cases, such as if all agents have the same valuation function~\cite{journals/siamdm/PlautR20}, {or three agents with additive valuations~\cite{journals/jacm/ChaudhuryGM24}.}
{For submodular valuations, however, it was recently shown that EFX allocations need not exist~\cite{journals/corr/abs-2604-18216, mackenzie2026counterexamplesefxsubmodularsubadditive}.} Another strengthening of EF1 is envy-freeness up to one less-preferred good (EFL), introduced in {\citet{conf/aaai/BarmanBMN18}, who prove that EFL allocations always exist for additive valuations}, and as noted in~\cite{journals/corr/abs-2505-19961}, this extends to general monotone valuations.

The maximin share (MMS) was proposed by Budish~\cite{budish2011combinatorial}. Exact MMS allocations may not exist even for
additive valuations~\cite{journals/jacm/KurokawaPW18}, motivating approximation guarantees. For additive valuations, {Kurokawa et al.}
proved the existence of $2/3$-MMS allocations.
{A long list of subsequent work (such as~\cite{journals/talg/AmanatidisMNS17, journals/teco/BarmanK20, conf/ijcai/AkramiGST23, conf/soda/AkramiG24, conf/soda/HeidariKSS26, journals/corr/abs-2511-13056}) led to an improved
$\rho=7/9$ approximation.}
On the impossibility side, there are instances with additive
valuations in which no allocation achieves $\rho>39/40$~\cite{conf/wine/FeigeST21}.

For monotone submodular valuations, MMS approximation guarantees {that were obtained} include $0.21$-MMS in~\cite{journals/teco/BarmanK20}, $1/3$-MMS in~\cite{conf/sigecom/GhodsiHSSY18} and $10/27$-MMS in~\cite{journals/corr/abs-2303-12444}.

{Some} works consider {approximate MMS allocation when} $n$ is bounded.
{For $n=3$ agents with additive valuations, following a sequence of works~\cite{journals/talg/AmanatidisMNS17, journals/tcs/GourvesM19, journals/corr/abs-2205-05363}, the current best approximation ratio is $11/12$-MMS. For $n=4$ agents, $1/2$-MMS allocations exist for subadditive (and hence also submodular) valuations~\cite{conf/ijcai/0001CMS25}.}}

{A line of work studies combinations of EF-style relaxations (e.g., EF1/EFX/EFL) with share-based guarantees such as $\rho$-MMS.
For additive valuations, allocations that are simultaneously EF1 and $4/7$-MMS were obtained using a variant of GECE~\cite{journals/tcs/AmanatidisMN20}, and more recently, $2/3$-MMS with EF1 guarantees were achieved~\cite{conf/aaai/AkramiR25a}. {For the broader class of restricted MMS-feasible valuations, \cite{conf/sigecom/AshuriG25} show the existence of allocations that simultaneously satisfy EFL and MXS; the latter in particular implies $4/7$-MMS.} {A result that unifies earlier results and applies to all monotone valuations is allocations that are simultaneously EFL and RMMS (the {\em residual maximin share})~\cite{journals/corr/abs-2505-19961}, which for additive valuations imply $\frac{2}{3}$-MMS.}
Beyond additive valuations, fewer such combined results are known.
For subadditive valuations, EF1 allocations offering at least half of the maximum Nash social welfare are shown to exist~\cite{conf/soda/BarmanS26}.}

{\cite{conf/wine/BuLLST24} studied fair division of indivisible goods with comparison-based queries, and presented results for agents with additive valuations. Our RECE1 allocation algorithm gives constant approximate MMS guarantees within this query model, when valuations are submodular. Comparison queries between disjoint bundles suffice for RECE1. For submodular valuations, such queries are strictly less expressive than comparisons between possibly overlapping bundles. }

{\section{Preliminaries}} \label{sec:perliminaries}

\subsection{Fair Allocation Setting}

{Recall from the Introduction that $\agents=\{1,\dots,n\}$ is the set of agents, $\items$ is the finite set of indivisible items, and each $v_i:2^{\items}\to\mathbb{R}_{\ge 0}$ is normalized and monotone. An \emph{allocation} is an $n$-tuple $\mathbf{A}=(A_1,\dots,A_n)$ of pairwise-disjoint subsets of $\items$ with $\bigcup_{i=1}^n A_i = \items$.}

{We will use the following well-known monotonicity property of the MMS value, which holds for every monotone valuation.}

\begin{claim}\label{clm:not_too_small_mms}
For every monotone valuation function $v_i$, item set $\items$, and every subset $S\subseteq \items$ with $|S|=k<n$,
\[
MMS(\items\setminus S, v_i, n-k)\ \ge\ MMS(\items, v_i, n).
\]
\end{claim}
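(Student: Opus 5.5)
The plan is to start from an MMS partition of $\items$ for $v_i$ with $n$ parts and build from it a partition of $\items\setminus S$ into $n-k$ parts, each part of value at least $MMS(\items, v_i, n)$. Fix an MMS partition $(B_1,\dots,B_n)$ of $\items$ for $v_i$, so that $v_i(B_j)\ge \mu:=MMS(\items,v_i,n)$ for every $j\in[n]$.

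Since $|S|=k$, the items of $S$ meet at most $k$ of the bundles $B_j$. Let $J\subseteq[n]$ be the set of indices $j$ with $B_j\cap S\neq\emptyset$. Then $|J|\le k$. If $|J|<k$, enlarge $J$ arbitrarily to a set of exactly $k$ indices; this is possible because $k<n$. The remaining index set $[n]\setminus J$ then has exactly $n-k\ge 1$ elements, and every $B_j$ with $j\notin J$ is disjoint from $S$, so $B_j\subseteq \items\setminus S$.

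Next, form a partition of $\items\setminus S$ into $n-k$ parts. Take the bundles $B_j$ for $j\notin J$. Then add all the leftover items $\bigcup_{j\in J}(B_j\setminus S)$ to one of them, say to the bundle with the smallest index in $[n]\setminus J$. This gives a valid partition of $\items\setminus S$ into $n-k$ parts. Each part contains some original $B_j$ with $j\notin J$, so by monotonicity of $v_i$ each part has value at least $v_i(B_j)\ge\mu$. The minimum over the parts is therefore at least $\mu$, and by the definition of MMS as a maximum over partitions we get $MMS(\items\setminus S,v_i,n-k)\ge\mu$.

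There is no real obstacle here. The only points that need care are the bookkeeping when $|J|<k$, where we pad $J$ so that exactly $n-k$ bundles remain, and the fact that the leftover items must be placed somewhere. Monotonicity handles the second point. No submodularity is needed, only monotonicity, which matches the statement.
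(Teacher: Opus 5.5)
Your proposal is correct and takes the same approach as the paper, which justifies the claim in one line: at least $n-k$ bundles of an MMS partition avoid $S$, and monotonicity absorbs the leftover items. Your version just writes out the bookkeeping (padding $J$ to exactly $k$ indices and placing the leftover items in one of the surviving bundles) that the paper leaves implicit.
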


{The claim follows easily from the fact that at least $n-k$ bundles of the original MMS partition for $v_i$ do not contain any item from $S$.}

{
\subsection{Submodular valuations}

{Recall from Definition~\ref{def:valuation-classes} the definition of \emph{submodular} valuations (diminishing marginal values), and that every submodular valuation is subadditive.}

{We use two well-known properties of monotone submodular valuations.}

{\begin{claim}[Chain-marginal additive lower bound]\label{clm:chain-marginal-lower-bound}
Let $v$ be a monotone submodular valuation, let $B\subseteq\items$, and fix any ordering $e_1,e_2,\ldots,e_k$ of the items in $B$. Define the chain prefixes $P_r:=\{e_1,\ldots,e_r\}$ (with $P_0:=\emptyset$) and the chain-order marginals
\[
w(e_r)\ :=\ v(e_r\mid P_{r-1}),\qquad r=1,\ldots,k.
\]
Then the additive function $w(e)$ is a pointwise lower bound on $v$ over subsets of $B$:
\[
\sum_{e\in S} w(e)\ \le\ v(S)\qquad\text{for every }S\subseteq B,
\]
with equality on $S=B$.
\end{claim}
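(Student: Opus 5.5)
The plan is to prove the inequality $\sum_{e\in S} w(e)\le v(S)$ by induction along the fixed ordering restricted to $S$, using submodularity to compare marginals. Equality on $S=B$ is immediate: the sum $\sum_{r=1}^k w(e_r)=\sum_{r=1}^k\bigl(v(P_r)-v(P_{r-1})\bigr)$ telescopes to $v(P_k)-v(P_0)=v(B)-v(\emptyset)=v(B)$, by normalization.

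For a general $S\subseteq B$, list the elements of $S$ in the inherited order as $e_{r_1},e_{r_2},\ldots,e_{r_s}$ with $r_1<r_2<\cdots<r_s$. Let $S_t:=\{e_{r_1},\ldots,e_{r_t}\}$, with $S_0:=\emptyset$. Telescoping again gives $v(S)=\sum_{t=1}^{s} v(e_{r_t}\mid S_{t-1})$. The key observation is that $S_{t-1}\subseteq P_{r_t-1}$, because every element of $S_{t-1}$ has index strictly less than $r_t$, and that $e_{r_t}\notin P_{r_t-1}$. The diminishing-marginal-values definition of submodularity (Definition~\ref{def:valuation-classes}), applied with the smaller set $S_{t-1}$ and the larger set $P_{r_t-1}$, then gives
\[
v(e_{r_t}\mid S_{t-1})\ \ge\ v(e_{r_t}\mid P_{r_t-1})\ =\ w(e_{r_t}).
\]
Summing over $t=1,\ldots,s$ yields $v(S)\ge\sum_{e\in S}w(e)$. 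The empty set is handled trivially, since $v(\emptyset)=0$.

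I do not expect a real obstacle here. The only point needing care is to apply submodularity with the prefixes of $S$ taken \emph{in the same order} as the chain, which guarantees the containment $S_{t-1}\subseteq P_{r_t-1}$. Monotonicity is not needed for this inequality. It is worth noting, however, that submodularity together with monotonicity ensures each $w(e_r)\ge 0$. This nonnegativity is what makes $w$ a legitimate nonnegative additive valuation, which is the form in which it is used later, for instance in the proof of Lemma~\ref{lem:MMSoverAdditive}.
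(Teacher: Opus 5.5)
Your proof is correct and is the standard argument: telescoping gives equality on $B$, and for general $S$ the containment $S_{t-1}\subseteq P_{r_t-1}$ lets diminishing marginals give $v(e_{r_t}\mid S_{t-1})\ge w(e_{r_t})$. The paper itself states this claim as a well-known fact without writing out a proof, so there is nothing to reconcile; one small remark is that nonnegativity of $w$ already follows from monotonicity alone, since $P_{r-1}\subseteq P_r$.
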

}

{The second is the standard averaging bound: a uniformly random subset of a bundle has expected value at least proportional to its size.}

{
\begin{claim}[Averaging bound for submodular valuations]\label{clm:avg-submodular}
Let $v$ be a monotone submodular valuation, let $B$ be a set of $k$ items, and let $S \subseteq B$ be a uniformly random subset of size $r$ (where $0 \le r \le k$). Then
\[
\mathbb{E}[v(S)] \ge \frac{r}{k}\,v(B).
\]
\end{claim}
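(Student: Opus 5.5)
The plan is to prove Claim~\ref{clm:avg-submodular} by reducing it to the additive case through Claim~\ref{clm:chain-marginal-lower-bound}. The point is that we never need to understand how $v$ behaves on a random subset; a single additive lower bound that is tight on $B$ is enough.

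First, fix an arbitrary ordering $e_1,\dots,e_k$ of $B$. Define the chain-order marginals $w(e_r)=v(e_r\mid P_{r-1})$, which are nonnegative by monotonicity. Claim~\ref{clm:chain-marginal-lower-bound} then gives $\sum_{e\in S}w(e)\le v(S)$ for every $S\subseteq B$, with $\sum_{e\in B}w(e)=v(B)$.

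Second, apply this pointwise bound to the random subset $S$ and take expectations:
\[
\mathbb{E}[v(S)]\ \ge\ \mathbb{E}\Bigl[\sum_{e\in S}w(e)\Bigr]\ =\ \sum_{e\in B}w(e)\,\Pr[e\in S].
\]
The equality is linearity of expectation over indicator variables. Third, by symmetry of a uniformly random $r$-subset, every $e\in B$ has $\Pr[e\in S]=\binom{k-1}{r-1}/\binom{k}{r}=r/k$. The right-hand side therefore equals $(r/k)\sum_{e\in B}w(e)=(r/k)\,v(B)$, as required.

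The edge cases need a brief check. When $r=0$ both sides are $0$. When $k=0$ the statement is vacuous, since then $r=0$ and $v(\emptyset)=0$.

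None of these steps is a real obstacle once Claim~\ref{clm:chain-marginal-lower-bound} is available; all the submodularity content sits in that claim. If I had to justify that claim as well, I would argue as follows. For $S=\{e_{r_1},\dots,e_{r_s}\}$ with $r_1<\dots<r_s$, telescope $v(S)$ along this sub-chain, so that $v(S)=\sum_{t=1}^{s}v(e_{r_t}\mid\{e_{r_1},\dots,e_{r_{t-1}}\})$. Each term is at least $v(e_{r_t}\mid P_{r_t-1})=w(e_{r_t})$ by diminishing marginals, because $\{e_{r_1},\dots,e_{r_{t-1}}\}\subseteq P_{r_t-1}$.
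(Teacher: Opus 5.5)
Your proof is correct and is essentially the paper's argument: the paper also takes an additive minorant of $v$ on $B$ that is tight on $B$ (obtained from Lemma~\ref{lem:MMSoverAdditive}) and then applies linearity of expectation with $\Pr[e\in S]=r/k$. Taking the minorant directly from Claim~\ref{clm:chain-marginal-lower-bound} is just the more elementary source: it is Step~1 of that lemma's proof, the lemma's values $v'(\{e\})$ are exactly your chain marginals $w(e)$, and the truncation machinery that makes $v'$ globally submodular is not needed here.
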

The proof appears in Appendix~\ref{app:expected_lower_sub}.
}

}
{
\subsection{Martingales and Freedman's inequality}\label{subsec:prob-tools}
}

A \emph{filtration} $\{\mathcal{F}_k\}_{k=0}^m$ is a nested sequence of $\sigma$-algebras $\mathcal{F}_0 \subseteq \mathcal{F}_1 \subseteq \cdots \subseteq \mathcal{F}_m$, representing increasing information over time.
A sequence of random variables $\{X_k\}_{k=0}^m$ adapted to $\{\mathcal{F}_k\}$ is a \emph{martingale} if $\kaliEXPECTED[X_k \mid \mathcal{F}_{k-1}] = X_{k-1}$ for all $k$, and a \emph{submartingale} if $\kaliEXPECTED[X_k \mid \mathcal{F}_{k-1}] \ge X_{k-1}$.

We use a concentration inequality for submartingales known as Freedman's inequality, which exploits the predictable quadratic variation rather than worst-case bounded increments.

\begin{theorem}[Freedman's inequality (one-sided)~\cite{freedman1975tail, fan2015exponential}]\label{thm:freedman}
{Let $\{Z_k\}_{k=0}^m$ be a submartingale with respect to a filtration $\{\mathcal{F}_k\}$, with $Z_0 = 0$ and increments $D_k := Z_k - Z_{k-1}$ satisfying $D_k \ge -c$ a.s.\ for all $k$.
Define the quadratic variation $V_m := \sum_{k=1}^m D_k^2$.
Then for every $t,v^2 > 0$, 
\[
\Prb\left[Z_m \le -t \text{ and } V_m \le v^2\right]
\ \le\ \exp\left(-\frac{t^2}{2(v^2 + ct)}\right).
\]
}

\end{theorem}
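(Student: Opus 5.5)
The plan is the standard exponential-supermartingale argument, set up for the downward deviation of $Z$ and using the realized quadratic variation $V_m$ in place of the predictable one. Fix $\lambda\in(0,1/c)$ and write $b:=\lambda c<1$. The target is a function $\phi(\lambda)\ge 0$ such that
\[
S_k\ :=\ \exp\Bigl(-\lambda Z_k-\phi(\lambda)\,V_k\Bigr),\qquad V_k:=\sum_{l\le k}D_l^2,
\]
is a nonnegative supermartingale with $S_0=1$. Since $S_k=S_{k-1}\exp(-\lambda D_k-\phi(\lambda)D_k^2)$, it suffices to show
\[
\kaliEXPECTED\bigl[\exp(-\lambda D_k-\phi(\lambda)D_k^2)\,\big|\,\mathcal{F}_{k-1}\bigr]\ \le\ 1 .
\]

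I would get this from a pointwise inequality in the variable $y:=-\lambda D_k$, which satisfies $y\le b$ because $D_k\ge -c$. The inequality I aim for is
\[
e^{\,y-h y^2}\ \le\ 1+y\qquad\text{with } h:=\frac{1}{2(1-b/3)}\ \text{or a similar choice}.
\]
Taking conditional expectations then gives at most $1+\kaliEXPECTED[y\mid\mathcal{F}_{k-1}]\le 1$, because the submartingale property gives $\kaliEXPECTED[D_k\mid\mathcal{F}_{k-1}]\ge 0$. This choice corresponds to $\phi(\lambda)=h\lambda^2$. On $[-1,b]$ the inequality is a calculus check, $y-hy^2\le\ln(1+y)$: compare Taylor coefficients, using that the cubic term of $\ln(1+y)$ is controlled for $y\le b<1$.

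This step is the main obstacle. Because $D_k$ is unbounded above, $y$ is unbounded below, and for $y\le -1$ the right-hand side $1+y$ is nonpositive while the left-hand side is positive. So the naive inequality cannot hold on the whole range. My first attempt would be to replace $1+y$ by a majorant of the form $1+y+\kappa\,(y^2\wedge\text{const})$ whose extra term is absorbed by enlarging $h$ (i.e.\ putting a slightly larger coefficient on $D_k^2$ in the exponent). The point is that when $y\le -1$ the term $-hy^2$ already forces $e^{y-hy^2}\le e^{-1-h}$, so some constant slack is available. If that bookkeeping does not close, I would fall back on the form of the exponential inequality proved in~\cite{fan2015exponential} for differences bounded on one side. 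That form yields exactly the denominator $2(v^2+ct)$, and I would invoke it as a black box, since the theorem is stated with that citation.

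Given the supermartingale, the conclusion is routine. On the event $\{Z_m\le -t,\ V_m\le v^2\}$ we have $S_m\ge\exp(\lambda t-\phi(\lambda)v^2)$. Hence, by Markov's inequality and $\kaliEXPECTED[S_m]\le 1$, the probability of this event is at most $\exp(-\lambda t+\phi(\lambda)v^2)$. Finally I would optimize over $\lambda\in(0,1/c)$, taking $\lambda=t/(v^2+ct)$, which lies in $(0,1/c)$. With $\phi(\lambda)\le \lambda^2/(2(1-\lambda c))$, this choice gives $\phi(\lambda)v^2\le \frac{t^2v^2}{2(v^2+ct)v^2}=\frac{t^2}{2(v^2+ct)}$ and $\lambda t=\frac{t^2}{v^2+ct}$. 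The exponent is therefore at most $-\frac{t^2}{2(v^2+ct)}$, which is the stated bound.
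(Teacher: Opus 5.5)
There is a genuine gap, and it cannot be closed. The obstacle you flag at $y\le -1$ is not a bookkeeping issue: the statement as written, with the \emph{realized} variation $V_m=\sum_k D_k^2$ and the lower bound $D_k\ge -c$, is false. Let $D_1,\dots,D_m$ be i.i.d.\ with $\Prb[D_k=-c]=1-p$ and $\Prb[D_k=c(1-p)/p]=p$. Then $\kaliEXPECTED[D_k]=0$, so $Z_k:=\sum_{l\le k}D_l$ is a martingale with $Z_0=0$ and $D_k\ge -c$. The event $\{D_1=\cdots=D_m=-c\}$ has probability $(1-p)^m$, and on it $Z_m=-mc$ and $V_m=mc^2$. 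With $t:=mc$ and $v^2:=mc^2$ the claimed bound is $e^{-m/4}$, while the left-hand side is at least $(1-p)^m\to 1$ as $p\to 0$. Already $m=1$, $p=0.1$ gives $0.9>e^{-1/4}$. The rare large upward jumps carry the drift but never register in $V_m$ on the bad event.

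Concretely, your supermartingale step fails. In this example $\kaliEXPECTED[\exp(-\lambda D_k-\phi D_k^2)]\ge(1-p)e^{\lambda c-\phi c^2}$, which exceeds $1$ for small $p$ unless $\phi\ge\lambda/c$. But that choice makes your final bound $\exp(-\lambda t+\phi v^2)$ trivial whenever $v^2\ge ct$. The majorant $1+y+\kappa(y^2\wedge C)$ cannot be absorbed into the realized $D_k^2$ either: the extra term is controlled only through its conditional expectation. Compensating it leads to Freedman's original form, with $\sum_k\kaliEXPECTED[D_k^2\mid\mathcal{F}_{k-1}]$ in place of $V_m$. The black-box fallback cannot help, since no correct theorem implies a false one. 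Separately, your calculus check on $[-1,b]$ already fails near $y=-1$, where $\ln(1+y)\to-\infty$.

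What your argument does prove, essentially verbatim, is the version with the one-sided bound reversed. If $D_k\le c$, then $y=-\lambda D_k\ge -b$, and $e^{y-hy^2}\le 1+y$ holds on all of $[-b,\infty)$ with $h=1/(2(1-b))$. Note that $1/(2(1-b/3))$ is too small on $[-b,0]$. The proof uses $\ln(1+y)\ge y-y^2/2$ for $y\ge 0$, and $\ln(1-u)=-\sum_{k\ge1}u^k/k\ge -u-\frac{u^2}{2(1-u)}$ for $u\in[0,b]$. Hence $S_k$ is a supermartingale with $\phi(\lambda)=\lambda^2/(2(1-\lambda c))$, and your choice $\lambda=t/(v^2+ct)$ gives exactly $\exp\bigl(-t^2/(2(v^2+ct))\bigr)$.

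This corrected statement suffices for the paper. Lemma~\ref{lem:Mk-bounded-var}(a) gives $|D_k|\le\alpha$, in particular $D_k\le\alpha$, and part (b) bounds the realized variation deterministically. The paper gives no proof of this theorem, only citations, and its hypothesis should read $D_k\le c$. The alternative fix, switching to the predictable variation, would not match what Lemma~\ref{lem:Mk-bounded-var}(b) provides.
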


\section{The RECE1 Algorithm} \label{sec:RECE1}

\subsection{The original {\em envy cycle elimination} (ECE) algorithm}
\label{sec:ECE}

We first recall the ECE algorithm of~\cite{conf/sigecom/LiptonMMS04}.
{We begin with the notion of the \emph{envy graph}.}
\begin{definition} \label{def:envy_graph}
    For a partial allocation $\hat{A}$, the envy graph \(G(\hat{A})\) {is the directed graph with} vertex set \(V=\agents\).
    For vertices $i,j\in V$, we include a directed edge \((i,j)\in E\) if
    \[
     v_i(\hat{A}_j) > v_i(\hat{A}_i).
    \]
    Equivalently, we say that \(i\) \emph{envies} \(j\) when $v_i(\hat{A}_j) > v_i(\hat{A}_i)$.
\end{definition}

{The following claim shows that any partial allocation can be modified so as to produce a vertex with no incoming edges in the envy graph.}

\begin{claim} \label{clm:envy_cycle_sink}
    For every (partial) allocation $\hat{A}$ {one can obtain a new (partial) allocation $\hat{A}'$, by permuting bundles among agents, with the following properties:}
    \begin{itemize}
        \item For every agent $i$, $v_i(\hat{A}'_i)\ge v_i(\hat{A}_i)$.
        \item {The envy graph $G(\hat{A}')$ is acyclic (and in particular has a vertex with no incoming edges).}
        {\item If $\hat{A}$ is EF1, then so is $\hat{A}'$.}
    \end{itemize}
\end{claim}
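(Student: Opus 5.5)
The plan is to prove Claim~\ref{clm:envy_cycle_sink} by the standard cycle-rotation procedure, using a potential argument for termination. Starting from $\hat{A}$, we repeat the following. While the envy graph $G(\hat{A})$ contains a directed cycle $i_1\to i_2\to\cdots\to i_\ell\to i_1$, we rotate bundles along the cycle: each $i_s$ receives the bundle $\hat{A}_{i_{s+1}}$ (indices mod $\ell$), and all other agents keep their bundles. Each step only permutes the existing bundles, so the collection of bundles, as sets, never changes. Let $\hat{A}'$ be the allocation reached when no cycle remains. Its envy graph is acyclic, and any finite acyclic digraph has a vertex of in-degree zero: follow edges backwards, and since no vertex can repeat, the walk must stop.

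Next we check the first bullet, monotonicity of each agent's own value. In a rotation, each agent on the cycle moves to a bundle she strictly envied, so her value strictly increases. Agents off the cycle keep their bundles. Hence $v_i(\text{own bundle})$ is non-decreasing for every $i$ throughout the process, which gives $v_i(\hat{A}'_i)\ge v_i(\hat{A}_i)$.

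The main point to verify is termination. I would use the potential $\Phi=\sum_i v_i(\text{own bundle})$. It strictly increases with each rotation, since at least one agent (in fact every agent on the cycle) strictly gains. Only finitely many assignments of the fixed set of bundles to agents exist, at most $n!$ of them. A strictly increasing potential cannot revisit an assignment, so the process halts after finitely many rotations.

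Finally, EF1 is preserved. Fix agents $i,j$ in $\hat{A}'$, and suppose $\hat{A}'_j\neq\emptyset$. The bundle $\hat{A}'_j$ is some bundle $\hat{A}_k$ of the original allocation. If $k=i$, then $v_i(\hat{A}'_i)\ge v_i(\hat{A}_i)=v_i(\hat{A}'_j)$ and there is nothing to show. Otherwise, EF1 of $\hat{A}$ gives an item $g\in\hat{A}_k$ with $v_i(\hat{A}_i)\ge v_i(\hat{A}_k\setminus\{g\})$. Combined with $v_i(\hat{A}'_i)\ge v_i(\hat{A}_i)$, this yields $v_i(\hat{A}'_i)\ge v_i(\hat{A}'_j\setminus\{g\})$. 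So EF1 is inherited because EF1 depends only on an agent's own value and the contents of the other bundles, and here the bundle contents are unchanged while each agent's own value only goes up.
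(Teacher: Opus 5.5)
Your proposal is correct and follows the paper's proof essentially step by step: cycle rotation, each agent's own value never decreasing, an acyclic digraph having a source, and EF1 inherited because bundle contents are unchanged while own values only go up.

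The one difference is termination. You bound the number of rotations by the at most $n!$ assignments, using a strictly increasing global potential. The paper instead observes that each agent's own value can strictly increase at most $n-1$ times among the $n$ fixed bundles, giving at most $n(n-1)$ rotations. That polynomial bound is not needed for the claim as stated, but it is what supports the paper's polynomial running-time remark, and your $n!$ bound does not.
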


For completeness, we present the proof of this claim.

{We first define the cycle rotation process for a partial allocation $\hat{A}$.}
\begin{definition}[Cycle Rotation Process] \label{def:cycle_rot}

While $G(\hat{A})$ contains a directed cycle {$C=(i_0,i_1,\dots,i_{k-1},i_0)$}: 
\begin{itemize}
    \item For every agent $j\notin C$, set $\hat{A}'_j=\hat{A}_j$.
    \item For every {$\ell\in\{0,\dots,k-1\}$}, set $\hat{A}'_{i_\ell}=\hat{A}_{i_{\ell+1}}$ (where indices are modulo $k$).
    \item Update $\hat{A}\gets \hat{A}'$.
\end{itemize}
\end{definition}

\begin{proof} [Proof of Claim \ref{clm:envy_cycle_sink}]

By the definition of $G(\hat{A})$, along a directed edge $(i_\ell,i_{\ell+1})$ we have
$v_{i_\ell}(\hat{A}_{i_{\ell+1}})>v_{i_\ell}(\hat{A}_{i_\ell})$; hence after the reassignment on the cycle,
every agent on the cycle strictly increases her value for her own bundle, and agents outside the cycle are unchanged. 
Therefore, the process cannot continue indefinitely {(since bundles are only reassigned---not modified---each of the $n$ agents can increase her value at most $n-1$ times, so the process terminates in at most $n(n-1)$ rotations)}, and it terminates at an allocation $\hat{A}'$ whose envy graph is acyclic.
Every finite directed acyclic graph has a vertex with no incoming edges, {establishing the first two properties. For the third, if $\hat{A}$ is EF1, then for any pair $(i,j)$ the bundle $\hat{A}'_j$ equals $\hat{A}_k$ for some~$k$, and there exists $g \in \hat{A}_k$ with $v_i(\hat{A}_i) \ge v_i(\hat{A}_k \setminus \{g\})$. Since $v_i(\hat{A}'_i) \ge v_i(\hat{A}_i)$, we get $v_i(\hat{A}'_i) \ge v_i(\hat{A}'_j \setminus \{g\})$, so $\hat{A}'$ is also EF1.}
\end{proof}

Now we introduce the full algorithm.
\paragraph{ECE Algorithm} \label{alg:ECE}
\begin{itemize}
  \item Initialize $\hat{A}_i\gets \emptyset$ for all $i\in\agents$.
  \item While there exists an unallocated item $e\in \items\setminus \bigcup_{k=1}^n \hat{A}_k$:
  \begin{enumerate}
    \item Choose an arbitrary agent $i$ such that her vertex in $G(\hat{A})$ has no incoming edges.
    \item Choose an arbitrary item $e\in \items\setminus \bigcup_{k=1}^n \hat{A}_k$.
    \item Allocate $e$ to $i$'s bundle: $\hat{A}_i\gets \hat{A}_i\cup\{e\}$.
    \item {Update $\hat{A}$ using the Cycle Rotation Process (Claim~\ref{clm:envy_cycle_sink}).} 
  \end{enumerate}
  \item Output the full allocation $A=\hat{A}$.
\end{itemize}

{The following proposition is proved in~\cite{conf/sigecom/LiptonMMS04}.}
\begin{proposition} \label{prp:ECE_EF1}
    The full allocation $A$ produced by ECE is EF1.
\end{proposition}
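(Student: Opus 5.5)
The plan is induction on the number of rounds of the ECE algorithm, with the invariant that the current partial allocation $\hat{A}$ is EF1 at the start of every round and after every step inside a round. At the start all bundles are empty, so the invariant holds vacuously: the EF1 condition is trivially met whenever $A_j=\emptyset$. The final output is the partial allocation after the last round, so the invariant yields Proposition~\ref{prp:ECE_EF1}. One round consists of two steps, adding an item and then running the Cycle Rotation Process, and I will show that each step preserves EF1.

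\emph{Step 1 (adding an item).} Suppose $\hat{A}$ is EF1 and item $e$ is given to an agent $j$ whose vertex in $G(\hat{A})$ has no incoming edges. Only $\hat{A}_j$ changes, so I only need to check pairs $(i,j)$ with $i\neq j$. Because no edge enters $j$, we have $v_i(\hat{A}_i)\ge v_i(\hat{A}_j)$. The new bundle is $\hat{A}_j\cup\{e\}$, and removing $g:=e$ from it gives $v_i(\hat{A}_i)\ge v_i((\hat{A}_j\cup\{e\})\setminus\{e\})=v_i(\hat{A}_j)$. So the pair $(i,j)$ satisfies EF1 with witness item $e$. For every other pair $(i,k)$ neither bundle changed, and the old witness still works. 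This step needs neither monotonicity nor any other property of the valuations.

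\emph{Step 2 (cycle rotation).} Here I invoke the third bullet of Claim~\ref{clm:envy_cycle_sink}: the Cycle Rotation Process maps an EF1 partial allocation to an EF1 partial allocation. The idea behind it is that rotations only permute bundles, so each new $\hat{A}'_j$ equals some old $\hat{A}_k$. Each agent's value for her own bundle does not decrease. Hence the old EF1 witness $g\in\hat{A}_k$ for the pair $(i,k)$ still works for $(i,j)$ in the new allocation. The claim also guarantees that the resulting envy graph is acyclic, so a vertex with no incoming edges always exists, and the algorithm is well defined in the next round.

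I expect no real obstacle in this argument. The only point needing care is that the "no incoming edge" property is used at the moment of insertion and with respect to the current bundles, which must be the bundles after the previous rotation. Step 1 must also be applied to the correct recipient bundle: if the recipient receives an item and her own envy changes, only pairs where her bundle is on the envied side require a new witness. Pairs where she is the envious agent only get easier, since her own value cannot decrease by monotonicity.
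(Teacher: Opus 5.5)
Your proof is correct and follows the paper's argument. EF1 is kept as an invariant: the insertion step holds because the recipient is unenvied (witness $g=e$), and the rotation step holds because bundles are only permuted and no agent's value for her own bundle decreases. One small inconsistency: your Step~1 remark that no property of the valuations is needed is not quite right, since pairs where the recipient is the envying agent rely on monotonicity, as you note yourself at the end.
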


{
For completeness, the proof of Proposition~\ref{prp:ECE_EF1} appears in Section~\ref{app:ECE_EF1} in the appendix.
}

\subsection{The RECE1 Algorithm}

We now define RECE1, which is a version of ECE in which {each agent's first received item is chosen greedily}, and the remaining items are allocated in a uniformly random order.

\paragraph{RECE1 Algorithm} \label{alg:RECE1}
\begin{enumerate}
    {\item \textbf{Greedy choice phase ({rounds} $t=1,\dots,n$).}
    Initialize $\hat{A}_i\gets\emptyset$ for all $i\in\agents$, and let $U\gets \agents$ be the set of agents that have not yet received an item.
    For each round $t=1,2,\dots,n$:
    \begin{enumerate}
        \item Let $R \gets \items\setminus \bigcup_{k=1}^n \hat{A}_k$ be the set of unallocated items.
        \item {Choose an arbitrary agent $i_t \in U$.}
        {\item Let $e_t \in \arg\max_{e\in R}\, v_{i_t}(e)$.}
        \item Allocate $e_t$ to agent $i_t$: $\hat{A}_{i_t}\gets \{e_t\}$, and update $U\gets U\setminus\{i_t\}$.
    \end{enumerate}}

    {Note that $G(\hat{A})$ is acyclic after the greedy phase (each agent's greedy choice ensures she does not envy any later-choosing agent), so step {(2)}(a) below is well-defined {for $t = n+1$.} }

    \item \textbf{Random-item phase (rounds $t=n+1,\dots,m$).}
     While {$\items\setminus \bigcup_{k=1}^n \hat{A}_k \neq \emptyset$}:
    \begin{enumerate}
        \item Choose an arbitrary agent $i$ such that her vertex in $G(\hat{A})$ has no incoming edges.
        \item Choose an item $e_t \in \items\setminus \bigcup_{k=1}^n \hat{A}_k$ uniformly at random.
        \item Allocate $e_t$ to $i$'s bundle: $\hat{A}_i\gets \hat{A}_i\cup\{e_t\}$.
        \item  {Update $\hat{A}$ using the Cycle Rotation Process (Claim~\ref{clm:envy_cycle_sink}).} 
        \item Update $t\gets t+1$.
    \end{enumerate}

    \item Output the full allocation $A=\hat{A}$.
\end{enumerate}

\begin{proposition} \label{prop:RECE1_EFL_as}
The allocation produced by RECE1 is EFL.
\end{proposition}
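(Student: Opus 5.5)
The plan is to combine Proposition~\ref{prp:ECE_EF1} with a direct check of the extra EFL condition. RECE1 is an instantiation of ECE. The greedy phase gives each agent one item. At that point the envy graph is acyclic: an agent choosing at round $t$ took her favourite remaining item, so she does not envy any agent who chose later. Hence allocating in the greedy phase never violates the structure ECE relies on. In the random phase, every item goes to an agent with no incoming edges, followed by cycle rotation. So the invariant ``the partial allocation is EF1'' is maintained exactly as in the proof of Proposition~\ref{prp:ECE_EF1}: giving an item to an unenvied agent preserves EF1, and by Claim~\ref{clm:envy_cycle_sink} rotations preserve EF1. Thus the final allocation is EF1.

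For EFL, fix agents $i,j$ with $|A_j|\ge 2$. Here $A_j$ denotes the final bundle held by $j$; it may have moved between agents through rotations, but its contents only grew over time. Let $g$ be the last item added to this bundle. Since the bundle has at least two items, $g$ was added in the random phase. At that moment the bundle's holder had no incoming edge, so whoever held bundle $A_i$'s predecessor (agent $i$'s bundle at that time, call it $\hat A_i$) satisfied $v_i(\hat A_i)\ge v_i(A_j\setminus\{g\})$. Agent $i$'s value for her own bundle never decreases afterwards, since additions and rotations only increase it. Hence $v_i(A_i)\ge v_i(A_j\setminus\{g\})$.

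For the second condition, let $e_i$ be the greedy first item agent $i$ received. The item $g$ was still unallocated when $i$ made her greedy choice, because all random-phase items were. So $v_i(g)\le v_i(e_i)$. Moreover, by monotonicity and the non-decrease of $i$'s own value from the moment she received $e_i$, we get $v_i(e_i)\le v_i(A_i)$. Note that $e_i$ need not lie in $A_i$ after rotations, but her value only went up from $v_i(\{e_i\})$. This gives $v_i(g)\le v_i(A_i)$, establishing EFL.

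The main subtlety is bookkeeping under rotations. I will track bundles, not agents, and use two facts: bundle contents are never split, and each agent's own value is monotone non-decreasing along the run. With these, the ``last item'' argument and the greedy comparison go through as above.
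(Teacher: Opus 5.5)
Your proof is correct and follows the same route as the paper. You get EF1 from ECE, and for $|A_j|\ge 2$ you take $g$ to be the last item added to $A_j$. Because that recipient was unenvied, $v_i(A_i)\ge v_i(A_j\setminus\{g\})$, and the greedy first choice gives $v_i(g)\le v_i(e_i)\le v_i(A_i)$. Your extra bookkeeping on rotations (tracking bundles, and using that each agent's own value never decreases) makes explicit what the paper's sketch leaves implicit.
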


{Proposition~\ref{prop:RECE1_EFL_as} holds for every ECE variant in which each agent's first item is chosen greedily.} {See for example~\cite{conf/aaai/BarmanBMN18}, {which proves the claim in the additive case}. For completeness, we present a proof of Proposition~\ref{prop:RECE1_EFL_as} in Section~\ref{app:RECE1_EFL_as} in the appendix.}

\section{A Lower Bound on the Expected Welfare} \label{sec:W_i_expected}

{In this section we prove a lower bound on the expected fraction of the MMS received by each agent.} First,
as is standard in similar settings, scaling of the valuation function allows us to assume that $MMS_i = 1$ for every agent $i$. \begin{definition} [Scaling]
    For submodular valuation function $v$ with MMS value $\gamma > 0$,  define the scaled function $v'$ as the function satisfying $v'(S)= \frac{v(S)}{\gamma}$ for every set $S \subseteq \items$.
\end{definition}

We may use the scaled function instead of $v_i$, because the property of an allocation being $\alpha$-MMS and/or EFL is not affected by multiplicative scaling of valuations.

As explained in Section~\ref{sec:introduction}, we introduce functions $W_i$ and $W'_i$ that assist in the analysis of $v_i(A_i)$.

\subsection{The functions $W_i$ and $W_i'$}\label{subsec:proxies}

Let $A$ be the \emph{final} allocation output by RECE1 (so $A=\hat{A}$ at termination).
Define
\[
W_i\ :=\ \sum_{j=1}^n v_i(A_j).
\]
We refer to $W_i$ as the {\em welfare} {with respect to $v_i$.} Next, we define an auxiliary function $W'_i$ as follows.

{After scaling so that $MMS_i = 1$, we {apply} two transformations. First, define the capped valuation $\hat v_i(S):=\min\{v_i(S),1\}$, so that each MMS bundle has $\hat v_i$-value exactly~$1$; truncating a nonnegative monotone submodular valuation at a constant preserves monotonicity and submodularity, so $\hat v_i$ is monotone submodular. Second, apply Lemma~\ref{lem:MMSoverAdditive} to $\hat v_i$ and the MMS partition $B_1,\ldots,B_n$ to obtain a monotone submodular function $v'_i$ satisfying: (i)~$v'_i(S) \le \hat v_i(S) \le v_i(S)$ for every $S \subseteq \items$; (ii)~$v'_i(B_j)=\hat v_i(B_j)= 1$ for each MMS bundle $B_j$; and (iii)~$v'_i$ is additive within each $B_j$. We now define a {proxy welfare}.}

{\begin{definition}[Per-item contribution and $W'_i$]\label{def:tilde-v}
Fix agent $i$ and the allocation produced by RECE1.
For each item $e\in \items$, let $A_e$ denote the bundle receiving $e$,
and let $A_e^{<e}$ denote the contents of $A_e$ just before $e$ is allocated.
The \emph{contribution} of $e$ to $W'_i$ is
\[
c_i(e)\ :=\
\begin{cases}
{v'_i(e \mid A_e^{<e})} & \text{if }{v'_i(A_e^{<e})} < \alpha, \\[4pt]
{v'_i(e)} & \text{if } {v'_i(A_e^{<e})} \ge \alpha.
\end{cases}
\]
Define
\[
W_i'\ :=\ \sum_{e\in \items} c_i(e).
\]
\end{definition}}

\begin{observation}\label{obs:block-not-opening}
{By submodularity, {$c_i(e) \ge v'_i(e \mid A_e^{<e})$} always.
Moreover, {$c_i(e) \le v'_i(e)$} always.}
\end{observation}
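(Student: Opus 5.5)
Both inequalities of Observation~\ref{obs:block-not-opening} are immediate from Definition~\ref{def:tilde-v}, and I would prove them by splitting on which branch of the definition of $c_i(e)$ applies. The only facts needed are that $v'_i$ is monotone submodular and normalized, as guaranteed by Lemma~\ref{lem:MMSoverAdditive} applied to the capped valuation $\hat v_i$.

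\emph{Lower bound $c_i(e)\ge v'_i(e\mid A_e^{<e})$.} In the case $v'_i(A_e^{<e})<\alpha$, the definition gives $c_i(e)=v'_i(e\mid A_e^{<e})$, so equality holds. In the case $v'_i(A_e^{<e})\ge\alpha$, we have $c_i(e)=v'_i(e)=v'_i(e\mid\emptyset)$, using $v'_i(\emptyset)=0$. Since $\emptyset\subseteq A_e^{<e}$ and $e\notin A_e^{<e}$ (the item has not yet been allocated), diminishing marginal values from Definition~\ref{def:valuation-classes} yield $v'_i(e\mid\emptyset)\ge v'_i(e\mid A_e^{<e})$.

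\emph{Upper bound $c_i(e)\le v'_i(e)$.} In the second case this holds with equality. In the first case, $c_i(e)=v'_i(e\mid A_e^{<e})\le v'_i(e\mid\emptyset)=v'_i(e)$, by the same submodularity comparison.

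Hence in both cases
\[
v'_i(e\mid A_e^{<e})\ \le\ c_i(e)\ \le\ v'_i(e).
\]
No step here is a real obstacle. The only points requiring care are that the submodularity comparison must be applied to the modified valuation $v'_i$ (which is submodular by Lemma~\ref{lem:MMSoverAdditive}) rather than to $v_i$, and that $e\notin A_e^{<e}$, so the marginal is well defined.
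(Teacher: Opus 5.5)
Your proof is correct and is the argument the paper leaves implicit in ``by submodularity.'' Splitting on the two branches of Definition~\ref{def:tilde-v} and comparing $v'_i(e\mid\emptyset)=v'_i(e)$ with $v'_i(e\mid A_e^{<e})$ via diminishing marginals of $v'_i$ (using $v'_i(\emptyset)=0$ and $e\notin A_e^{<e}$) gives both inequalities.
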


\begin{claim}\label{clm:Wi-ge-Wiprime-or-alpha}
For every run of RECE1, either $v_i(A_i) \ge \alpha$ or $W_i \ge W_i'$.
\end{claim}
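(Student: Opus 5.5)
The plan is a dichotomy on whether the ``otherwise'' case of Definition~\ref{def:tilde-v} ever occurs during the run. Either some item $e$ satisfies $v'_i(A_e^{<e}) \ge \alpha$, or every item $e$ satisfies $v'_i(A_e^{<e}) < \alpha$. I will show that the first case forces $v_i(A_i)\ge\alpha$, and that the second case forces $W'_i \le W_i$.

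\emph{Case 1: some item $e$ has $v'_i(A_e^{<e})\ge\alpha$.} This cannot happen in the greedy phase, because there $A_e^{<e}=\emptyset$ and $v'_i(\emptyset)=0<\alpha$. So $e$ is allocated in the random phase. The bundle receiving $e$ is held by an agent with no incoming edge in the envy graph at that moment. If that agent is $i$ herself, then her current bundle has $v_i$-value at least $v'_i(A_e^{<e})\ge\alpha$, using $v'_i\le v_i$ from Lemma~\ref{lem:MMSoverAdditive} and capping. Otherwise $i$ does not envy the holder, so
\[
v_i(\text{$i$'s current bundle})\ \ge\ v_i(A_e^{<e})\ \ge\ v'_i(A_e^{<e})\ \ge\ \alpha.
\]
Next I use that agent $i$'s value for her own bundle never decreases during RECE1. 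Adding an item to her bundle can only help, by monotonicity. Cycle rotations weakly increase each agent's own value, by Claim~\ref{clm:envy_cycle_sink}. Hence $v_i(A_i)\ge\alpha$ at termination.

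\emph{Case 2: every item has $v'_i(A_e^{<e})<\alpha$.} Then $c_i(e)=v'_i(e\mid A_e^{<e})$ for every $e$. The key bookkeeping point is to track \emph{bundles} as objects rather than agents. Cycle rotations only permute bundles among agents and never change their contents. So each final bundle $A_j$ is built by adding its items one at a time, in the order they were allocated, to a bundle that starts empty. Grouping the items by their final bundle, the contributions telescope:
\[
\sum_{e\in A_j} v'_i(e\mid A_e^{<e})\ =\ v'_i(A_j)-v'_i(\emptyset)\ =\ v'_i(A_j).
\]
Summing over $j$ and using $v'_i\le v_i$ pointwise gives
\[
W'_i\ =\ \sum_{j=1}^n v'_i(A_j)\ \le\ \sum_{j=1}^n v_i(A_j)\ =\ W_i.
\]

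The only step needing care is this bookkeeping: ensuring that $A_e^{<e}$ refers to the contents of the physical bundle, so that the telescoping is valid despite rotations. The rest is immediate from the sink property, monotonicity of own-bundle value, and $v'_i\le v_i$.
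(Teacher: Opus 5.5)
Your proof is correct and takes the same route as the paper's. The paper argues by contrapositive: if $W'_i>W_i\ge\sum_j v'_i(A_j)$, then some item must fall in the ``otherwise'' case of Definition~\ref{def:tilde-v}, and that item's recipient was a sink whose bundle had $v_i$-value at least $\alpha$. This combines exactly your Case~2 telescoping with your Case~1 sink argument, and you additionally spell out the telescoping over physical bundles and the monotonicity of agent $i$'s own value, both of which the paper uses implicitly.
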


\begin{proof}
{If $W_i' > W_i$, then since $W_i = \sum_j v_i(A_j) \ge \sum_j v'_i(A_j)$ (as $v'_i \le v_i$), there must exist some item~$e$ with $c_i(e) > v'_i(e \mid A_e^{<e})$. By Definition~\ref{def:tilde-v}, this happens only when $v'_i(A_e^{<e}) \ge \alpha$, hence $v_i(A_e^{<e}) \ge v'_i(A_e^{<e}) \ge \alpha$. That is, some bundle had $v_i$-value $\ge \alpha$ and received another item. At that round, agent~$i$ did not envy the recipient (who was chosen as a sink in the envy graph), so $v_i(A_i) \ge \alpha$.}
\end{proof}

\subsection{Lower bounding the expected welfare proxy $W'_i$}\label{subsec:avg-complement}

The following fact is used in our analysis:

\begin{claim}\label{clm:marginal-beta}

Consider submodular valuation function $v$, and bundles $A,B \subseteq \items$, s.t.  {$v(A)\le \alpha$} and $v\left(B\right)=\beta$.
{Suppose $e$ is drawn uniformly at random from $B$ and added to $A$.
Then the expected marginal value of $e$ added to $A$ satisfies}

\[
{\kaliEXPECTED_{e\sim\mathbb{U}(B)}}\left[v\left(e\mid A\right)\right]\geq \max\Big\{\left(\beta-\alpha\right)\cdot\frac{1}{\left|B\right|},0\Big\}.
\]
\end{claim}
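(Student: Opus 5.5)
The plan is to prove the two lower bounds inside the maximum separately and then combine them. Nonnegativity is immediate. By monotonicity of $v$, every marginal satisfies $v(e\mid A)\ge 0$, so the expectation is at least $0$. It therefore remains to show $\kaliEXPECTED_{e\sim\mathbb{U}(B)}[v(e\mid A)]\ge(\beta-\alpha)/|B|$.

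\emph{Lower bound on the joint marginal.} I first bound the marginal value of the whole set $B$ on top of $A$. Monotonicity gives $v(A\cup B)\ge v(B)=\beta$. Together with $v(A)\le\alpha$ this yields
\[
v(B\mid A)=v(A\cup B)-v(A)\ \ge\ \beta-\alpha .
\]

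\emph{Splitting into single items.} Next I enumerate $B=\{e_1,\dots,e_{|B|}\}$ and telescope:
\[
v(B\mid A)=\sum_{l=1}^{|B|} v\bigl(e_l\mid A\cup\{e_1,\dots,e_{l-1}\}\bigr).
\]
Each term in this sum is at most $v(e_l\mid A)$ by the diminishing-marginals definition of submodularity, since $A\subseteq A\cup\{e_1,\dots,e_{l-1}\}$. One small point needs care: if $e_l\in A$, the marginal is $0$ on both sides, so the inequality still holds. Hence
\[
\sum_{e\in B}v(e\mid A)\ \ge\ \beta-\alpha .
\]

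\emph{Averaging.} Since $e$ is uniform on $B$, the expectation equals $\frac{1}{|B|}\sum_{e\in B}v(e\mid A)$. This is at least $(\beta-\alpha)/|B|$, and combining with nonnegativity gives the stated maximum. There is no real obstacle here. The only step needing attention is the submodular comparison of the telescoped marginals, including the degenerate case where $B$ overlaps $A$.
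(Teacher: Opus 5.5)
Your proposal is correct and follows essentially the same route as the paper's proof: monotonicity gives $v(B\mid A)\ge\beta-\alpha$, telescoping with submodularity gives $v(B\mid A)\le\sum_{e\in B}v(e\mid A)$, and averaging over the uniform $e$ finishes the argument. You also state the nonnegativity half of the maximum and the overlap case $e_l\in A$ explicitly, which the paper leaves implicit.
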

The proof is provided in {Appendix \ref{app:marginal-beta}}.

Let $e$ be an item allocated by RECE1 in the random-item phase, and let $A_e$ denote the bundle in which RECE1 places $e$. 

{The contribution $c_i(e)$ of item $e$ to $W'_i$ is given by Definition~\ref{def:tilde-v}. In particular, if {$v'_i(A_e^{<e}) < \alpha$}, then {$c_i(e) = v'_i(e \mid A_e^{<e})$}.}

Let $B_1, \ldots, B_n$ denote the bundles of the MMS partition with respect to {$v_i$ (and hence also {$v'_i$}).}
Next, we focus on the total contribution of a single MMS bundle $B_j$ to $W'_i$.

\begin{definition}\label{def:Cj}
Fix an MMS bundle $B_j$. Let $(e_1,\dots,e_{|B_j|})$ denote the items of $B_j$ {in the order in which} they are selected during RECE1. Define
\[
C_j\ :=\ {\sum_{r=1}^{|B_j|} c_i(e_r).}
\]
\end{definition}

Now we can see:

\[
W'_i=\sum_{j=1}^{n}C_{j}.
\]

\subsubsection{Lower bound on $\mathbb E[C_j]$ }\label{subsubsec:ECj}

{
\begin{definition}\label{def:g-alpha}
For $\alpha\in(0,1)$, define
\[
g(\alpha)\ :=\ {1-\alpha-\alpha\ln\Bigl(\frac{1}{\alpha}\Bigr).}
\]
\end{definition}
\begin{observation}\label{obs:g-ge-2alpha}
For every $\alpha \le 0.2219$, $g(\alpha) \ge 2\alpha$. 
\end{observation}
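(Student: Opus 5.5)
The plan is to reduce the claim to a single numerical check at the endpoint, using monotonicity. Throughout, $\alpha$ is taken in $(0,0.2219]$, since $g$ is only defined on $(0,1)$.

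\textbf{Step 1 (reformulate).} Define the gap
\[
h(\alpha)\ :=\ g(\alpha)-2\alpha\ =\ 1-3\alpha-\alpha\ln\Bigl(\frac{1}{\alpha}\Bigr)\ =\ 1-3\alpha+\alpha\ln\alpha .
\]
The claim is equivalent to $h(\alpha)\ge 0$ for every $\alpha\in(0,0.2219]$.

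\textbf{Step 2 (monotonicity).} Differentiating gives
\[
h'(\alpha)\ =\ -3+\ln\alpha+1\ =\ \ln\alpha-2 .
\]
This is negative for all $\alpha<e^{2}$, and in particular on all of $(0,1)$. Hence $h$ is strictly decreasing on $(0,0.2219]$. For intuition only, note that $\alpha\ln\alpha\to 0$ as $\alpha\downarrow 0$, so $h(\alpha)\to 1$ there. It therefore suffices to show $h(0.2219)\ge 0$.

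\textbf{Step 3 (endpoint evaluation).} This step is the only real obstacle, because the margin is tiny. The constant $0.2219$ was evidently chosen just below the root of $h$, so the logarithm must be computed with roughly four to five correct digits.

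We write $\ln 0.2219=\ln 2.219-\ln 10$. Using $\ln 2\approx 0.693147$ and $\ln 1.1095\approx 0.10391$, we get $\ln 2.219\approx 0.79706$. Together with $\ln 10\approx 2.302585$, this gives $\ln 0.2219\approx -1.50553$.

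Substituting, $0.2219\ln 0.2219\approx -0.33408$ and $3\cdot 0.2219=0.6657$. So
\[
h(0.2219)\ \approx\ 1-0.6657-0.33408\ \approx\ 2\cdot 10^{-4}\ >\ 0 .
\]

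To make this rigorous, I would replace each approximation by a one-sided bound. Specifically, I would use an upper bound on $-\ln 0.2219$ that is accurate to about $10^{-5}$, for instance from a truncated alternating series for $\ln 1.1095$ together with standard bounds on $\ln 2$ and $\ln 10$. The resulting error of about $2.2\cdot10^{-6}$ in $\alpha\ln\alpha$ is far smaller than the margin of $2\cdot10^{-4}$.

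Combining Steps 2 and 3 gives $h(\alpha)\ge h(0.2219)>0$, that is, $g(\alpha)\ge 2\alpha$, for every $\alpha\in(0,0.2219]$.
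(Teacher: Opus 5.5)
Your proof is correct and takes the same route the paper has in mind: the paper states this observation without proof, implicitly relying on $g(\alpha)-2\alpha$ being decreasing (since $g'(\alpha)=\ln\alpha<0$) with its root at the constant $\alpha^\star\approx 0.22196$ defined by $\alpha^\star=g(\alpha^\star)/2$ in the proof of Lemma~\ref{lem:expectation}. Your derivative $h'(\alpha)=\ln\alpha-2$ and your endpoint value $h(0.2219)\approx 2.2\cdot 10^{-4}>0$ are both accurate, so monotonicity plus this single numerical check proves the claim on $(0,0.2219]$.
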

}

{Denote $x_+ := \max\{x,\,0\}$.} The following definition will be useful for our analysis:
{
\begin{definition}[Bundle potential]\label{def:P-pot}
For every $j\in[n]$ and every $R\subseteq B_j$, {the {\em potential} of $R$, denoted $P(R)$, is defined as:}

\[
P(R)\ :=\ \sum_{r=1}^{|R|}\frac{1}{r}\,\kaliEXPECTED_{S\sim\mathbb{U}\binom{R}{r}}\bigl[{(v'_i(S)-\alpha)_+}\bigr],\qquad P(\emptyset):=0.
\]
{Here $\binom{R}{r}$ denotes the family of $r$-element subsets of $R$, and $\mathbb{U}(\cdot)$ denotes the uniform distribution on the indicated set.}
\end{definition}

\begin{claim}[Per-round contribution]\label{clm:per-round-contrib}
Fix an MMS bundle $B_j$ and a non-empty subset $S\subseteq B_j$. Suppose during RECE1's random phase an item $e$ is drawn uniformly at random from $S$ and placed by the algorithm into some recipient bundle $A_e^{<e}$. Then

\[
\kaliEXPECTED_{e\sim\mathbb{U}(S)}\bigl[c_i(e)\bigr]\ \ge\ {\frac{(v'_i(S)-\alpha)_+}{|S|}},
\]
where $c_i$ is the contribution of Definition~\ref{def:tilde-v}.
\end{claim}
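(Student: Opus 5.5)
The plan is to condition on the state of the algorithm just before $e$ is drawn, and split into two cases according to which branch of Definition~\ref{def:tilde-v} is active. The recipient bundle $A:=A_e^{<e}$ is chosen by RECE1 (a sink of the envy graph) before the uniformly random item is revealed. Hence, conditional on the history, $A$ is a fixed set and $e$ is uniform on $S$, independently of $A$. This independence is what allows Claim~\ref{clm:marginal-beta} to be applied with $A$ fixed.

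\emph{Case 1: $v'_i(A)<\alpha$.} Then $c_i(e)=v'_i(e\mid A)$ for every $e\in S$. I would apply Claim~\ref{clm:marginal-beta} to the monotone submodular function $v'_i$ (guaranteed by Lemma~\ref{lem:MMSoverAdditive} applied to $\hat v_i$), with conditioning set $A$, source set $B:=S$ and $\beta:=v'_i(S)$. The hypothesis $v'_i(A)\le\alpha$ holds, so the claim yields
\[
\kaliEXPECTED_{e\sim\mathbb{U}(S)}[v'_i(e\mid A)]\ \ge\ \max\{(v'_i(S)-\alpha)/|S|,0\}\ =\ \frac{(v'_i(S)-\alpha)_+}{|S|}.
\]

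\emph{Case 2: $v'_i(A)\ge\alpha$.} Then $c_i(e)=v'_i(e)$. Since $S\subseteq B_j$ and $v'_i$ is additive on subsets of $B_j$ (property (iii) of $v'_i$), we get
\[
\kaliEXPECTED_{e\sim\mathbb{U}(S)}[v'_i(e)]=\frac{1}{|S|}\sum_{e\in S}v'_i(e)=\frac{v'_i(S)}{|S|}\ \ge\ \frac{(v'_i(S)-\alpha)_+}{|S|},
\]
because $(x-\alpha)_+\le x$ for $x\ge0$. Alternatively, without additivity, one could use Observation~\ref{obs:block-not-opening} to get $c_i(e)\ge v'_i(e\mid\emptyset)$ and then subadditivity of $v'_i$. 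Either route works.

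Combining the two cases proves the claim conditionally on every history, and hence unconditionally. The only delicate point, and the one I expect to be the main obstacle, is justifying that the recipient bundle does not depend on which item of $S$ is drawn. This follows from the order of steps in the random phase: the agent in step (a) is chosen before the item in step (b), and cycle rotations only occur after allocation. I would state this explicitly before invoking Claim~\ref{clm:marginal-beta}.
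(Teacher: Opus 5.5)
Your proof is correct and follows the paper's argument. The paper uses the same case split on $v'_i(A_e^{<e})$ versus $\alpha$, and in the first case applies Claim~\ref{clm:marginal-beta} to $v'_i$ with $A:=A_e^{<e}$ and $B:=S$. In the second case the paper applies Claim~\ref{clm:marginal-beta} with $A:=\emptyset$, which is the same as your additivity/subadditivity bound $\kaliEXPECTED_{e\sim\mathbb{U}(S)}[v'_i(e)]\ge v'_i(S)/|S|$; your explicit remark that the recipient bundle is fixed before the item is drawn spells out what the paper leaves implicit.
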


\begin{proof}
The case-split in Definition~\ref{def:tilde-v} is determined by $v'_i(A_e^{<e})$. If $v'_i(A_e^{<e})<\alpha$, then $c_i(e)=v'_i(e\mid A_e^{<e})$, and Claim~\ref{clm:marginal-beta} applied with $A:=A_e^{<e}$ and $B:=S$ (using subset-additivity of $v'_i$ on $B_j$, so $v'_i(S)=\sum_{e\in S}v'_i(\{e\})$) gives {$\kaliEXPECTED_{e\sim\mathbb{U}(S)}[v'_i(e\mid A_e^{<e})]\ge (v'_i(S)-\alpha)_+/|S|$.} If instead $v'_i(A_e^{<e})\ge\alpha$, then $c_i(e)=v'_i(\{e\})$, and applying Claim~\ref{clm:marginal-beta} with $A:=\emptyset$ and $B:=S$ yields {a bound even larger than $\phi(v'_i(S))/|S|$.}
\end{proof}

Intuitively, $P(R)$ aggregates the per-round bound of Claim~\ref{clm:per-round-contrib} over the $|R|$ rounds at which items from $R$ are revealed in a uniformly random order: when the size-$r$ revealed subset is $S$, the next item contributes at least {$(v'_i(S)-\alpha)_+/|S|$} in expectation, and summing these contributions for $r=1,\dots,|R|$ yields $P(R)$.

\begin{claim}[Recursion for $P$]\label{obs:P-recursion}
For every $R\subseteq B_j$ with $R\ne\emptyset$,
\[
P(R)\ =\ {\frac{(v'_i(R)-\alpha)_+}{|R|}}\ +\ \kaliEXPECTED_{e'\sim\mathbb{U}(R)}\bigl[P(R\setminus\{e'\})\bigr].
\]
\end{claim}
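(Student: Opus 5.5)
The plan is to expand $P(R)$ directly from Definition~\ref{def:P-pot} and match the two sides of the claimed recursion term by term. Write $s:=|R|$ and $\phi(S):=(v'_i(S)-\alpha)_+$. The $r=s$ summand in the definition of $P(R)$ is $\frac{1}{s}\phi(R)$, since the only $s$-subset of $R$ is $R$ itself. This is exactly the first term of the recursion. It therefore remains to show
\[
\sum_{r=1}^{s-1}\frac{1}{r}\,\kaliEXPECTED_{S\sim\mathbb{U}\binom{R}{r}}\bigl[\phi(S)\bigr]\ =\ \kaliEXPECTED_{e'\sim\mathbb{U}(R)}\bigl[P(R\setminus\{e'\})\bigr].
\]

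The key step is a coupling of uniform subsets. First pick $e'$ uniformly from $R$, and then pick $S$ uniformly among the $r$-subsets of $R\setminus\{e'\}$, for $r\le s-1$. The resulting $S$ is a uniform $r$-subset of $R$. To see this, note that by symmetry every $r$-subset $T$ of $R$ has the same probability. Concretely, $T$ is reachable exactly when $e'\notin T$, which happens with probability $(s-r)/s$, and then $T$ is chosen with probability $1/\binom{s-1}{r}$. The product is $\frac{s-r}{s}\binom{s-1}{r}^{-1}=\binom{s}{r}^{-1}$.

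From the coupling, for each $1\le r\le s-1$ we get
\[
\kaliEXPECTED_{e'}\Bigl[\kaliEXPECTED_{S\sim\mathbb{U}\binom{R\setminus\{e'\}}{r}}\phi(S)\Bigr]=\kaliEXPECTED_{S\sim\mathbb{U}\binom{R}{r}}\phi(S).
\]
Multiply by $1/r$, sum over $r=1,\dots,s-1$, and use linearity of expectation to exchange the finite sum with $\kaliEXPECTED_{e'}$. The left-hand side then becomes exactly $\kaliEXPECTED_{e'}[P(R\setminus\{e'\})]$, because $|R\setminus\{e'\}|=s-1$. This completes the identity.

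The edge case $s=1$ works without change: the sum over $r\le s-1$ is empty, and $P(\emptyset)=0$, so $P(\{e\})=\phi(\{e\})$.

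No real obstacle is expected. The only point needing care is the binomial check that the two-stage sampling yields the uniform distribution on $\binom{R}{r}$. Nothing about $v'_i$ (additivity or submodularity) is used, so the identity holds for any set function $\phi$.
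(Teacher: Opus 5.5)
Your proof is correct and follows the paper's argument essentially verbatim. Like the paper, you isolate the $r=|R|$ term, observe that drawing $e'$ uniformly from $R$ and then a uniform $r$-subset of $R\setminus\{e'\}$ yields a uniform $r$-subset of $R$, and exchange the finite sums; your explicit check $\frac{s-r}{s}\binom{s-1}{r}^{-1}=\binom{s}{r}^{-1}$ and your treatment of $s=1$ just fill in details the paper leaves implicit.
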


\begin{proof}
Set $q:=|R|$. The $r=q$ term in Definition~\ref{def:P-pot} is {$(v'_i(R)-\alpha)_+/q$} (since $\binom{R}{q}=\{R\}$). For $1\le r\le q-1$, drawing $S$ uniformly from $\binom{R}{r}$ has the same distribution as first drawing $e'$ uniformly from $R$ and then drawing $S$ uniformly from $\binom{R\setminus\{e'\}}{r}$. Substituting and recognizing the resulting double sum as $\kaliEXPECTED_{e'\sim\mathbb{U}(R)}[P(R\setminus\{e'\})]$ proves the recursion.
\end{proof}

\begin{lemma}[Bundle bound]\label{lem:P-bundle-bound}
For every $j\in[n]$, $P(B_j)\ \ge\ g(\alpha)$.
\end{lemma}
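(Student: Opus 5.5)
The plan is to exploit that $v'_i$ is additive on $B_j$ with $v'_i(B_j)=1$, and to lower bound each term of $P(B_j)$ separately. Let $k:=|B_j|$ and $w_e:=v'_i(\{e\})$ for $e\in B_j$, so $\sum_{e\in B_j}w_e=1$. For a uniformly random $r$-subset $S$ of $B_j$, additivity gives
\[
\kaliEXPECTED[v'_i(S)]=\sum_{e\in B_j}w_e\Pr[e\in S]=\frac{r}{k}.
\]
This is exactly Claim~\ref{clm:avg-submodular} with equality. Since $x\mapsto (x-\alpha)_+$ is convex, Jensen's inequality gives
\[
\kaliEXPECTED_{S}\bigl[(v'_i(S)-\alpha)_+\bigr]\ \ge\ \Bigl(\frac{r}{k}-\alpha\Bigr)_+ .
\]
Dividing by $r$ and summing over $r=1,\dots,k$ in Definition~\ref{def:P-pot} yields
\[
P(B_j)\ \ge\ \sum_{r=1}^{k}\Bigl(\frac{1}{k}-\frac{\alpha}{r}\Bigr)_+ .
\]

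The remaining step is the purely numerical inequality
\[
\sum_{r=1}^{k}\Bigl(\frac{1}{k}-\frac{\alpha}{r}\Bigr)_+\ \ge\ g(\alpha),
\]
uniformly over $k\ge 1$. This is precisely Lemma~\ref{lem:g-sum-bound}, stated in the appendix. If I had to prove it from scratch, I would split into two cases.

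\emph{Case $k\alpha<1$.} All summands are positive, and the sum equals $1-\alpha H_k\ge 1-\alpha(1+\ln k)\ge g(\alpha)$, using $k<1/\alpha$.

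\emph{Case $k\alpha\ge 1$.} Set $a:=\lfloor\alpha k\rfloor$. Only the terms with $r>a$ are positive. Bound $\sum_{r=a+1}^{k}1/r$ by $\frac{1}{a+1}+\ln\frac{k}{a+1}$, and then reduce to showing that $\psi(x)=1-\frac{a}{x}-\frac{1}{a+1}-\ln\frac{x}{a+1}\ge 0$ on $[a,a+1)$, where $x=\alpha k$. Since $\psi'(x)=(a-x)/x^2\le 0$, the function $\psi$ is non-increasing on this interval, and its limit as $x\uparrow a+1$ is $0$, which gives $\psi\ge 0$.

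I expect the main obstacle to be this second case. The discrete harmonic sum has to be compared with $\ln(1/\alpha)$ without losing a constant, and the floor $a$ makes the comparison delicate. The monotonicity argument in $x$ with $a$ held fixed is what handles it cleanly. The Jensen step is routine once additivity of $v'_i$ on $B_j$ (Lemma~\ref{lem:MMSoverAdditive}) and the normalization $v'_i(B_j)=1$ are in place. Combining the two steps gives $P(B_j)\ge g(\alpha)$ for every $j$.
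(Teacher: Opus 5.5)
Your proof is correct and is essentially the paper's argument. Additivity of $v'_i$ on $B_j$ gives $\kaliEXPECTED[v'_i(S)]=r/k$; the paper only needs $\ge$, which it takes from Claim~\ref{clm:avg-submodular}. Convexity of $(x-\alpha)_+$, which the paper phrases as $\kaliEXPECTED[\max\{X,0\}]\ge\max\{\kaliEXPECTED[X],0\}$, then reduces $P(B_j)$ to $\sum_{r=1}^{k}(1/k-\alpha/r)_+$. Your two-case proof of the numerical bound, including the monotonicity of $\psi$ for fixed $a$, matches the appendix proof of Lemma~\ref{lem:g-sum-bound}.
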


\begin{proof}
Let $k:=|B_j|$. By Claim~\ref{clm:avg-submodular} applied to $v'_i$ on $B_j$ (which is additive on $B_j$ by Lemma~\ref{lem:MMSoverAdditive}), $\kaliEXPECTED_{S\sim\mathbb{U}\binom{B_j}{r}}[v'_i(S)]\ge r/k$. {Since $\kaliEXPECTED[\max\{X,Y\}]\ge \max\{\kaliEXPECTED[X],\kaliEXPECTED[Y]\}$ for any random variables $X,Y$, applying this with $X:=v'_i(S)-\alpha$ and $Y:=0$ gives $\kaliEXPECTED[(v'_i(S)-\alpha)_+]\ge (\kaliEXPECTED[v'_i(S)]-\alpha)_+\ge (r/k-\alpha)_+$.} Hence
\[
P(B_j)\ \ge\ \sum_{r=1}^{k}\frac{(r/k-\alpha)_+}{r}\ =\ \sum_{r=1}^{k}\Bigl(\frac{1}{k}-\frac{\alpha}{r}\Bigr)_+\ \ge\ g(\alpha),
\]
where the last inequality {will be proved in} Lemma~\ref{lem:g-sum-bound}. \end{proof}
}

{The next deterministic property of $P$ controls how it changes when an item is dropped from a subset.}

\begin{lemma}[Drop lemma]\label{lem:drop}
{For every $R\subseteq B_j$ and every $e\in R$,}
\[
{P(R)-v'_i(\{e\})\ \le\ P(R\setminus\{e\})\ \le\ P(R).}
\]
\end{lemma}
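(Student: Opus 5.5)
Set $a:=v'_i(\{e\})$, $R^-:=R\setminus\{e\}$, $s:=|R|$, and $D(R,e):=P(R)-P(R^-)$. Both inequalities of the lemma together say exactly that $D(R,e)\in[0,a]$. I will prove this by induction on $s$, using the recursion of Claim~\ref{obs:P-recursion}. Two facts are used throughout. First, $v'_i$ is additive on subsets of $B_j$ by Lemma~\ref{lem:MMSoverAdditive}, so $v'_i(R)=v'_i(R^-)+a$. Second, the map $x\mapsto(x-\alpha)_+$ is non-decreasing and $1$-Lipschitz, so $0\le (x+a-\alpha)_+-(x-\alpha)_+\le a$ for all $x\ge0$ and $a\ge0$.

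\emph{Base case $s=1$.} Here $R=\{e\}$, and the recursion together with $P(\emptyset)=0$ gives $P(\{e\})=(a-\alpha)_+$. This lies in $[0,a]$.

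\emph{Inductive step, $s\ge 2$.} I first derive a recursion for $D$ itself. Apply Claim~\ref{obs:P-recursion} to $R$ and split the expectation over $e'\sim\mathbb{U}(R)$ according to whether $e'=e$ (probability $1/s$, giving $P(R^-)$) or $e'\in R^-$. This gives
\[
P(R)=\frac{(v'_i(R^-)+a-\alpha)_+}{s}+\frac1s P(R^-)+\frac{s-1}{s}\,\kaliEXPECTED_{e'\sim\mathbb{U}(R^-)}[P(R\setminus\{e'\})].
\]
Next apply Claim~\ref{obs:P-recursion} to the nonempty set $R^-$, multiply it by $(s-1)/s$, and subtract. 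The key observation is that for $e'\in R^-$ we have $R^-\setminus\{e'\}=(R\setminus\{e'\})\setminus\{e\}$. The expectation terms therefore pair up into $D(R\setminus\{e'\},e)$, and we obtain
\[
D(R,e)=\frac{(v'_i(R^-)+a-\alpha)_+-(v'_i(R^-)-\alpha)_+}{s}+\frac{s-1}{s}\,\kaliEXPECTED_{e'\sim\mathbb{U}(R^-)}[D(R\setminus\{e'\},e)].
\]
Now bound the two terms. By the Lipschitz fact, the numerator of the first term lies in $[0,a]$. Each $R\setminus\{e'\}$ contains $e$ and has size $s-1$, so by the inductive hypothesis each $D(R\setminus\{e'\},e)$ lies in $[0,a]$. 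Thus $D(R,e)$ is a convex combination, with weights $1/s$ and $(s-1)/s$, of numbers in $[0,a]$, and so $D(R,e)\in[0,a]$.

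The main obstacle is the bookkeeping in deriving the recursion for $D$. Conditioning on the removed element, the coefficients must combine to exactly $1/s$ and $(s-1)/s$, so that the $P(R^-)$ terms cancel against the recursion for $R^-$. Once that cancellation is verified, the rest is immediate.
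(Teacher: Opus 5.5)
Your proposal is correct and follows essentially the same route as the paper's proof. Both define $D(R,e)=P(R)-P(R\setminus\{e\})$ and split the recursion of Claim~\ref{obs:P-recursion} according to whether $e'=e$. Both then substitute the recursion for $R\setminus\{e\}$ to obtain $D(R,e)$ as a $\tfrac1s$/$\tfrac{s-1}{s}$ convex combination of a $1$-Lipschitz difference and an average of $D(R\setminus\{e'\},e)$, and conclude by induction on $|R|$ with base case $P(\{e\})=(v'_i(\{e\})-\alpha)_+$.
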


{The proof appears in Appendix~\ref{app:drop-lemma}.}

\begin{proposition} \label{prop:expected_c_j}
Fix an agent $i$ and let $B_1,\ldots,B_n$ be the bundles of an MMS partition for {$v'_i$}. Then, for every $j\in[n]$,
\[
{\kaliEXPECTED[C_{j}]\ \geq\ P(B_j)\ \geq\ g(\alpha).}
\]

\end{proposition}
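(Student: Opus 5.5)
The second inequality $P(B_j)\ge g(\alpha)$ is exactly Lemma~\ref{lem:P-bundle-bound}, so the work is in showing $\kaliEXPECTED[C_j]\ge P(B_j)$. I would prove a stronger, conditional statement by backward induction over the run. The claim is that at any point of the random phase, conditioned on the history, if $R\subseteq B_j$ is the set of still-unallocated items of $B_j$, then the expected total future contribution $\sum_{e\in R}c_i(e)$ is at least $P(R)$. Applying this at the start of the random phase with $R=R_j^0:=B_j\setminus G_j$, where $G_j$ denotes the items of $B_j$ taken in the greedy phase, gives $\kaliEXPECTED[\sum_{e\in R_j^0}c_i(e)]\ge \kaliEXPECTED[P(R_j^0)]$.

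\emph{Inductive step.} Induct on the number of remaining unallocated items overall, or equivalently define the natural potential process. The next item allocated either lies outside $R$, in which case $R$ is unchanged and the induction passes through, or lies in $R$. Conditioned on the next item being in $B_j$, it is uniform on $R$. By Claim~\ref{clm:per-round-contrib}, its expected contribution is at least $(v'_i(R)-\alpha)_+/|R|$, whatever the recipient bundle is. The remaining set becomes $R\setminus\{e'\}$ with $e'$ uniform, and by induction the future is worth at least $\kaliEXPECTED_{e'}[P(R\setminus\{e'\})]$. The recursion of Claim~\ref{obs:P-recursion} then closes the step exactly.

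The subtle point is that the recipient bundle depends on the history, while the identity of the chosen item is uniform given that history. This independence is what Claim~\ref{clm:per-round-contrib} needs. Formally, I would phrase the argument as the submartingale of Proposition~\ref{prop:Mk-submart}, namely accumulated credit plus $P$ of the remaining set, and take expectations.

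\emph{Greedy items.} Each greedy item $g\in G_j$ entered an empty bundle, so $c_i(g)=v'_i(\{g\})$. Using the drop lemma (Lemma~\ref{lem:drop}) repeatedly, $P(B_j)\le P(R_j^0)+\sum_{g\in G_j}v'_i(\{g\})$: remove the greedy items one at a time, each removal lowering $P$ by at most $v'_i(\{g\})$. Combining,
\[
\kaliEXPECTED[C_j]\ \ge\ \kaliEXPECTED\Bigl[\sum_{g\in G_j}v'_i(\{g\})+P(R_j^0)\Bigr]\ \ge\ P(B_j).
\]
This holds pointwise for every realization of the greedy phase.

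The main obstacle is making the conditional-uniformity argument rigorous, since the greedy set $G_j$ and the recipient bundles depend on the randomness. I would handle this by conditioning on the outcome of the greedy phase, which is deterministic given the agent order, and then working with the filtration generated by the random phase. The drop lemma is what removes any dependence on which items of $B_j$ were taken greedily.
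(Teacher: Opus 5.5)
Your proposal is correct, and it differs from the paper's proof only in how you obtain $\kaliEXPECTED[\sum_{e\in R_j^0}c_i(e)]\ge P(R_j^0)$. The second inequality and the treatment of the greedy items match the paper: each greedy item contributes $v'_i(\{g\})$, and iterating Lemma~\ref{lem:drop} gives $P(B_j)\le P(R_j^0)+\sum_{g\in G_j}v'_i(\{g\})$.

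\emph{The paper's route.} It argues unconditionally and term by term. When $r$ items of $R_j^0$ remain, the remaining set $S$ is a uniformly random $r$-subset of $R_j^0$. Given the history, the next item drawn from it is uniform on $S$. Claim~\ref{clm:per-round-contrib} and the tower property then give an expected credit of at least $\kaliEXPECTED_S[(v'_i(S)-\alpha)_+]/r$. Summing over $r$ reproduces Definition~\ref{def:P-pot} directly, so the recursion for $P$ is never used.

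\emph{Your route.} You prove a stronger conditional statement: at every point of the random phase, the expected future credit from $B_j$ is at least $P$ of the remaining set. You close each step of a backward induction with Claim~\ref{obs:P-recursion}. This is a single-bundle version of the submartingale $\Lambda_k$ from Section~\ref{sec:high_prob}. Appealing to Proposition~\ref{prop:Mk-submart} is not circular, since its proof does not use the present proposition. In any case, your self-contained induction already gives the per-bundle statement, which the global $\Lambda_k$ (a sum over all bundles) would not give by itself.

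\emph{Comparison.} The paper's argument is shorter, but it relies on exchangeability of the remaining set at the random time when it has size $r$. Yours needs only one-step conditional uniformity. It also makes explicit that the recipient is chosen before the item is drawn. Finally, it yields exactly the conditional property that the later Freedman-based concentration argument requires, so it unifies the expectation and high-probability analyses.
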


{\begin{proof}
Recall that RECE1 first allocates to each agent the highest-value single item remaining (the greedy phase), and then randomly allocates an item in every round (the random phase). Denote by $G_j\subseteq B_j$ the subset of items of $B_j$ chosen in the greedy phase, and let $R_j^0:=B_j\setminus G_j$ be the items of $B_j$ that enter the random phase.

Let $C_j^{\mathrm{rnd}}:=\sum_{e\in R_j^0} c_i(e)$ be the contribution of the random-phase items of $B_j$ (so $C_j=\sum_{g\in G_j} c_i(g)+C_j^{\mathrm{rnd}}$). The greedy phase is deterministic, so $G_j$ and $R_j^0$ are deterministic sets. We claim
\begin{equation}\label{eq:Crnd-vs-P}
\kaliEXPECTED\bigl[C_j^{\mathrm{rnd}}\bigr]\ \ge\ P(R_j^0).
\end{equation}

To see this, let $r$ run over the rounds of the random phase at which an item from $B_j$ is drawn, and let $e_r$ be that item; conditioned on the size-$r$ subset $S\subseteq R_j^0$ that has been drawn from $B_j$ so far, $e_r$ is uniformly distributed on $S$. By Claim~\ref{clm:per-round-contrib}, for every realization of $A_{e_r}^{<e_r}$,
\[
\kaliEXPECTED_{e_r\sim\mathbb{U}(S)}\bigl[c_i(e_r)\bigr]\ \ge\ {\frac{(v'_i(S)-\alpha)_+}{|S|}}.
\]
Taking outer expectation over $S$ (uniform on $\binom{R_j^0}{r}$) and summing over $r=1,\dots,|R_j^0|$ yields~\eqref{eq:Crnd-vs-P}.

Next, every greedy item $g\in G_j$ enters an initially empty bundle, so $v'_i(A_g^{<g})=0<\alpha$ and hence $c_i(g)=v'_i(g\mid\emptyset)=v'_i(\{g\})$ deterministically. Therefore
\[
\kaliEXPECTED[C_j]\ =\ \sum_{g\in G_j} v'_i(\{g\})\ +\ \kaliEXPECTED[C_j^{\mathrm{rnd}}]\ \ge\ \sum_{g\in G_j} v'_i(\{g\})\ +\ P(R_j^0).
\]
Iterating the drop lemma (Lemma~\ref{lem:drop}) along the items of $G_j$ gives $P(B_j)\le P(R_j^0)+\sum_{g\in G_j} v'_i(\{g\})$. Combining with the bundle bound (Lemma~\ref{lem:P-bundle-bound}),
\[
\kaliEXPECTED[C_j]\ \ge\ P(R_j^0)+\sum_{g\in G_j} v'_i(\{g\})\ \ge\ P(B_j)\ \ge\ g(\alpha). \qedhere
\]
\end{proof}}

\begin{proposition}\label{prop:EWprime}
{For} every agent $i$,
\[
\mathbb E[W'_{i}]
=\sum_{j=1}^n \mathbb E[C_j]
\ \ge\
n\cdot {g(\alpha)}.
\]
\end{proposition}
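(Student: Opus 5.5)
This proposition follows by combining the decomposition of $W'_i$ over MMS bundles with Proposition~\ref{prop:expected_c_j}. I will first justify the identity $W'_i=\sum_{j=1}^n C_j$, which holds for every run of RECE1. The MMS partition $B_1,\ldots,B_n$ is a partition of $\items$, so each item $e$ belongs to exactly one $B_j$. Each per-item contribution $c_i(e)$ of Definition~\ref{def:tilde-v} is therefore counted in exactly one $C_j$ of Definition~\ref{def:Cj}. Regrouping the finite sum $W'_i=\sum_{e\in\items}c_i(e)$ by bundle gives the identity pointwise, for every realization of the random order.

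Next I take expectations. Linearity of expectation gives $\mathbb E[W'_i]=\sum_{j=1}^n\mathbb E[C_j]$. No independence between the $C_j$ is needed, even though the bundles' contributions are correlated through the shared allocation process. This establishes the equality part of the statement.

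For the inequality, I apply Proposition~\ref{prop:expected_c_j} to each $j\in[n]$ separately. It gives $\mathbb E[C_j]\ge P(B_j)\ge g(\alpha)$. Here the partition is the one used to construct $v'_i$ via Lemma~\ref{lem:MMSoverAdditive}; it is an MMS partition for $v'_i$ as well, since $v'_i(B_j)=1$ for every $j$. Summing these $n$ bounds yields $\mathbb E[W'_i]\ge n\cdot g(\alpha)$.

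There is no real obstacle here. The only point needing care is consistency: $C_j$, $W'_i$ and the partition $B_1,\ldots,B_n$ in Proposition~\ref{prop:expected_c_j} must all be defined with respect to the same $v'_i$ and the same MMS partition. They are, because $v'_i$ was constructed from exactly that partition.
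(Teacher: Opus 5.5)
Your proposal is correct and follows the paper's own proof: the statement follows immediately from the pointwise identity $W'_i=\sum_j C_j$, linearity of expectation, and Proposition~\ref{prop:expected_c_j} applied to each bundle. Your added checks, that the regrouping is valid and that the same partition used to build $v'_i$ is an MMS partition for $v'_i$, are accurate.
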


\begin{proof}
{Immediate from $W'_i=\sum_j C_j$, linearity of expectation, and Proposition~\ref{prop:expected_c_j}.}
\end{proof}

\begin{proposition}\label{prop:EW-true}
{Let}
\[
\mu_i := \kaliEXPECTED[v_i(A_i)].
\]
If $\mu_i \le \alpha$, then
\[
\kaliEXPECTED[W_i] \ge n\cdot g(\alpha).
\]
\end{proposition}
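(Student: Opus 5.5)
I would mirror the proof of Proposition~\ref{prop:expected_c_j}, working with the true marginals $d_i(e):=v_i(e\mid A_e^{<e})$ instead of the credits $c_i$. Since $W_i=\sum_j v_i(A_j)$ telescopes along the order in which items enter each bundle, $W_i=\sum_{e\in\items}d_i(e)$. It therefore suffices to show, for every MMS bundle $B_j$, that $\kaliEXPECTED[\sum_{e\in B_j}d_i(e)]\ge g(\alpha)$, and then sum over $j$.

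Fix $B_j$ and split it as $B_j=G_j\sqcup R_j^0$, with $G_j$ the greedy-phase items and $k':=|R_j^0|$. A greedy item enters an empty bundle, so it contributes $v_i(\{g\})\ge v'_i(\{g\})$. Using additivity of $v'_i$ on $B_j$ with $v'_i(B_j)=1$, the greedy items contribute at least $1-\tau$, where $\tau:=v'_i(R_j^0)\in[0,1]$.

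For the random-phase items I would prove a per-round bound. At the $r$-th random-phase draw from $B_j$, condition on the current size-$r$ remaining set $R\subseteq R_j^0$; the drawn item is then uniform on $R$. The recipient $A_{e_r}^{<e_r}$ is a sink of the envy graph, so $i$ does not envy it. Combined with the fact that $v_i(A_i)$ never decreases along the run, this gives $X_r:=v_i(A_{e_r}^{<e_r})\le v_i(A_i)$ deterministically. Claim~\ref{clm:marginal-beta} (in its unthresholded form $v(B\mid A)\ge v(B)-v(A)$) then gives a conditional expected marginal of at least $(v_i(R)-X_r)/r$. This is also nonnegative by monotonicity, so it is at least $(v_i(R)-X_r)_+/r$.

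The main obstacle is that the hypothesis controls only the expectation $\mu_i\le\alpha$, not $X_r$ pointwise, so Claim~\ref{clm:marginal-beta} cannot be invoked with a fixed $\alpha$. I would handle this with Jensen's inequality: $(\cdot)_+$ is convex, so averaging over all randomness gives $\kaliEXPECTED[d_i(e_r)]\ge(\kaliEXPECTED[v_i(R)]-\kaliEXPECTED[X_r])_+/r$. Here $\kaliEXPECTED[X_r]\le\mu_i\le\alpha$. For the other term, Claim~\ref{clm:avg-submodular}, $v'_i\le v_i$ and additivity give $\kaliEXPECTED[v_i(R)]\ge (r/k')\tau$. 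Hence $\kaliEXPECTED[d_i(e_r)]\ge(\tau/k'-\alpha/r)_+$. The point to check carefully is that $R$ is a uniform $r$-subset of $R_j^0$ even though the rounds are interleaved with the ECE dynamics, which holds because random-phase items are drawn uniformly from all unallocated items.

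Finally I would sum over $r=1,\dots,k'$ and split into two cases. If $\tau\le\alpha$, the greedy part alone gives $1-\tau\ge1-\alpha\ge g(\alpha)$. If $\tau>\alpha$, set $\theta=\alpha/\tau$ and apply Lemma~\ref{lem:g-sum-bound}, which gives a random-phase contribution of at least $\tau g(\theta)$. The total is then
\[
1-\tau+\tau g(\alpha/\tau)=1-\alpha-\alpha\ln(\tau/\alpha)\ge g(\alpha),
\]
since $\tau\le1$. Summing over the $n$ bundles gives $\kaliEXPECTED[W_i]\ge n\cdot g(\alpha)$.
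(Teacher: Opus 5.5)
Your proposal is correct and follows the paper's proof essentially step for step. It uses the same telescoping $W_i=\sum_{e}d_i(e)$, the same greedy/random split of each $B_j$ with $\tau=v'_i(R_j^0)$, the same per-round bound $\bigl(\tau/k'-\alpha/r\bigr)_+$ from the sink property $X_r\le v_i(A_i)$, and the same two-case finish via Lemma~\ref{lem:g-sum-bound} and the identity $1-\tau+\tau g(\alpha/\tau)=1-\alpha-\alpha\ln(\tau/\alpha)$. Your explicit Jensen step on $v_i(R)-X_r$, using only $\kaliEXPECTED[X_r]\le\mu_i\le\alpha$, is in fact the precise form of the paper's per-round argument, which invokes Claim~\ref{clm:marginal-beta} with threshold $\alpha$ even though $v_i(A_{e_r}^{<e_r})\le\alpha$ holds only in expectation.
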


{\begin{proof}[Proof sketch] The argument follows the proof of Proposition~\ref{prop:expected_c_j} with $W'_i$ and the case-split contribution $c_i(e)$ replaced by $W_i$ and the true marginal $d_i(e):=v_i(e\mid A_e^{<e})$. The key observation is that at every random-phase round, the recipient bundle $A_{e_r}^{<e_r}$ is a sink in the envy graph, so $X_r:=v_i(A_{e_r}^{<e_r})\le v_i(A_i)$ deterministically, and hence $\kaliEXPECTED[X_r]\le\mu_i\le\alpha$. Applying Claim~\ref{clm:marginal-beta} conditionally and taking outer expectation reproduces the per-round bound $\kaliEXPECTED[d_i(e_r)]\ge \max\{1/k-\alpha/r,0\}$ from the proof of Proposition~\ref{prop:expected_c_j}; the same calculation then yields $\kaliEXPECTED[\sum_{e\in B_j}d_i(e)]\ge g(\alpha)$ per MMS bundle, and summing over the $n$ bundles gives $\kaliEXPECTED[W_i]\ge n\cdot g(\alpha)$. Greedy items are handled exactly as in Proposition~\ref{prop:expected_c_j}. The full proof appears in Appendix~\ref{app:EW-true}.
\end{proof}}

\subsection{Proof of Lemma~\ref{lem:expectation}}\label{lemm:expected_proof}

{In the proof of Lemma~\ref{lem:expectation}, and only in this proof, we assume that the greedy phase of RECE1 selects the agent in each round uniformly at random from the unselected agents. This does not affect any other result in the paper: Proposition~\ref{prop:RECE1_EFL_as} and the chain culminating in Proposition~\ref{prop:EW-true} hold for any ordering of agents in the greedy phase.}

\begin{proof}
Let $\mu_i:=\kaliEXPECTED[v_i(A_i)]$. Sort items in non-increasing $v_i$-value: $v_i(\{e_1\})\ge v_i(\{e_2\})\ge\cdots\ge v_i(\{e_m\})$ (where $m=|\items|$). Define
\[
\beta\ :=\ \tfrac{1}{n-1}\sum_{r=1}^{n-1} v_i(\{e_r\}),
\]
the average $v_i$-value of the $n-1$ highest-valued items.

\textit{Claim 1: $\mu_i\ge\tfrac{n-1}{n}\beta$.}
Item additions to agent $i$'s bundle and cycle-rotation steps each only increase $v_i(A_i)$ (Claim~\ref{clm:envy_cycle_sink}), so $v_i(A_i)\ge v_i(\{e_{i,\mathrm{gr}}\})$ pointwise, where $e_{i,\mathrm{gr}}$ is the item that agent~$i$ receives in the greedy phase. Let $r_i\in\{1,\ldots,n\}$ denote agent~$i$'s position in the random greedy ordering; $\Pr(r_i=r)=1/n$ for each~$r$. At position $r$, exactly $r-1$ items have been allocated; in particular, at most $r-1$ of the items $\{e_1,\ldots,e_r\}$ have been removed, so at least one of them remains, and the greedy rule selects an item of $v_i$-value $\ge v_i(\{e_r\})$. Therefore
\[
\kaliEXPECTED[v_i(\{e_{i,\mathrm{gr}}\})]\ \ge\ \tfrac{1}{n}\sum_{r=1}^{n}v_i(\{e_r\})\ \ge\ \tfrac{1}{n}\sum_{r=1}^{n-1}v_i(\{e_r\})\ =\ \tfrac{n-1}{n}\beta,
\]
and $\mu_i\ge\kaliEXPECTED[v_i(\{e_{i,\mathrm{gr}}\})]\ge\tfrac{n-1}{n}\beta$.

\textit{Claim 2: $\mu_i\ge\bigl(\kaliEXPECTED[W_i]-(n-1)\beta\bigr)/n$.}
Since the allocation $A$ is EFL (Proposition~\ref{prop:RECE1_EFL_as}), for every $j\ne i$ there exists $e_j^\star\in A_j$ with $v_i(A_j\setminus\{e_j^\star\})\le v_i(A_i)$, hence by subadditivity $v_i(A_j)\le v_i(A_i)+v_i(\{e_j^\star\})$. The $n-1$ items $\{e_j^\star:j\ne i\}$ are pairwise distinct (the bundles $A_j$ are disjoint) and lie outside $A_i$, so
\[
\sum_{j\ne i}v_i(\{e_j^\star\})\ \le\ \sum_{r=1}^{n-1}v_i(\{e_r\})\ =\ (n-1)\beta.
\]
Therefore $W_i=v_i(A_i)+\sum_{j\ne i}v_i(A_j)\le n\,v_i(A_i)+(n-1)\beta$. Taking expectations and rearranging gives the claim.

\textit{Combining.} Adding Claims~1 and~2,
\[
2\mu_i\ \ge\ \tfrac{n-1}{n}\beta\ +\ \tfrac{\kaliEXPECTED[W_i]-(n-1)\beta}{n}\ =\ \tfrac{\kaliEXPECTED[W_i]}{n}.
\]
Now fix any $\alpha\in(0,1)$. Either $\mu_i>\alpha$, or $\mu_i\le\alpha$; in the latter case Proposition~\ref{prop:EW-true} gives $\kaliEXPECTED[W_i]\ge n\cdot g(\alpha)$, and the displayed inequality yields $\mu_i\ge g(\alpha)/2$. In either case,
\[
\mu_i\ \ge\ \min\bigl\{\alpha,\ g(\alpha)/2\bigr\}.
\]
RECE1 is oblivious to $\alpha$, so this inequality holds for every $\alpha\in(0,1)$. Choosing $\alpha=\alpha^\star$ (where $\alpha^\star=g(\alpha^\star)/2$ by definition), we obtain {$\mu_i\ge\alpha^\star > 0.2219$.}
\end{proof}

{To provide a lower bound on $\kaliEXPECTED[v_i(A_i)]$, the proof of Lemma~\ref{lem:expectation} uses (among other things) a lower bound on $\kaliEXPECTED[v_i(\{e_{i,\mathrm{gr}}\})]$, where expectation is taken over the random order of the agents in the greedy phase. However, for the purpose of deriving high probability lower bounds (Lemma~\ref{lem:deviation}), the random variable $v_i(\{e_{i,\mathrm{gr}}\})$ is problematic, because it need not be concentrated around its expectation. For this reason, in subsequent sections, our lower bounds for $v_i(A_i)$ will need to analyze more carefully the effect of those items allocated in the greedy phase, instead of expressing their overall effect as one expectation (which was $\beta$ in the proof of Lemma~\ref{lem:expectation}). The proofs there will hold for every order over the agents in the greedy phase.}

\section{High Probability Lower Bound} \label{sec:high_prob}
\label{subsec:hp-relating-Wprime-Wdblprime}

The high-probability analysis in this section relies on the following assumption, which will be removed in Section~\ref{sec:general-case}.

\begin{assumption}[No item is too large]\label{assm:no-large-item}
For some $\alpha > 0$ (whose value will be determined later), every item $e\in \items$ satisfies $v_i(\{e\})\le \alpha$.
\end{assumption}

The following proposition converts a lower bound on the proxy welfare $W'_i$ (Definition~\ref{def:tilde-v}) into a lower bound on $v_i(A_i)$.

\begin{proposition} \label{prop:W'_i_lower_bound_A_i}
   \label{pro:WversusAlpha}
    For every $\alpha > 0$ and $W'_i$ as in Definition~\ref{def:tilde-v}, if $W'_i \ge (2n-1)\alpha$ and Assumption~\ref{assm:no-large-item} holds, then $v_i(A_i) \ge \alpha$.
\end{proposition}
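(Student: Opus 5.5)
We argue by contradiction: assume $v_i(A_i)<\alpha$ and show $W'_i<(2n-1)\alpha$. The first step is Claim~\ref{clm:Wi-ge-Wiprime-or-alpha}, which under $v_i(A_i)<\alpha$ gives $W_i\ge W'_i$. It is, however, cleaner to redo that argument at the level of $v'_i$. If some item $e$ entered a bundle with $v'_i(A_e^{<e})\ge\alpha$, then $v_i(A_e^{<e})\ge\alpha$. The recipient was a sink of the envy graph at that round, so agent $i$ did not envy it. Since $v_i(A_i)$ never decreases along the run (allocation to $i$ and cycle rotations only increase it, by Claim~\ref{clm:envy_cycle_sink}), we would get $v_i(A_i)\ge\alpha$, a contradiction. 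Hence every item falls in the first case of Definition~\ref{def:tilde-v}, so $c_i(e)=v'_i(e\mid A_e^{<e})$. The contributions then telescope within each final bundle, giving $W'_i=\sum_{j}v'_i(A_j)\le\sum_j v_i(A_j)=W_i$.

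The second step bounds $W_i$ using EFL (Proposition~\ref{prop:RECE1_EFL_as}). Fix $j\ne i$. If $|A_j|\le 1$, then $v_i(A_j)\le\alpha$ by Assumption~\ref{assm:no-large-item}. Otherwise there is $g\in A_j$ with $v_i(A_i)\ge v_i(A_j\setminus\{g\})$. Subadditivity (which follows from submodularity and monotonicity) gives $v_i(A_j)\le v_i(A_j\setminus\{g\})+v_i(g)\le v_i(A_i)+\alpha$, again using Assumption~\ref{assm:no-large-item} for $v_i(g)\le\alpha$. In both cases $v_i(A_j)\le v_i(A_i)+\alpha<2\alpha$.

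Summing, $W_i=v_i(A_i)+\sum_{j\ne i}v_i(A_j)<\alpha+(n-1)\cdot2\alpha=(2n-1)\alpha$. Combined with $W'_i\le W_i$ this gives $W'_i<(2n-1)\alpha$, contradicting the hypothesis, so $v_i(A_i)\ge\alpha$.

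The only delicate point is the first step. The case split in Definition~\ref{def:tilde-v} uses $v'_i$ rather than $v_i$, so we need $v'_i\le v_i$, which holds by construction of $v'_i$. We also need the monotonicity of agent $i$'s own value over the run, so that non-envy of a sink at an intermediate round transfers to the final bundle $A_i$. Claim~\ref{clm:Wi-ge-Wiprime-or-alpha} already packages exactly this, so invoking it directly suffices. Everything else is routine.
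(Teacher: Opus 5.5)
Your proof is correct and follows the paper's argument. Claim~\ref{clm:Wi-ge-Wiprime-or-alpha} yields $W_i\ge W'_i$ unless $v_i(A_i)\ge\alpha$, and EF1 together with subadditivity and Assumption~\ref{assm:no-large-item} gives $W_i\le n\,v_i(A_i)+(n-1)\alpha$, which forces $v_i(A_i)\ge\alpha$. The contradiction framing, the separate handling of bundles with $|A_j|\le 1$, and re-deriving the claim by telescoping the $v'_i$-marginals within each bundle are cosmetic differences.
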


\begin{proof}
By Claim~\ref{clm:Wi-ge-Wiprime-or-alpha}, either $v_i(A_i) \ge \alpha$ (and we are done) or $W_i \ge W'_i$. In the latter case, since the allocation is EFL, for every agent $j \ne i$ and some item $e \in A_j$, $v_i(A_i) \ge v_i(A_j \setminus \{e\})$. By subadditivity, $v_i(A_j) \le v_i(A_j \setminus \{e\}) + v_i(\{e\}) \le v_i(A_i) + \alpha$, where the last inequality uses Assumption~\ref{assm:no-large-item}. Hence
\[
W_i = v_i(A_i) + \sum_{j \ne i} v_i(A_j) \le n\,v_i(A_i) + (n-1)\alpha.
\]
Combining $W_i \ge W'_i \ge (2n-1)\alpha$ with the upper bound gives $v_i(A_i) \ge \alpha$.
\end{proof}

We now show that, with high probability, the proxy welfare $W'_i$ is not much smaller than $n\cdot g(\alpha)$. Building on the bundle potential $P(R)$ and the drop lemma introduced in Section~\ref{sec:W_i_expected}, we define a single running total $\Lambda_k$ along the random phase that is a submartingale.

\subsection{Filtration and notation}\label{subsec:hp-filtration}

Fix an agent $i$ throughout. Let $G\subseteq\items$ denote the set of items allocated in the greedy phase of RECE1, and index the rounds of the random phase by $k=1,\ldots,T$, where $T:=|\items|-n$ is the number of random-phase rounds. Let $e_k$ denote the item drawn at random-phase round $k$, and let $\mathcal{F}_k$ be the $\sigma$-algebra generated by all randomness {up to and including} round~$k$. For each MMS bundle $B_j$ (with respect to $v'_i$), let $R_j^k$ be the items of $B_j$ still unallocated after round~$k$; in particular $R_j^0=B_j\setminus G_j$ and $R_j^T=\emptyset$. {We continue to use the positive-part notation {$x_+ := \max\{x,\,0\}$} from Section~\ref{sec:W_i_expected}.}

\subsection{The running-total submartingale $\{\Lambda_k\}$}\label{subsec:hp-Mk}

We begin by defining the running-total submartingale.

\begin{definition}[Running total]\label{def:Mk}
For every $k\in\{0,1,\ldots,T\}$,
\[
\Lambda_k\ :=\ \sum_{g\in G} v'_i(\{g\})\ +\ \sum_{\ell=1}^{k} c_i(e_\ell)\ +\ \sum_{j=1}^{n} P(R_j^k).
\]
\end{definition}

\begin{proposition}[Endpoints]\label{prop:Mk-endpoints}
$\Lambda_0\ \ge\ n\cdot g(\alpha)$ and $\Lambda_T\ =\ W'_i$.
\end{proposition}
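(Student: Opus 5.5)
The two endpoint statements are handled separately, and both follow almost directly from the definition of $\Lambda_k$ (Definition~\ref{def:Mk}) together with results from Section~\ref{sec:W_i_expected}.

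\emph{Upper endpoint.} At $k=T$ every item has been allocated, so $R_j^T=\emptyset$ for every $j$, and $P(\emptyset)=0$ by Definition~\ref{def:P-pot}. Hence
\[
\Lambda_T=\sum_{g\in G}v'_i(\{g\})+\sum_{\ell=1}^{T}c_i(e_\ell).
\]
Every greedy item $g$ enters an initially empty bundle, so $v'_i(A_g^{<g})=v'_i(\emptyset)=0<\alpha$. By Definition~\ref{def:tilde-v}, $c_i(g)=v'_i(g\mid\emptyset)=v'_i(\{g\})$. Since $G$ and the random-phase items $e_1,\dots,e_T$ partition $\items$, this gives $\Lambda_T=\sum_{e\in\items}c_i(e)=W'_i$.

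\emph{Lower endpoint.} At $k=0$ we have
\[
\Lambda_0=\sum_{g\in G}v'_i(\{g\})+\sum_j P(R_j^0),
\]
where $R_j^0=B_j\setminus G_j$. Items of $G$ lying outside every $B_j$ do not exist, because the $B_j$ partition $\items$. So the first sum splits as $\sum_j\sum_{g\in G_j}v'_i(\{g\})$, and $\Lambda_0=\sum_j\bigl(P(R_j^0)+\sum_{g\in G_j}v'_i(\{g\})\bigr)$. For each $j$, we remove the items of $G_j$ from $B_j$ one at a time and apply the left inequality of the drop lemma (Lemma~\ref{lem:drop}) at each step. This telescopes to $P(B_j)\le P(R_j^0)+\sum_{g\in G_j}v'_i(\{g\})$, exactly as in the proof of Proposition~\ref{prop:expected_c_j}. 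Lemma~\ref{lem:P-bundle-bound} then gives $P(B_j)\ge g(\alpha)$, and summing over the $n$ bundles yields $\Lambda_0\ge n\cdot g(\alpha)$.

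No step is a real obstacle. The only point needing care is the bookkeeping in the lower endpoint, namely that the greedy items regroup bundle by bundle and that the iterated drop lemma applies at each intermediate subset. The intermediate subsets are all subsets of $B_j$, which is exactly the domain where Lemma~\ref{lem:drop} holds.
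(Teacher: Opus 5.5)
Your proof is correct and follows the paper's argument essentially step for step. For $\Lambda_T$ you use $P(\emptyset)=0$ together with $c_i(g)=v'_i(\{g\})$ for greedy items, which enter empty bundles. For $\Lambda_0$ you regroup the greedy items by MMS bundle, iterate the drop lemma (Lemma~\ref{lem:drop}) along $G_j$, and then apply Lemma~\ref{lem:P-bundle-bound}.
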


\begin{proof}
For $\Lambda_0$: iterating the drop lemma (Lemma~\ref{lem:drop}) along the items of $G_j$ gives $P(R_j^0)+\sum_{g\in G_j} v'_i(\{g\})\ge P(B_j)$. Summing over $j\in[n]$ and applying Lemma~\ref{lem:P-bundle-bound},
\[
\Lambda_0\ =\ \sum_{g\in G} v'_i(\{g\})+\sum_{j=1}^n P(R_j^0)\ \ge\ \sum_{j=1}^n P(B_j)\ \ge\ n\cdot g(\alpha).
\]
For $\Lambda_T$: every greedy item $g$ enters an initially empty bundle, so $v'_i(A_g^{<g})=0<\alpha$ and Definition~\ref{def:tilde-v} gives $c_i(g)=v'_i(g\mid\emptyset)=v'_i(\{g\})$; in addition $R_j^T=\emptyset$ and $P(R_j^T)=0$. Therefore
\[
\Lambda_T\ =\ \sum_{g\in G} v'_i(\{g\})+\sum_{\ell=1}^T c_i(e_\ell)+0\ =\ \sum_{g\in G} c_i(g)+\sum_{\ell=1}^T c_i(e_\ell)\ =\ W'_i. \qedhere
\]
\end{proof}

{
\begin{proposition}[Submartingale]\label{prop:Mk-submart}
$\{\Lambda_k\}_{k=0}^{T}$ is a submartingale with respect to $\{\mathcal{F}_k\}$, i.e.\ $\kaliEXPECTED[\Lambda_k-\Lambda_{k-1}\mid\mathcal{F}_{k-1}]\ge 0$ for every $k\in[T]$.
\end{proposition}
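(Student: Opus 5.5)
The plan is to fix a round $k\in[T]$ and condition on $\mathcal{F}_{k-1}$. Then I condition further on the index $j^\star$ of the MMS bundle containing the random item $e_k$. Given $\mathcal{F}_{k-1}$, the item $e_k$ is uniform over all unallocated items. So conditioned additionally on $j^\star=j$, it is uniform on $R_j^{k-1}$. It therefore suffices to prove $\kaliEXPECTED[\Lambda_k-\Lambda_{k-1}\mid\mathcal{F}_{k-1},j^\star=j]\ge 0$ for each $j$ with $R_j^{k-1}\neq\emptyset$, and then average over $j$.

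The first step is the increment decomposition. Between rounds $k-1$ and $k$, the greedy term of $\Lambda$ is unchanged, and the running sum gains exactly $c_i(e_k)$. Among the potentials, only $P(R_j^{k-1})$ changes, becoming $P(R_j^{k-1}\setminus\{e_k\})$. Hence
\[
\Lambda_k-\Lambda_{k-1}=c_i(e_k)-\bigl(P(R_j^{k-1})-P(R_j^{k-1}\setminus\{e_k\})\bigr).
\]

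The second step handles the expected drop in potential. Since $e_k$ is uniform on $R_j^{k-1}$, the recursion of Claim~\ref{obs:P-recursion}, rearranged, gives this expected drop exactly:
\[
\kaliEXPECTED_{e'\sim\mathbb{U}(R_j^{k-1})}\bigl[P(R_j^{k-1})-P(R_j^{k-1}\setminus\{e'\})\bigr]=\frac{(v'_i(R_j^{k-1})-\alpha)_+}{|R_j^{k-1}|}.
\]

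The third step lower bounds the expected credit by the same quantity, and this is where I expect the real work to lie. The subtle point is that the recipient bundle $A:=A_{e_k}^{<e_k}$ must not depend on which item is drawn. This holds because ECE chooses the sink agent before drawing the item, so $A$ is $\mathcal{F}_{k-1}$-measurable. Once that is noted, Claim~\ref{clm:per-round-contrib} applies directly with $S:=R_j^{k-1}$.

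For safety I would also split into the two cases of Definition~\ref{def:tilde-v} explicitly. If $v'_i(A)<\alpha$, the credit is the marginal $v'_i(e_k\mid A)$, and Claim~\ref{clm:marginal-beta} applied to $v'_i$ with $\beta=v'_i(R_j^{k-1})$ gives the bound. If $v'_i(A)\ge\alpha$, the credit is $v'_i(\{e_k\})$. By additivity of $v'_i$ on $B_j$, its average is $v'_i(R_j^{k-1})/|R_j^{k-1}|$, which is at least $(v'_i(R_j^{k-1})-\alpha)_+/|R_j^{k-1}|$.

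Combining the three steps, the conditional expected increment is nonnegative for every $j$, and averaging over $j^\star$ proves the submartingale property. Adaptedness of $\Lambda_k$ to $\mathcal{F}_k$ is immediate, since each of its terms is determined by the run up to round $k$.
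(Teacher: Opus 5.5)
Your proposal is correct and follows the paper's proof essentially step for step. You condition on $\mathcal{F}_{k-1}$ and $j^\star=j$, write the increment as credit minus potential drop, compute the expected drop exactly via Claim~\ref{obs:P-recursion}, and lower-bound the expected credit through the two cases of Definition~\ref{def:tilde-v}; your remark that the recipient bundle is fixed before $e_k$ is drawn (hence $\mathcal{F}_{k-1}$-measurable) is something the paper uses only implicitly, and it is worth stating.
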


{The full proof appears in Appendix~\ref{app:Mk-submart}. The idea is to compare, for each round $k$, the expected credit $c_i(e_k)$ added to $\Lambda_k$ against the expected drop in $\sum_j P(R_j^k)$. Only one bundle's potential changes at round $k$: the one whose remaining set contains $e_k$. So the increment splits as
\[
\Lambda_k-\Lambda_{k-1}\ =\ \underbrace{c_i(e_k)}_{\text{credit added}}\ -\ \underbrace{\bigl[P(R)-P(R\setminus\{e_k\})\bigr]}_{\text{drop in potential}},
\]
where $R$ is the remaining items of the relevant bundle just before round~$k$. Conditioning on $\mathcal{F}_{k-1}$ and on which $B_j$ bundle $e_k$ originates from, the recursion for $P$ (Claim~\ref{obs:P-recursion}) then yields the drop's conditional expectation \emph{exactly}: {$(v'_i(R)-\alpha)_+/|R|$}. The per-round contribution claim (Claim~\ref{clm:per-round-contrib}) yields a matching \emph{lower bound} on the credit's conditional expectation. Subtracting and averaging over all possible $B_j$ gives the submartingale property.}
} 

{
\begin{lemma}[Bounded increments and quadratic variation]\label{lem:Mk-bounded-var}
Deterministically, for every $k\in[T]$:
\begin{enumerate}
\item[(a)] $|\Lambda_k-\Lambda_{k-1}|\ \le\ v'_i(\{e_k\})\ \le\ \alpha$;
\item[(b)] $\sum_{k=1}^{T}(\Lambda_k-\Lambda_{k-1})^2\ \le\ \alpha\,n$.
\end{enumerate}
\end{lemma}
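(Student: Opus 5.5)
The proof starts from the increment decomposition already used for Proposition~\ref{prop:Mk-submart}. At round $k$ only the remaining set of the MMS bundle $B_j$ containing $e_k$ changes, so
\[
\Lambda_k-\Lambda_{k-1}\ =\ c_i(e_k)-\Delta P_k,\qquad \Delta P_k:=P(R_j^{k-1})-P(R_j^{k-1}\setminus\{e_k\}).
\]
Both terms lie in the same interval. By Observation~\ref{obs:block-not-opening}, $0\le v'_i(e_k\mid A_{e_k}^{<e_k})\le c_i(e_k)\le v'_i(\{e_k\})$, where non-negativity comes from monotonicity of $v'_i$. By the drop lemma (Lemma~\ref{lem:drop}), $0\le \Delta P_k\le v'_i(\{e_k\})$. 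The difference of two numbers in $[0,a]$ lies in $[-a,a]$, so $|\Lambda_k-\Lambda_{k-1}|\le v'_i(\{e_k\})$. Finally, $v'_i(\{e_k\})\le v_i(\{e_k\})\le\alpha$ by pointwise domination (property (i) of $v'_i$) and Assumption~\ref{assm:no-large-item}. This proves (a).

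For (b), write $D_k:=\Lambda_k-\Lambda_{k-1}$. I would bound $D_k^2\le |D_k|\cdot\alpha\le \alpha\, v'_i(\{e_k\})$ using (a). Summing over $k$ gives
\[
\sum_{k=1}^T D_k^2\ \le\ \alpha\sum_{k=1}^T v'_i(\{e_k\})\ \le\ \alpha\sum_{e\in\items} v'_i(\{e\}).
\]
The remaining step, and the only place with real content, is the deterministic bound $\sum_{e\in\items}v'_i(\{e\})\le n$. The MMS bundles $B_1,\dots,B_n$ partition $\items$, and by Lemma~\ref{lem:MMSoverAdditive} together with property (ii), $\sum_{e\in B_j}v'_i(\{e\})=v'_i(B_j)=\hat v_i(B_j)=1$ for each $j$. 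Hence the total is exactly $n$, and (b) follows.

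The main obstacle is this last step. It is exactly why $v'_i$ was built to be additive on each $B_j$ after capping: with $v_i$ itself, the stand-alone values of the items of a bundle could sum to far more than $1$. I would also check two smaller points. First, $D_k$ is well defined even when $e_k$'s bundle $B_j$ is ``exhausted'' or $R_j^{k-1}\setminus\{e_k\}$ is empty; this is covered by $P(\emptyset)=0$ and the base case of the drop lemma. Second, the drop lemma is stated with respect to the same $v'_i$ that is used in $c_i$.
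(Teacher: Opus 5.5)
Your proposal is correct and follows essentially the same route as the paper. In (a) you split the increment as $c_i(e_k)-\Delta P_k$ and place both terms in $[0,v'_i(\{e_k\})]$ using Observation~\ref{obs:block-not-opening} and the drop lemma (Lemma~\ref{lem:drop}). In (b) you use $D_k^2\le \alpha\,v'_i(\{e_k\})$ together with additivity of $v'_i$ on each MMS bundle, which gives $\sum_{e\in\items}v'_i(\{e\})=n$.
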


\begin{proof}
Since only one bundle (denoted $R_{j^\star}$) shrinks at round $k$, $\Lambda_k-\Lambda_{k-1}=c_i(e_k)-\bigl[P(R_{j^\star}^{k-1})-P(R_{j^\star}^{k-1}\setminus\{e_k\})\bigr]$.

(a) By Observation~\ref{obs:block-not-opening}, $0\le c_i(e_k)\le v'_i(\{e_k\})$. By Lemma~\ref{lem:drop}, $0\le P(R_{j^\star}^{k-1})-P(R_{j^\star}^{k-1}\setminus\{e_k\})\le v'_i(\{e_k\})$. Both terms in $\Lambda_k-\Lambda_{k-1}$ lie in $[0,v'_i(\{e_k\})]$, so the difference lies in $[-v'_i(\{e_k\}),v'_i(\{e_k\})]$. Finally, $v'_i(\{e_k\})\le v_i(\{e_k\})\le\alpha$ by $v'_i\le v_i$ and Assumption~\ref{assm:no-large-item}.

(b) From (a), $(\Lambda_k-\Lambda_{k-1})^2\le v'_i(\{e_k\})^2\le \alpha\,v'_i(\{e_k\})$. Summing and using subset-additivity of $v'_i$ on each $B_j$,
\[
\sum_{k=1}^{T}(\Lambda_k-\Lambda_{k-1})^2\ \le\ \alpha\sum_{k=1}^{T} v'_i(\{e_k\})\ \le\ \alpha\sum_{e\in\items} v'_i(\{e\})\ =\ \alpha\sum_{j=1}^n v'_i(B_j)\ =\ \alpha\,n,
\]
where the final equality uses $v'_i(B_j)=1$ for every $j$. \qedhere
\end{proof}
}

{\subsection{Direct concentration of $W'_i$}\label{subsec:hp-concentration}

\begin{theorem}[Direct concentration of $W'_i$]\label{thm:Wprime-concentration}
Under Assumption~\ref{assm:no-large-item}, for every $t>0$,
\[
\Prb\bigl[W'_i\ <\ n\cdot g(\alpha)-t\bigr]\ \le\ \exp\left(-\frac{t^2}{2\alpha\,n+2\alpha\,t}\right).
\]
\end{theorem}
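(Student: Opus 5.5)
The plan is to apply Freedman's inequality (Theorem~\ref{thm:freedman}) to the centered submartingale $Z_k:=\Lambda_k-\Lambda_0$ for $k=0,\dots,T$. By Proposition~\ref{prop:Mk-submart}, $\{\Lambda_k\}$ is a submartingale with respect to $\{\mathcal{F}_k\}$. The quantity $\Lambda_0$ is deterministic: the greedy phase is deterministic, so $G$ and the sets $R_j^0$ are fixed. Hence $Z_k$ is also a submartingale, with $Z_0=0$ and increments $D_k=\Lambda_k-\Lambda_{k-1}$.

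By Lemma~\ref{lem:Mk-bounded-var}(a), $D_k\ge -\alpha$ deterministically, so we may take $c:=\alpha$. By Lemma~\ref{lem:Mk-bounded-var}(b), the quadratic variation satisfies $V_T=\sum_{k}D_k^2\le \alpha n$ deterministically. Thus, setting $v^2:=\alpha n$, the event $\{V_T\le v^2\}$ holds surely. Freedman's inequality then gives
\[
\Prb[Z_T\le -t]\ \le\ \exp\!\left(-\frac{t^2}{2(\alpha n+\alpha t)}\right)\ =\ \exp\!\left(-\frac{t^2}{2\alpha n+2\alpha t}\right).
\]

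To translate this into a statement about $W'_i$, use Proposition~\ref{prop:Mk-endpoints}: $\Lambda_T=W'_i$ and $\Lambda_0\ge n\cdot g(\alpha)$. If $W'_i<n\cdot g(\alpha)-t$, then $Z_T=W'_i-\Lambda_0<n g(\alpha)-t-\Lambda_0\le -t$. So the event in the theorem is contained in $\{Z_T\le -t\}$, and the bound follows.

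The only delicate point is checking that Freedman applies as stated. The hypotheses must hold almost surely: the lower bound on the increments and the deterministic variance bound. These are exactly what Lemma~\ref{lem:Mk-bounded-var} supplies, using Assumption~\ref{assm:no-large-item} and the additivity of $v'_i$ on the bundles $B_j$.

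A degenerate case also needs a remark. If $T=0$ (that is, $m=n$), then $W'_i=\Lambda_0\ge n g(\alpha)$ and the probability is $0$. The argument otherwise needs no new ideas, since all the substantive work sits in the submartingale property and the bounded-variation lemma.
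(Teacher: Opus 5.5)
Your proposal is correct and is essentially the paper's own proof: center $\Lambda_k$ at the deterministic $\Lambda_0$, apply Theorem~\ref{thm:freedman} with $c=\alpha$ and $v^2=\alpha n$ using Lemma~\ref{lem:Mk-bounded-var}, and transfer the bound to $W'_i$ via Proposition~\ref{prop:Mk-endpoints}. One caution applies equally to the paper: when the variance term is the realized sum $\sum_k D_k^2$, the lower-tail bound for a submartingale holds under an \emph{upper} bound $D_k\le c$, not under $D_k\ge -c$ (a one-step martingale with $D=-t$ with probability $9/10$ and $D=9t$ with probability $1/10$, taking $c=t$ and $v^2=t^2$, violates the latter form), so you should invoke the two-sided bound $|D_k|\le\alpha$ of Lemma~\ref{lem:Mk-bounded-var}(a) and use $D_k\le\alpha$, which gives exactly the same estimate.
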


\begin{proof}
By Propositions~\ref{prop:Mk-endpoints} and~\ref{prop:Mk-submart}, $\{\Lambda_k\}$ is a submartingale with $\Lambda_0\ge n\cdot g(\alpha)$ deterministically and $\Lambda_T=W'_i$. Set $Z_k:=\Lambda_k-\Lambda_0$; then $\{Z_k\}$ is a submartingale with $Z_0=0$, {increments $D_k:=\Lambda_k-\Lambda_{k-1}\ge -\alpha$ deterministically (Lemma~\ref{lem:Mk-bounded-var}(a))}, and quadratic variation $V_T:=\sum_k D_k^2\le \alpha\,n$ deterministically (Lemma~\ref{lem:Mk-bounded-var}(b)). Theorem~\ref{thm:freedman} applied with $c:=\alpha$ and $v^2:=\alpha\,n$, together with $\Prb[V_T\le \alpha\,n]=1$, gives, for every $t>0$,
\[
\Prb[Z_T\le -t]\ =\ \Prb[Z_T\le -t\text{ and }V_T\le\alpha\,n]\ \le\ \exp\left(-\frac{t^2}{2(\alpha\,n+\alpha\,t)}\right).
\]
Finally, $W'_i=\Lambda_T=\Lambda_0+Z_T\ge n\cdot g(\alpha)+Z_T$, so the event $\{W'_i<n\cdot g(\alpha)-t\}$ is contained in $\{Z_T<-t\}$, yielding the stated bound. \qedhere
\end{proof}

\begin{corollary}[High-probability lower bound on $W'_i$]\label{cor:Wprime-fail-prob}
Under Assumption~\ref{assm:no-large-item}, for every $\eta\in(0,1)$, with probability at least $1-\eta$,
\[
W'_i\ \ge\ n\cdot g(\alpha)\ -\ \delta(\eta)\cdot n,\qquad
\delta(\eta)\ :=\ \frac{\alpha\ln(1/\eta)+\sqrt{\alpha^2\ln^2(1/\eta)+2\alpha\,n\,\ln(1/\eta)}}{n}.
\]
\end{corollary}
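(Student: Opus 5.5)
The corollary follows by inverting the tail bound of Theorem~\ref{thm:Wprime-concentration}. The plan is to choose $t$ so that the right-hand side $\exp\bigl(-t^2/(2\alpha n+2\alpha t)\bigr)$ equals $\eta$. Setting $L:=\ln(1/\eta)>0$, this amounts to solving
\[
\frac{t^2}{2\alpha n+2\alpha t}\ =\ L,
\]
that is, the quadratic $t^2-2\alpha L\,t-2\alpha nL=0$. Its positive root is
\[
t^\star\ =\ \alpha L+\sqrt{\alpha^2L^2+2\alpha nL},
\]
which is exactly $\delta(\eta)\cdot n$ by the definition of $\delta(\eta)$. The root is strictly positive since $\alpha,n,L>0$, so Theorem~\ref{thm:Wprime-concentration} applies with $t=t^\star$.

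We then plug in. The map $t\mapsto t^2/(2\alpha n+2\alpha t)$ is increasing on $t>0$; its derivative has numerator $2t(2\alpha n+2\alpha t)-2\alpha t^2=2\alpha t^2+4\alpha n t>0$. At $t^\star$ this map equals $L$ exactly, so the theorem gives
\[
\Prb\bigl[W'_i<n\,g(\alpha)-t^\star\bigr]\ \le\ e^{-L}\ =\ \eta .
\]
Taking the complement, $W'_i\ge n\,g(\alpha)-\delta(\eta)\,n$ holds with probability at least $1-\eta$. Assumption~\ref{assm:no-large-item} is carried over from the theorem's hypothesis.

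No step here is a real obstacle. The only points needing care are picking the positive root of the quadratic, which requires $\eta\in(0,1)$ so that $L>0$, and checking that the strict inequality in the theorem's event matches the complement event used in the corollary.
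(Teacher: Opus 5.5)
Your proposal is correct and matches the paper's proof. Both take $t^\star$ to be the positive root of $t^2-2\alpha\ln(1/\eta)\,t-2\alpha n\ln(1/\eta)=0$, identify it with $\delta(\eta)\cdot n$, and substitute it into Theorem~\ref{thm:Wprime-concentration}; your monotonicity check is harmless but unnecessary, since you evaluate the bound exactly at $t^\star$.
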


\begin{proof}
Set $t=t^\star$ to satisfy $\exp(-t^2/(2\alpha\,n+2\alpha\,t))=\eta$, equivalently $t^2-2\alpha\ln(1/\eta)\,t-2\alpha\,n\,\ln(1/\eta)=0$. The positive root is $t^\star=\alpha\ln(1/\eta)+\sqrt{\alpha^2\ln^2(1/\eta)+2\alpha\,n\,\ln(1/\eta)}=\delta(\eta)\cdot n$. Theorem~\ref{thm:Wprime-concentration} then gives $\Prb[W'_i<n\cdot g(\alpha)-\delta(\eta)\cdot n]\le\eta$.
\end{proof}
}

\subsection{From $W'_i$ to $v_i(A_i)$}\label{subsec:min-lemma}

{Proposition~\ref{thm:alpha-mms-hp} below transfers our bounds on $W'$ to bounds on $v_i(A_i)$.}

{
\begin{proposition}[High-probability $\alpha$-MMS guarantee]\label{thm:alpha-mms-hp}
Assume Assumption~\ref{assm:no-large-item}. Let $(\alpha,n_0)$ be a pair satisfying $\alpha\in(0,1)$ and
\begin{equation}\label{eq:hp-alpha-n}
n\cdot g(\alpha)\ -\ \alpha\ln n\ -\ \sqrt{\alpha^2\ln^2 n+2\alpha\,n\,\ln n}\ \ge\ (2n-1)\,\alpha
\qquad\text{for every }n\ge n_0.
\end{equation}
Then, for every $n\ge n_0$, with probability {at least} $1-1/n$,
\[
v_i(A_i)\ \ge\ \alpha.
\]
\end{proposition}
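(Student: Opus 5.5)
The plan is to combine Corollary~\ref{cor:Wprime-fail-prob} with Proposition~\ref{prop:W'_i_lower_bound_A_i}, choosing the failure probability $\eta=1/n$. Fix $n\ge n_0$ and an agent $i$ satisfying Assumption~\ref{assm:no-large-item}. With $\eta=1/n$ we have $\ln(1/\eta)=\ln n$. The corollary then says that, with probability at least $1-1/n$,
\[
W'_i\ \ge\ n\cdot g(\alpha)-\delta(1/n)\cdot n\ =\ n\cdot g(\alpha)-\alpha\ln n-\sqrt{\alpha^2\ln^2 n+2\alpha\,n\ln n}.
\]
(For $n=1$ take $\eta$ just below $1$, or handle $n=1$ trivially, since then $A_1=\items$ and $v_i(A_i)\ge 1\ge\alpha$. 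If $n_0\ge 2$, this case does not arise.)

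Next, on this event I apply hypothesis~\eqref{eq:hp-alpha-n}, which holds for the given $n\ge n_0$. It shows that the right-hand side above is at least $(2n-1)\alpha$, so $W'_i\ge(2n-1)\alpha$.

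Finally, Proposition~\ref{prop:W'_i_lower_bound_A_i} applies on the same event, since Assumption~\ref{assm:no-large-item} holds. It yields $v_i(A_i)\ge\alpha$.

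There is no substantial obstacle: the concentration work is already done in Theorem~\ref{thm:Wprime-concentration}. The only care needed is bookkeeping. First, the probability statement in Corollary~\ref{cor:Wprime-fail-prob} is for the event $W'_i\ge n g(\alpha)-\delta n$, which is exactly the form we plug into~\eqref{eq:hp-alpha-n}. Second, $\eta=1/n$ lies in $(0,1)$ when $n\ge 2$.
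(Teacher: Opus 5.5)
Your proposal is correct and matches the paper's proof: apply Corollary~\ref{cor:Wprime-fail-prob} with $\eta=1/n$, use \eqref{eq:hp-alpha-n} to get $W'_i\ge(2n-1)\alpha$ on that event, and conclude with Proposition~\ref{pro:WversusAlpha}. Your separate handling of $n=1$ is harmless but not needed, since there the claimed probability bound $1-1/n=0$ is vacuous.
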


\begin{proof}
Apply Corollary~\ref{cor:Wprime-fail-prob} with $\eta:=1/n$: with probability at least $1-1/n$,
\[
W'_i\ \ge\ n\cdot g(\alpha)\ -\ \alpha\ln n\ -\ \sqrt{\alpha^2\ln^2 n+2\alpha\,n\,\ln n}\ \ge\ (2n-1)\,\alpha
\]
by~\eqref{eq:hp-alpha-n}. Proposition~\ref{pro:WversusAlpha} on this event gives $v_i(A_i)\ge\alpha$.
\end{proof}
}

\section{Handling Large Items}\label{sec:general-case}

{We now remove Assumption~\ref{assm:no-large-item} and analyze the case where some bundles may contain items whose $v_i$-value exceeds the threshold $\alpha$. The main idea is to separate such bundles and run the high-probability analysis only on the remaining ones.}

\begin{definition}[Heavy items, Light-items bundle and heavy-item bundle]\label{def:light-heavy-bundle}
Fix $\alpha\in(0,1)$ and an agent $i\in\agents$. Call an item \emph{$\alpha$-heavy} for agent $i$ if $v_i(e)\ge\alpha$. Consider the final allocation $A$. A bundle $A_j$ is a \emph{light-items bundle} with respect to agent~$i$ if no item allocated to $A_j$ during the greedy-choice step has $v_i$-value~$\ge\alpha$. Otherwise, $A_j$ is a \emph{heavy-item bundle} with respect to agent~$i$; equivalently, $A_j$ received at least one $\alpha$-heavy item during the greedy phase.
\end{definition}

\begin{definition}[{Light-items bundles}]\label{def:eff-bundles}
Fix $\alpha\in(0,1)$ and an agent $i\in\agents$. For the final allocation $A$,
the set of {\emph{light-items bundles with respect to~$i$}} is

\[
{\allocs^i_{\mathrm{light}}}\ :=\ \Bigl\{\, A_j\ :\ \text{no item allocated to $A_j$ during the greedy-choice step has } v_i\text{-value}\ge \alpha \Bigr\},
\]
{and we write $n_i := |\allocs^i_{\mathrm{light}}|$ for its cardinality.}
\end{definition}

{
\begin{proposition}\label{prop:reduce_to_nprime_i}
{Let {$\allocs^i_{\mathrm{light}}$} and $n_i$ be as in Definition~\ref{def:eff-bundles}.}
{For $\alpha = 0.142$ and $n_i \ge 8$, it holds that $v_i(A_i) \ge \alpha \cdot \mathrm{MMS}(\items,v_i,n)$ with probability greater than $1 - \frac{1}{n_i}$.}
\end{proposition}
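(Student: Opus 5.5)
We reduce to the setting of Section~\ref{sec:high_prob} by discarding the heavy-item bundles and the heavy items. Scale so that $\mathrm{MMS}(\items,v_i,n)=1$. Let $H\subseteq\items$ be the set of items placed during the greedy phase into heavy-item bundles; then $|H|=n-n_i$, and each heavy-item bundle starts with a single $\alpha$-heavy item. If $A_i$ is itself a heavy-item bundle, monotonicity of $v_i(A_i)$ along the run gives $v_i(A_i)\ge\alpha$ deterministically, so we may assume $A_i\in\allocs^i_{\mathrm{light}}$.

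\textbf{Step 1: no heavy-item bundle grows.} If a heavy-item bundle receives any further item, it is an envy-graph sink at that moment, so $v_i(A_i)\ge v_i(\text{that bundle})\ge\alpha$ and we are done. Otherwise the heavy-item bundles are exactly singletons of $H$, and all items of $\items':=\items\setminus H$ go to the $n_i$ light bundles. Restricted to those bundles and to $\items'$, the process is RECE1 with $n_i$ agents. The greedy items of the light bundles were chosen by their own agents' greedy rule, and the random phase draws uniformly from the remaining items, which all lie in $\items'$. By Claim~\ref{clm:not_too_small_mms}, $\mathrm{MMS}(\items',v_i,n_i)\ge 1$.

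\textbf{Step 2: remove large items of $\items'$.} There may still be items of $\items'$ with $v_i(e)\ge\alpha$. The light bundles' greedy items have value $<\alpha$ by definition, so any such item must be in the random phase. Each time a sink bundle receives it, that bundle is worth at least $\alpha$ to $i$, and $i$ does not envy it; hence $v_i(A_i)\ge\alpha$. So either we are done, or no item of $\items'$ is $\alpha$-heavy. In the latter case Assumption~\ref{assm:no-large-item} holds on the sub-instance, and the analysis of Section~\ref{sec:high_prob} applies with $n$ replaced by $n_i$, using the MMS partition of $\items'$ into $n_i$ bundles after rescaling so that its value is exactly $1$. Rescaling only weakens the threshold, since the original MMS bound is at least $1$.

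\textbf{Step 3: the reduced sub-instance.} Applying Proposition~\ref{thm:alpha-mms-hp} to the sub-instance gives, with probability at least $1-1/n_i$, that $v_i(A_i)\ge\alpha$. This holds provided inequality~\eqref{eq:hp-alpha-n} is valid with $n_0=8$ and $\alpha=0.142$. Two details need checking here.

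\begin{itemize}
\item The EFL argument in Proposition~\ref{pro:WversusAlpha} should only compare against the light bundles. This is fine, because $W_i$ is redefined as a sum over the light bundles, and EFL holds for every pair.
\item The martingale construction does not care that there are extra agents holding singletons. The filtration only involves draws from $\items'$, and the recipient is always a sink.
\end{itemize}

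\textbf{Step 4: the numerical verification (main obstacle).} I expect the numerical check to be the hardest part: showing $n\,g(\alpha)-\alpha\ln n-\sqrt{\alpha^2\ln^2 n+2\alpha n\ln n}\ge(2n-1)\alpha$ for all $n\ge 8$ at $\alpha=0.142$.

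For $\alpha=0.142$, $g(\alpha)\approx 1-0.142-0.142\cdot1.952\approx0.581$, so $g-2\alpha\approx0.297$. The inequality becomes $0.297n+0.142\ge 0.142\ln n+\sqrt{0.0202\ln^2n+0.284\,n\ln n}$.

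\begin{itemize}
\item At $n=8$: the left side is about $2.52$, and the right side is about $0.295+\sqrt{0.087+4.72}\approx 2.49$. This barely holds, which is presumably why the constants were chosen this way.
\item For larger $n$: the left side grows linearly while the right side grows like $\sqrt{n\ln n}$. Monotonicity can be shown by checking that the derivative of the difference is positive for $n\ge8$, or by checking that the ratio of the right side to the left side decreases in $n$.
\end{itemize}

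I would verify $n=8$ explicitly and then argue monotonicity analytically.
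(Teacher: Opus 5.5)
Your overall route is the paper's. You discard the heavy-item bundles, pass to $n_i$ bundles, rerun the submartingale/Freedman argument of Section~\ref{sec:high_prob} with $n_i$ in place of $n$, and check \eqref{eq:hp-alpha-n} at $n_i=8$ plus monotonicity. Your numbers match the paper's slack of about $0.028$. Taking an MMS partition of $\items'=\items\setminus H$ into $n_i$ parts via Claim~\ref{clm:not_too_small_mms} is an equally valid variant of the paper's choice, which is to delete the original MMS bundles that contain an item of $H$.

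\textbf{Step~2 is justified incorrectly.} The sink property only says that agent $i$ does not envy the recipient bundle \emph{before} the heavy item $h$ is added to it. Afterwards $i$ may well envy it, so nothing you wrote forces $v_i(A_i)\ge\alpha$. The correct reason comes from the greedy phase. Since $h$ is allocated in the random phase, it was still available when agent $i$ made her greedy pick $e_i$. Hence $v_i(A_i)\ge v_i(\{e_i\})\ge v_i(\{h\})\ge\alpha$. This depends only on the deterministic greedy phase, so the case split does not condition on random-phase events. You were right to raise the point: the paper's proof says Assumption~\ref{assm:no-large-item} holds on the surviving MMS bundles ``by definition'', which silently relies on this same greedy argument.

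\textbf{Steps~1 and~3 condition on a random event.} You restrict to the event that no heavy-item bundle grows, then apply Proposition~\ref{thm:alpha-mms-hp} ``to the sub-instance''. Under that conditioning, the draws are no longer uniform given the past, so the submartingale property of $\Lambda_k$ is not guaranteed. The restricted run is also not literally RECE1 with $n_i$ agents: sinks are taken in the full $n$-agent envy graph, and rotations can pass light bundles through agents holding heavy singletons.

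Make the argument unconditional, as the paper does with its proxy $\hat W''_i$:
\begin{itemize}
\item Define $P$ and $c_i$ with respect to your $n_i$ bundles of $\items'$, and credit each item of $\items'$ wherever it lands.
\item Claim~\ref{clm:per-round-contrib} and Lemma~\ref{lem:Mk-bounded-var} hold for any recipient bundle, heavy singletons included.
\item Every random-phase draw lies in $\items'$, because $H$ is used up in the greedy phase.
\end{itemize}
This gives $\Pr[\hat W''_i<(2n_i-1)\alpha]\le 1/n_i$ for the unconditioned run. Only then intersect with the complement of the ``heavy bundle grows'' event. On that complement, $\hat W''_i$ is exactly the credit collected by the light bundles, and your EFL step finishes the proof.

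\textbf{Strict versus non-strict probability.} Step~3 gives probability \emph{at least} $1-1/n_i$. The statement, and the union bound in Theorem~\ref{thm:constant_bound_all_nprime}, need strict inequality. This follows from the strictly positive slack at every $n_i\ge 8$: take $\eta$ slightly below $1/n_i$ in Corollary~\ref{cor:Wprime-fail-prob}.
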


\begin{proof}
Let $S$ be the set of all items allocated to bundles outside {$\allocs^i_{\mathrm{light}}$} in the greedy phase. {(All items in $S$ are $\alpha$-heavy for agent $i$.)}

Consider the MMS partition of $\items$ for agent~$i$, and delete every bundle that intersects~$S$.
Since there are $n$ bundles in the MMS partition and each item of~$S$ belongs to at most one of them,
at most $n - n_i$ bundles are deleted; at least $n_i$ remain. Call the collection of all remaining bundles $\bundles^i_{\mathrm{eff}}$.

{Now define the restricted welfare functions. Recall the welfare {functions} from Section~\ref{subsec:proxies}. Let}
\[
\hat{W}_i\ :=\ \sum_{A_j\in {\allocs^i_{\mathrm{light}}}} v_i(A_j),
\qquad
\hat{W}'_i\ :=\ \sum_{A_j\in {\allocs^i_{\mathrm{light}}}} \sum_{e \in A_j} c_i(e).
\]
{Let $\hat{W}''_i$ denote the total contribution of the bundles in $\bundles^i_{\mathrm{eff}}$:}
\[
\hat{W}''_i \ :=\ {\sum_{B_t \in \bundles^i_{\mathrm{eff}}}} C_t,
\]
{where each $C_t$ denotes the contribution of the $t$-th bundle in $\bundles^i_{\mathrm{eff}}$, computed using $c_i(\cdot)$.}

{If $\hat{W}''_i>\hat{W}'_i$, then some bundle outside {$\allocs^i_{\mathrm{light}}$} received an additional item (beyond its {$\alpha$-heavy} item),}
{and Claim~\ref{clm:Wi-ge-Wiprime-or-alpha} gives $v_i(A_i)\ge\alpha$.} Thus, we may assume that $\hat{W}'_i\ge \hat{W}''_i$. {We apply Proposition~\ref{pro:WversusAlpha} with $\hat{W}''_i$ replacing $W'_i$ and $n_i$ replacing $n$. As Assumption~\ref{assm:no-large-item} holds on $\bundles^i_{\mathrm{eff}}$ (by definition), we conclude that if $\hat{W}''_i \ge (2n-1)\alpha$, then $v_i(A_i) \ge \alpha$.

By the expectation lower bound from Section~\ref{sec:W_i_expected}, for every $t$ we have
\[
\mathbb{E}[C_t] \ \ge\ g(\alpha).
\]

{Moreover, the concentration analysis of Section~\ref{sec:high_prob} (Corollary~\ref{cor:Wprime-fail-prob} applied with $n_i$ in place of $n$ and $\eta := 1/n_i$) yields, with probability at least $1-\frac{1}{n_i}$,
\[
\hat{W}''_i \ \ge\ n_i\cdot g(\alpha)\ -\ \alpha\ln n_i\ -\ \sqrt{\alpha^2\ln^2 n_i + 2\alpha\,n_i\,\ln n_i}.
\]
{For $\alpha=0.142$ and every $n_i\ge 8$, this lower bound exceeds $(2n_i-1)\alpha$. To verify this at $n_i=8$, recall (Definition~\ref{def:g-alpha}) $g(\alpha)=1-\alpha-\alpha\ln(1/\alpha)$, so $g(0.142)\approx 0.5808$. The lower bound at $n_i=8$ evaluates to $8\cdot g(0.142)-0.142\ln 8-\sqrt{0.142^2\ln^2 8+2\cdot 0.142\cdot 8\cdot\ln 8}\approx 2.1578$, while $(2\cdot 8-1)\cdot 0.142=2.130$, leaving a slack of $\approx 0.028$. {The slack is monotonically increasing in $n_i$ (rewriting it as $n_i\bigl(g(\alpha)-2\alpha\bigr)+\alpha-\alpha\ln n_i-\sqrt{\alpha^2\ln^2 n_i+2\alpha n_i\ln n_i}$, the linear-in-$n_i$ term has positive slope by Observation~\ref{obs:g-ge-2alpha} and dominates the sublinear correction).}}  Hence Proposition~\ref{pro:WversusAlpha} gives $v_i(A_i)\ge\alpha$, with probability greater than $1-\frac{1}{n_i}$.} 
}
\end{proof}
}

\section{Completing the proof of the main theorem}\label{subsec:constant-all-nprime}

In this section we state and prove the main result (without making use of Assumption~\ref{assm:no-large-item}).

{\begin{definition}[Effective agents]\label{def:eff-agents}
An agent $i$ is \emph{effective} if, at no step of the algorithm, agent~$i$ holds a bundle {that contains an item that} is {$\alpha$-heavy} for $i$. Let $\agents_{\mathrm{eff}}$ denote the set of all effective agents.
\end{definition}}

\begin{claim}\label{clm_eff_agents}
{For every agent $i$, either $i\in\agents_{\mathrm{eff}}$ or $v_i(A_i)\ge\alpha\cdot MMS(\items,v_i,n)$.}
\end{claim}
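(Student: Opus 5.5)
The plan is to argue by cases on whether agent~$i$ is effective. Throughout, the valuation is scaled so that $MMS(\items,v_i,n)=1$, as in Section~\ref{sec:W_i_expected}; the scaling affects neither the statement nor the run of RECE1. If $i\in\agents_{\mathrm{eff}}$ there is nothing to prove. So assume $i\notin\agents_{\mathrm{eff}}$. By Definition~\ref{def:eff-agents}, there is then some step of the algorithm at which agent~$i$ holds a bundle $\hat{A}_i$ containing an $\alpha$-heavy item $e$, i.e., $v_i(\{e\})\ge\alpha$.

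The first step uses monotonicity of $v_i$ at that step: $v_i(\hat{A}_i)\ge v_i(\{e\})\ge\alpha$.

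The second step shows that agent~$i$'s value for her own bundle never decreases after this point, until termination. The state of the allocation changes in only two ways. First, an item is added to some bundle: if it is added to agent~$i$'s bundle, her value weakly increases by monotonicity, and otherwise her bundle is unchanged. Second, the Cycle Rotation Process runs, and by Claim~\ref{clm:envy_cycle_sink} it only permutes bundles and never decreases any agent's value for her own bundle. The greedy phase only adds items and performs no rotations, so it fits the same pattern. An induction over the remaining steps then gives $v_i(A_i)\ge v_i(\hat{A}_i)\ge\alpha$ for the final allocation $A$.

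The step needing the most care is that agent~$i$ may later lose the physical bundle containing $e$ through a rotation. This does no harm, because rotations move agent~$i$ only along an edge she envies, so her own value strictly increases. This is exactly the first property in Claim~\ref{clm:envy_cycle_sink}, and we only ever track her value, not the identity of the item $e$. Undoing the scaling, this yields $v_i(A_i)\ge\alpha\cdot MMS(\items,v_i,n)$.
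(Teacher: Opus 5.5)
Your proposal is correct and follows essentially the same argument as the paper: a non-effective agent at some step holds a bundle containing an $\alpha$-heavy item, hence of value at least $\alpha$ by monotonicity. Since item additions and cycle rotations never decrease her value for her own bundle (Claim~\ref{clm:envy_cycle_sink}), this bound carries over to the final allocation. The case $MMS(\items,v_i,n)=0$, where scaling is unavailable, is trivial.
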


\begin{proof}
{If $i\notin\agents_{\mathrm{eff}}$, then at some step agent~$i$ held a bundle of $v_i$-value {at least} $\alpha$. Since the Cycle Rotation Process only increases each agent's value (Claim~\ref{clm:envy_cycle_sink}) and items are never removed, $v_i(A_i)\ge\alpha\cdot MMS(\items,v_i,n)$.}
\end{proof}

{
\begin{theorem}\label{thm:constant_bound_all_nprime}
{Let {$\alpha=0.142$}. Then, with positive probability, \textsc{RECE1} outputs an allocation that satisfies

\[
v_i(A_i)\ \ge\ {0.142}\cdot MMS(\items,v_i,n)
\]
for every agent $i$.}
\end{theorem}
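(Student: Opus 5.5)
The plan is a union bound over the effective agents, with a fallback to Proposition~\ref{pro:EF1subadditive} for agents whose $n_i$ is small. Scale so that $MMS(\items,v_i,n)=1$ for every $i$. By Claim~\ref{clm_eff_agents}, every non-effective agent already satisfies $v_i(A_i)\ge \alpha$ deterministically, so only agents in $\agents_{\mathrm{eff}}$ need attention.

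For each agent $i$, the relevant parameter is $n_i$, the number of light-items bundles. This depends only on the greedy phase, which is deterministic (the agent order is fixed arbitrarily). Hence $n_i$ is a deterministic quantity, and the set of agents with $n_i\ge 8$ is fixed in advance.

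\textbf{Small $n_i$.} For an agent $i$ with $n_i\le 7$, at least $n-n_i$ bundles received an $\alpha$-heavy item in the greedy phase. I would show that agent $i$ gets value at least $\alpha$ deterministically. If $i$ holds, or ever held, a heavy greedy item, then $i$ is non-effective and we are done. Otherwise, consider the removal argument of Proposition~\ref{pro:EF1subadditive}, restricted to the light bundles, together with Claim~\ref{clm:not_too_small_mms}. Remove the $n-n_i$ heavy greedy items $S$; then $MMS(\items\setminus S, v_i, n_i)\ge 1$. Heavy-item bundles can receive further items only if $i$ already values her bundle at least $\alpha$ (she does not envy the sink), so we may assume they are singletons. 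Then all remaining items lie in the $n_i$ light bundles, which are EF1 toward $i$. Proposition~\ref{pro:EF1subadditive} applied to these $n_i$ bundles gives $v_i(A_i)\ge 1/n_i\ge 1/7>0.142$. This step needs care: $i$'s own bundle must be among the light bundles, which holds because $i$ is effective.

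\textbf{Large $n_i$.} For agents with $n_i\ge 8$, Proposition~\ref{prop:reduce_to_nprime_i} gives a failure probability below $1/n_i$. A naive union bound over up to $n$ agents fails, since $\sum_i 1/n_i$ can exceed $1$. So I would strengthen the bound by reapplying Corollary~\ref{cor:Wprime-fail-prob} with $\eta:=1/n$ (or $\eta$ slightly below $1/n$) instead of $1/n_i$. The slack must then be checked: the condition $n_i g(\alpha)-\alpha\ln(1/\eta)-\sqrt{\alpha^2\ln^2(1/\eta)+2\alpha n_i\ln(1/\eta)}\ge (2n_i-1)\alpha$ must hold. This is the main obstacle, because when $n_i\ll n$ the $\ln n$ terms dominate.

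To handle it, I would lower bound $n_i$ in terms of $n$ for effective agents, or else fall back again on the small-$n_i$ argument: Proposition~\ref{pro:EF1subadditive} yields $1/n_i\ge\alpha$ whenever $n_i\le 7$. The intended case split is therefore on the threshold $n_i\le 1/\alpha$ versus $n_i>1/\alpha$. In the first case the deterministic EF1 bound suffices. In the second case one needs $\sum_{i:\,n_i>7}\Pr[\text{fail}_i]<1$. I would try to verify this numerically by choosing $\eta_i$ proportional to $1/n$ and checking the slack inequality for all $n_i\ge 8$ with $n\ge n_i$. If this fails, I would instead use the tighter tail $\exp(-\Omega(n_i))$ from Theorem~\ref{thm:Wprime-concentration} and bound the number of agents that can share a given small $n_i$.

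Once the union bound is below $1$, with positive probability every agent satisfies $v_i(A_i)\ge 0.142$, which proves the theorem.
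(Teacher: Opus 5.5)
\textbf{There is a genuine gap: the case $n_i\ge 8$ is left open.} Your handling of non-effective agents (Claim~\ref{clm_eff_agents}) is correct and is what the paper does. So is your handling of agents with $n_i\le 7$ (Proposition~\ref{pro:EF1subadditive} on the $n_i$ light bundles, plus Claim~\ref{clm:not_too_small_mms}). For $n_i\ge 8$ you rightly identify the union bound as the crux, but none of your proposed routes closes it.
\begin{itemize}
\item Taking $\eta=1/n$ fails as soon as $n_i$ is not large compared with $\ln n$. For $n_i=8$ the available slack $8g(\alpha)-15\alpha$ is about $2.517$. At $n=9$ the subtracted terms $\alpha\ln n+\sqrt{\alpha^2\ln^2 n+2\alpha n_i\ln n}$ already total about $2.568$.
\item There is no lower bound on $n_i$ in terms of $n$: effective agents can have $n_i=8$ for arbitrarily large $n$.
\item For small $n_i$ the tail from Theorem~\ref{thm:Wprime-concentration} is a constant (about $0.12$ at $n_i=8$). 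A union bound over agents with small $n_i$ therefore closes only if very few of them can fail, and you give no argument limiting their number.
\end{itemize}
What is missing is not sharper concentration. It is a deterministic mechanism by which the success of a few agents forces the success of all the others.

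\textbf{The paper supplies this mechanism as follows.} Sort the effective agents so that $n_1\ge n_2\ge\cdots$ and set $\ell:=\max\{k: k\le n_k\}$. Union-bound only over the first $\ell$ agents. Each fails with probability $0$ (if $n_i\le 7$), or with probability less than $1/n_i\le 1/n_\ell\le 1/\ell$ by Proposition~\ref{prop:reduce_to_nprime_i}. So with positive probability all of them succeed, and an effective agent $p$ that succeeds must have $|A_p|\ge 2$.

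Now take any effective $j$ beyond position $\ell$. Maximality of $\ell$ gives $n_j\le n_{\ell+1}\le\ell$. Since $A_j$ is itself one of the $n_j$ light bundles for $j$, at most $n_j-1<\ell$ of the first $\ell$ agents hold bundles light for $j$. So some such $p$ holds a bundle whose greedy item $e$ has $v_j(e)\ge\alpha$. Because $|A_p|\ge 2$, that bundle received a random-phase item while it contained $e$ and no agent envied it. Hence $v_j(A_j)\ge v_j(e)\ge\alpha$ deterministically. This pigeonhole propagation is the idea your proposal lacks.

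One caution if you adopt it: effectiveness depends on the run, so the list of the first $\ell$ effective agents is itself random. The union bound over that list needs to be justified with this in mind; effectiveness is used only to guarantee $|A_p|\ge 2$.
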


\begin{proof}
Fix $\alpha=0.142$.

{If $i\notin\agents_{\mathrm{eff}}$, then $v_i(A_i)\ge\alpha\cdot MMS(\items,v_i,n)$ by Claim~\ref{clm_eff_agents}, and we are done.}

Assume henceforth that $i\in\agents_{\mathrm{eff}}$.
{For every effective agent~$i$, {recall $n_i$ from Definition~\ref{def:eff-bundles},} and let $u := |\agents_{\mathrm{eff}}|$.}
Relabel the effective agents so that $n_1 \ge n_2 \ge \cdots \ge n_u$ (breaking ties arbitrarily).
Define
\[
\ell\ :=\ \max\bigl\{\,k\in[u] : k \le n_k\bigr\}.
\]

For each effective agent $i$ with index at most $\ell$:
\begin{itemize}
\item {If $n_i \ge 8$: by Proposition~\ref{prop:reduce_to_nprime_i},
\[
v_i(A_i)\ \ge\ \alpha\cdot MMS(\items,v_i,n)
\]
with probability strictly greater than $1-\frac{1}{n_i}$.}
\item {If $n_i < 8$: every heavy-item bundle $A_j$ (w.r.t.~$i$) received an $\alpha$-heavy item~$e$ during the greedy phase. If $|A_j|\ge 2$, then $A_j$ received an additional item beyond~$e$, and Claim~\ref{clm:Wi-ge-Wiprime-or-alpha} gives $v_i(A_i)\ge\alpha\cdot MMS(\items,v_i,n)$. Otherwise, every heavy-item bundle contains exactly one item (its $\alpha$-heavy item). Let $S^i$ be the set of these items; then $|S^i|=n-n_i$ and each item in $S^i$ is $\alpha$-heavy for~$i$. The allocation restricted to the $n_i$ light-items bundles (i.e.\ items $\items\setminus S^i$) is EFL for the agents holding these bundles. By Proposition~\ref{prop:EF1_is1_over_n}, $v_i(A_i)\ge\frac{1}{n_i}\cdot MMS(\items\setminus S^i,v_i,n_i)$. By Claim~\ref{clm:not_too_small_mms}, $MMS(\items\setminus S^i,v_i,n_i)\ge MMS(\items,v_i,n)$. Since $n_i\le 7$, we have $\frac{1}{n_i}\ge\frac{1}{7}>0.142=\alpha$, so $v_i(A_i)\ge\alpha\cdot MMS(\items,v_i,n)$ deterministically.}
\end{itemize}

Applying a union bound over the {the first $\ell$ effective} agents:
since each failure probability is strictly less than {$\frac{1}{n_\ell}$}, the total failure probability is strictly less than~$1$.
Thus, with positive probability, every agent with index $\le \ell$ satisfies $v_i(A_i)\ge \alpha\cdot MMS(\items,v_i,n)$.

\paragraph{{Effective} agents with index greater than $\ell$.}
Condition on the above event, {that} every agent $k$ with index at most $\ell$ satisfies $v_k(A_k)\ge\alpha$.
{Since $k\in\agents_{\mathrm{eff}}$, agent~$k$ never held a single-item bundle of $v_k$-value $\ge\alpha$. Combined with $v_k(A_k)\ge\alpha$, this forces $|A_k|\ge 2$.}

{We prove} that every {effective} agent with index greater than $\ell$ also achieves value~$\ge\alpha$.
Fix such an agent; call her~$j$.
Since the $n_i$ are sorted in non-increasing order and $j > \ell$,
we have $n_j \le n_{\ell+1} \le \ell$, so $n_j \le \ell$. {As} {$A_j\in\allocs^j_{\mathrm{light}}$}, {bundle} $A_j$ {is} one of the $n_j$ {light-items bundles}. So, at most $n_j-1$ of the first~$\ell$ agents of~$\agents^j_{\mathrm{eff}}$ have their bundle in~{$\allocs^j_{\mathrm{light}}$}.
Since $n_j - 1 < \ell$, there exists a first-$\ell$ {agent $p$,} s.t.\ {$A_p\notin\allocs^j_{\mathrm{light}}$}.

{By definition of {$\allocs^j_{\mathrm{light}}$},} agent~$p$'s bundle~$A_p$ contains an item~$e$ with $v_j(e)\ge\alpha$.

{Since $|A_p| \ge 2$, bundle~$A_p$ received an additional item in the random phase. By the algorithm's construction, this can only happen when agent~$p$ is a source in the envy graph, meaning no agent envies~$A_p$. In particular $v_j(A_j) \ge v_j(A_p) \ge v_j(e) \ge \alpha \cdot MMS(\items, v_j, n)$.} 

Combining both groups, with positive probability every effective agent achieves the claimed bound.
\end{proof}
}

\ifshowendacks
{
\subsection*{Acknowledgements}

\ifanonymousmode\else This research was supported in part by the Israel Science Foundation (grant No. 1122/22). \fi{The preparation of this version of the paper was assisted by insightful comments and suggestions that AI tools {(GPT-5.5, Opus 4.7)} provided on a previous version of this paper.}
}
\fi

\newcommand{\renderappendices}{\appendix
  \section{Missing proofs from Section~\ref{sec:RECE1}}\label{app:missing-ece}

{
\subsection{ECE outputs an EF1 allocation}\label{app:ECE_EF1}
}

{Here we prove Proposition~\ref{prp:ECE_EF1}.}

\begin{proof}
Since a new item $e$ is always allocated to an agent that no other agent envies, if $\hat{A}$ is EF1 just before $e$ is allocated, 
then after allocating $e$  to such an agent the updated partial allocation remains EF1, by definition.
It remains to show that EF1 is preserved by the Cycle Rotation Process. 
{This holds because cycle rotations only permute existing bundles among agents (no bundle's contents change), and each agent's value for her own bundle does not decrease.}

\end{proof}

\subsection{The allocation produced by RECE1 is EFL}\label{app:RECE1_EFL_as}

{Here we prove Proposition~\ref{prop:RECE1_EFL_as}.}

\begin{proof}[Proof sketch] 
Since RECE1 is an instantiation of ECE, Proposition~\ref{prp:ECE_EF1} gives EF1.
For the stronger EFL property, fix agents $i,j$ with $|A_j|\ge 2$ and let $g$ be the last item added to $A_j$ (necessarily in the random phase).
At the round $g$ was allocated, the recipient had no incoming edges in the envy graph, so removing $g$ eliminates $i$'s envy: $v_i(A_i) \ge v_i(A_j \setminus \{g\})$.
Moreover, $g$ was still unallocated during the greedy phase, so by the greedy choice $v_i(g) \le v_i(e_i) \le v_i(A_i)$.
Together, these two conditions establish EFL.
\end{proof}

\section{Missing proofs from Section~\ref{sec:W_i_expected}}\label{app:Expected_lower_bound}

\subsection{Proof of Claim \ref{clm:marginal-beta}} \label{app:marginal-beta}
This is a well known fact. For completeness, we present a short proof.
\begin{proof}
By monotonicity, $v(A\cup B)\ge v(B)$, and therefore
\[
v(B\mid A)=v(A\cup B)-v(A)\ \ge\ v(B)-v(A)\ \ge\ \beta-\alpha.
\]
Consider adding the items of $B=(e_1,\dots,e_{|B|})$ one by one to $A$.
Submodularity implies:
\[
v\left(B\mid A\right)
=\sum_{l=1}^{\left|B\right|}v\left(e_l\mid A\cup_{k<l}e_{k}\right)
\leq\sum_{l=1}^{\left|B\right|}v\left(e_l\mid A\right).
\]
Combining the two inequalities yields
\[
(\beta-\alpha)\ \le\ \sum_{l=1}^{|B|} v(e_l\mid A).
\]
Finally, the random item $e$ is uniform in $B$, so averaging over $|B|$ gives the claimed inequality.
\end{proof}

\subsection{Averaging bound for monotone submodular valuations}\label{app:expected_lower_sub}

{We prove the standard averaging bound for monotone submodular functions: if $S$ is a uniformly random $r$-subset of a set $B$ with $|B|=k$, then $\mathbb{E}[v(S)] \ge (r/k)\,v(B)$. }
{
\begin{proof}
By Lemma~\ref{lem:MMSoverAdditive}, there exists an additive $v' \le v$ with $v'(B) = v(B)$.  Since each $e \in B$ belongs to a uniformly random $r$-subset with probability $r/k$, linearity of expectation gives
\[
\mathbb{E}[v(S)] \;\ge\; \mathbb{E}[v'(S)] \;=\; \sum_{e \in B} v'(\{e\})\,\Pr[e \in S] \;=\; \frac{r}{k}\,v'(B) \;=\; \frac{r}{k}\,v(B). \qedhere
\]
\end{proof}}

\subsection{Calculation omitted from Proposition \ref{prop:expected_c_j}} \label{app:expected_c_j}

\begin{lemma}[Sum bound for $g(\alpha)$]\label{lem:g-sum-bound}
For every integer $k\ge 1$ and every $\alpha\in(0,1)$,
\[
\sum_{r=1}^{k}\Bigl(\frac{1}{k}-\frac{\alpha}{r}\Bigr)_+\ \ge\ g(\alpha).
\]
\end{lemma}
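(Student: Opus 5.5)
The plan is to split on whether any summand is clipped by the positive part, and to compare harmonic sums with logarithms via integrals. Write $H_k:=\sum_{r=1}^k 1/r$. Note that $\frac1k-\frac{\alpha}{r}>0$ if and only if $r>\alpha k$. This gives two cases according to whether $\alpha k<1$ (no summand is clipped) or $\alpha k\ge 1$ (the first $\lfloor\alpha k\rfloor$ summands vanish).

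\emph{Case $\alpha k<1$.} Every summand is positive, so the sum equals $1-\alpha H_k$. I will use $H_k\le 1+\ln k$ together with $k<1/\alpha$, which gives $\ln k<\ln(1/\alpha)$. Hence
\[
1-\alpha H_k\ \ge\ 1-\alpha-\alpha\ln(1/\alpha)=g(\alpha).
\]

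\emph{Case $\alpha k\ge1$.} Set $x:=\alpha k$ and $a:=\lfloor x\rfloor\ge1$. The sum is then
\[
\frac{k-a}{k}-\alpha\sum_{r=a+1}^k\frac1r .
\]
I bound the tail harmonic sum by its first term plus an integral:
\[
\sum_{r=a+1}^k \frac1r\ \le\ \frac{1}{a+1}+\ln\frac{k}{a+1}.
\]
Subtract $g(\alpha)$, write $\ln(1/\alpha)=\ln(k/x)$, and use $a/k=\alpha a/x$. The difference then factors as $\alpha\,\psi(x)$, where
\[
\psi(x)=1-\frac{a}{x}-\frac{1}{a+1}-\ln\frac{x}{a+1},\qquad x\in[a,a+1).
\]
So it remains to show $\psi\ge0$ on this interval.

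The main obstacle is this one-variable inequality, and in particular getting the bookkeeping of the $\ln$ terms right so that they collapse to $\ln(x/(a+1))$. Once that is done, the inequality follows by differentiation. We have $\psi'(x)=\frac{a}{x^2}-\frac1x=\frac{a-x}{x^2}\le0$, so $\psi$ is non-increasing on $[a,a+1)$. Therefore $\psi(x)\ge\lim_{y\uparrow a+1}\psi(y)=1-\frac{a}{a+1}-\frac1{a+1}-0=0$, which completes the proof.
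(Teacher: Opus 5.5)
Your proposal is correct and follows the paper's proof essentially step for step. It uses the same split on $\alpha k<1$ versus $\alpha k\ge 1$, the same tail bound $\sum_{r=a+1}^{k}\frac{1}{r}\le \frac{1}{a+1}+\ln\frac{k}{a+1}$, and the same reduction to the non-increasing function $\psi(x)=1-\frac{a}{x}-\frac{1}{a+1}-\ln\frac{x}{a+1}$ on $[a,a+1)$, whose limit at $a+1$ is $0$.
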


{
\begin{proof}
If $k\alpha<1$, then $\alpha/r\le\alpha<1/k$ for every $r\in\{1,\dots,k\}$, so all $k$ summands are positive and the LHS equals $1-\alpha H_k$, where $H_k:=\sum_{r=1}^{k}1/r$. Since $H_k\le 1+\ln k$ and $k<1/\alpha$,
\[
1-\alpha H_k\ \ge\ 1-\alpha(1+\ln k)\ \ge\ 1-\alpha-\alpha\ln(1/\alpha)\ =\ g(\alpha).
\]

Assume henceforth that $k\alpha\ge 1$. Let $a:=\lfloor \alpha k\rfloor\ge 1$, and write $x:=\alpha k$, so $x\in[a,a+1)$. The positive summands are exactly those with $r>a$, hence
\[
\sum_{r=1}^{k}\Bigl(\frac{1}{k}-\frac{\alpha}{r}\Bigr)_+\ =\ \frac{k-a}{k}-\alpha\sum_{r=a+1}^{k}\frac{1}{r}.
\]
Use the elementary bound
\[
\sum_{r=a+1}^{k}\frac{1}{r}\ \le\ \frac{1}{a+1}+\int_{a+1}^{k}\frac{dt}{t}\ =\ \frac{1}{a+1}+\ln\frac{k}{a+1}.
\]
Therefore
\[
\sum_{r=1}^{k}\Bigl(\frac{1}{k}-\frac{\alpha}{r}\Bigr)_+\ \ge\ 1-\frac{a}{k}-\frac{\alpha}{a+1}-\alpha\ln\frac{k}{a+1}.
\]
Subtracting $g(\alpha)=1-\alpha-\alpha\ln(1/\alpha)$ and using $1/\alpha=k/x$ (so $\ln(1/\alpha)=\ln(k/x)$), the remaining gap, divided by $\alpha$, equals
\[
\psi(x)\ :=\ 1-\frac{a}{x}-\frac{1}{a+1}-\ln\frac{x}{a+1}.
\]
We wish to show that \(\psi(x)\ge 0\) for every possible \(x\in[a,a+1)\).
Fix \(a\). Differentiating
with respect to \(x\) gives
\[
\psi'(x)\ =\ \frac{a}{x^2}-\frac{1}{x}\ =\ \frac{a-x}{x^2}\ \le\ 0,
\]
so $\psi$ is non-increasing on $[a,a+1)$, and
\[
\psi(x)\ \ge\ \lim_{y\uparrow a+1}\psi(y)\ =\ 1-\frac{a}{a+1}-\frac{1}{a+1}-\ln 1\ =\ 0.
\]
This proves the bound.
\end{proof}
}

{\subsection{Proof of the drop lemma (Lemma~\ref{lem:drop})}\label{app:drop-lemma}}

{
\begin{proof}
For every $j\in[n]$, every nonempty $R\subseteq B_j$, and every $e\in R$, we prove the two inequalities of the lemma:
\[
P(R)-v'_i(\{e\})\ \le\ P(R\setminus\{e\})\ \le\ P(R).
\]
Throughout, the function $(x-\alpha)_+=\max\{x-\alpha,0\}$ is monotone non-decreasing and $1$-Lipschitz on $\mathbb{R}$, so $0\le (x+a-\alpha)_+-(x-\alpha)_+\le a$ for every $x\ge 0$ and $a\ge 0$. We also use subset-additivity of $v'_i$ on $B_j$: $v'_i(T)=v'_i(T\setminus\{e\})+v'_i(\{e\})$ for every $T\subseteq B_j$ and every $e\in T$.

Fix $R$ and $e$ as above. Set $s:=|R|$, $a:=v'_i(\{e\})$, $R^-:=R\setminus\{e\}$, and abbreviate $D(R,e):=P(R)-P(R^-)$; the lemma's two inequalities are then equivalent to the single bound $D(R,e)\in[0,a]$, which we prove by induction on $s$.

\paragraph{Step 1: A recursion for $D(R,e)$.}
By Claim~\ref{obs:P-recursion} applied to $R$, splitting the inner expectation $\kaliEXPECTED_{e'\sim\mathbb{U}(R)}$ on whether $e'=e$ or $e'\in R^-$, and using $v'_i(R)=v'_i(R^-)+a$,
\begin{equation}\label{eq:drop-PR}
P(R)\ =\ \frac{(v'_i(R^-)+a-\alpha)_+}{s}\ +\ \frac{1}{s}\,P(R^-)\ +\ \frac{s-1}{s}\,\kaliEXPECTED_{e'\sim\mathbb{U}(R^-)}\bigl[P(R\setminus\{e'\})\bigr].
\end{equation}
For $s\ge 2$, applying Claim~\ref{obs:P-recursion} to the nonempty set $R^-$,
\begin{equation}\label{eq:drop-PRminus}
P(R^-)\ =\ \frac{(v'_i(R^-)-\alpha)_+}{s-1}\ +\ \kaliEXPECTED_{e'\sim\mathbb{U}(R^-)}\bigl[P(R^-\setminus\{e'\})\bigr].
\end{equation}
For $e'\in R^-$, $R\setminus\{e'\}\ni e$ and $R^-\setminus\{e'\}=(R\setminus\{e'\})\setminus\{e\}$, so $P(R\setminus\{e'\})-P(R^-\setminus\{e'\})=D(R\setminus\{e'\},e)$. From~\eqref{eq:drop-PR},
\[
D(R,e)\ =\ P(R)-P(R^-)\ =\ \frac{(v'_i(R^-)+a-\alpha)_+}{s}\ -\ \frac{s-1}{s}P(R^-)\ +\ \frac{s-1}{s}\,\kaliEXPECTED_{e'\sim\mathbb{U}(R^-)}\bigl[P(R\setminus\{e'\})\bigr].
\]
Substituting~\eqref{eq:drop-PRminus} for $P(R^-)$ and simplifying,
\begin{align*}
D(R,e)
&=\ \frac{(v'_i(R^-)+a-\alpha)_+}{s}\ -\ \frac{s-1}{s}\!\left[\frac{(v'_i(R^-)-\alpha)_+}{s-1}+\kaliEXPECTED_{e'}[P(R^-\setminus\{e'\})]\right]\\
&\quad +\ \frac{s-1}{s}\,\kaliEXPECTED_{e'}[P(R\setminus\{e'\})]\\
&=\ \frac{(v'_i(R^-)+a-\alpha)_+-(v'_i(R^-)-\alpha)_+}{s}\ +\ \frac{s-1}{s}\,\kaliEXPECTED_{e'\sim\mathbb{U}(R^-)}\bigl[D(R\setminus\{e'\},e)\bigr].
\end{align*}
Hence, for $s\ge 2$,
\begin{equation}\label{eq:drop-recur}
D(R,e)\ =\ \frac{(v'_i(R^-)+a-\alpha)_+-(v'_i(R^-)-\alpha)_+}{s}\ +\ \frac{s-1}{s}\,\kaliEXPECTED_{e'\sim\mathbb{U}(R^-)}\bigl[D(R\setminus\{e'\},e)\bigr].
\end{equation}

\paragraph{Step 2: Induction on $|R|$.}
\emph{Base ($s=1$, i.e., $R=\{e\}$).} By Claim~\ref{obs:P-recursion} applied to $\{e\}$ together with $P(\emptyset)=0$, $P(\{e\})=(a-\alpha)_+$. Hence $D(\{e\},e)=(a-\alpha)_+-0\in[0,a]$. 

\emph{Inductive step ($s\ge 2$).} Assume $D(R',e')\in[0,v'_i(\{e'\})]$ for every (subset, element) pair with subset of size $<s$. The first term of~\eqref{eq:drop-recur} is $(v'_i(R^-)+a-\alpha)_+-(v'_i(R^-)-\alpha)_+\in[0,a]$ (with $x:=v'_i(R^-)\ge 0$). The second term, $\kaliEXPECTED_{e'\sim\mathbb{U}(R^-)}[D(R\setminus\{e'\},e)]$, is an average of quantities each in $[0,a]$ by the inductive hypothesis (each $R\setminus\{e'\}$ contains $e$ and has size $s-1<s$). A convex combination, with weights $1/s$ and $(s-1)/s$, of values in $[0,a]$ lies in $[0,a]$, so $D(R,e)\in[0,a]$.

In either case $D(R,e)=P(R)-P(R^-)\in[0,a]$, equivalently $P(R)-v'_i(\{e\})\le P(R\setminus\{e\})\le P(R)$.
\end{proof}
}

\subsection{Proof of Proposition~\ref{prop:EW-true}}\label{app:EW-true}

\begin{proof}
Set $d_i(e):=v_i(e\mid A_e^{<e})$, so $W_i=\sum_{e\in\items}d_i(e)$. Fix an MMS bundle $B_j$, partition $B_j=G_j\sqcup R_j^0$ as in Proposition~\ref{prop:expected_c_j}, set $k:=|B_j|$ and $k':=|R_j^0|$.

\paragraph{Per-round bound.} The recipient bundle $A_{e_r}^{<e_r}$ is an envy-graph sink; combined with monotonicity of $v_i(A_i)$ along the run, $X_r:=v_i(A_{e_r}^{<e_r})\le v_i(A_i)$ deterministically, so $\kaliEXPECTED[X_r]\le\mu_i\le\alpha$. Applying Claim~\ref{clm:marginal-beta} to $v_i$ with $A:=A_{e_r}^{<e_r}$ and source set $R$ (a uniform $r$-subset of $R_j^0$ at round $r$), then the inequality $\kaliEXPECTED[\max\{Y,0\}]\ge \max\{\kaliEXPECTED[Y],0\}$ applied with $Y:=v_i(R)-\alpha$ and $\kaliEXPECTED[v_i(R)]\ge \kaliEXPECTED[v'_i(R)]=(r/k')\,v'_i(R_j^0)$ (Claim~\ref{clm:avg-submodular}, additivity of $v'_i$ on $B_j$, and $v'_i\le v_i$), gives
\begin{equation}\label{eq:EWtrue-perround}
\kaliEXPECTED[d_i(e_r)]\ \ge\ \Bigl(\tfrac{v'_i(R_j^0)}{k'}-\tfrac{\alpha}{r}\Bigr)_+.
\end{equation}

\paragraph{Reduction to the $W'_i$ argument.} From here the proof   {has a structure similar to that of the proof} of Proposition~\ref{prop:expected_c_j}, with $d_i$ in place of $c_i$, and~\eqref{eq:EWtrue-perround} in place of Claim~\ref{clm:per-round-contrib}. 

{Set $\tau:=v'_i(R_j^0)$ Since $v'_i$ is additive on $B_j$ with $v'_i(B_j)=1$ and $R_j^0\subseteq B_j$, $\tau\in[0,1]$. The greedy items of $B_j$ contribute
\[
\sum_{g\in G_j}d_i(g)\ =\ \sum_{g\in G_j}v_i(\{g\})\ \ge\ \sum_{g\in G_j}v'_i(\{g\})\ =\ 1-\tau,
\]
using $v'_i\le v_i$ and additivity of $v'_i$ on $B_j$. We split into two cases.

\textit{Case 1: $\tau\le\alpha$.} The random-phase contribution is non-negative, so the greedy contribution alone gives
\[
\kaliEXPECTED\Bigl[\sum_{e\in B_j}d_i(e)\Bigr]\ \ge\ 1-\tau\ \ge\ 1-\alpha\ \ge\ g(\alpha),
\]
where the last inequality, {which can be written as} {$1-\alpha-g(\alpha)=\alpha\ln(1/\alpha)\ge 0$}, {holds} for $\alpha\in(0,1)$.

\textit{Case 2: $\tau>\alpha$.} Set $\theta:=\alpha/\tau\in(0,1)$. Summing~\eqref{eq:EWtrue-perround} over $r=1,\dots,k'$ and applying Lemma~\ref{lem:g-sum-bound} with parameter $\theta$,
\[
\kaliEXPECTED\Bigl[\sum_{e\in R_j^0}d_i(e)\Bigr]\ \ge\ \tau\sum_{r=1}^{k'}\Bigl(\tfrac{1}{k'}-\tfrac{\theta}{r}\Bigr)_+\ \ge\ \tau\,g(\theta)\ =\ \tau\,g(\alpha/\tau).
\]
Adding the greedy contribution,
\[
\kaliEXPECTED\Bigl[\sum_{e\in B_j}d_i(e)\Bigr]\ \ge\ (1-\tau)+\tau\,g(\alpha/\tau)\ =\ {1-\alpha-\alpha\ln(\tau/\alpha)}\ \ge\ {1-\alpha-\alpha\ln(1/\alpha)}\ =\ g(\alpha),
\]
using $\tau\le 1$ in the penultimate inequality.

In either case, $\kaliEXPECTED[\sum_{e\in B_j}d_i(e)]\ge g(\alpha)$. Summing over $j\in[n]$ yields $\kaliEXPECTED[W_i]\ge n\cdot g(\alpha)$.}\qedhere
\end{proof}

\section{Missing proofs from Section~\ref{sec:high_prob}}\label{app:missing-high_prob}

{\subsection{Proof of Proposition~\ref{prop:Mk-submart}}\label{app:Mk-submart}

\textbf{Proposition~\ref{prop:Mk-submart}} \textbf{.} \emph{$\{\Lambda_k\}_{k=0}^{T}$ is a submartingale with respect to $\{\mathcal{F}_k\}$, i.e.\ $\kaliEXPECTED[\Lambda_k-\Lambda_{k-1}\mid\mathcal{F}_{k-1}]\ge 0$ for every $k\in[T]$.}

\begin{proof}
Fix $k\in[T]$ and let $j^\star\in[n]$ be the (random) index for which $e_k\in B_{j^\star}$. It suffices to show
\begin{equation}\label{eq:Mk-conditional-goal}
\kaliEXPECTED[\Lambda_k-\Lambda_{k-1}\mid\mathcal{F}_{k-1},\,j^\star=j]\ \ge\ 0
\end{equation}
for every $j$, since averaging over $j^\star$ then yields the unconditional submartingale property.

\paragraph{Increment decomposition.} Only $R_j^{k-1}$ shrinks at round $k$; the other remaining sets are unchanged. Hence the only term in $\Lambda_k$ that changes besides the credit $c_i(e_k)$ is the bundle potential of $B_j$, giving
\begin{equation}\label{eq:Mk-increment}
\Lambda_k-\Lambda_{k-1}\ =\ c_i(e_k)\ -\ \Delta P_k,
\qquad
\Delta P_k\ :=\ P(R_j^{k-1})-P(R_j^{k-1}\setminus\{e_k\}).
\end{equation}
Goal~\eqref{eq:Mk-conditional-goal} thus reduces to
\begin{equation}\label{eq:Mk-credit-vs-drop}
\kaliEXPECTED[c_i(e_k)\mid\mathcal{F}_{k-1},\,j^\star=j]\ \ge\ \kaliEXPECTED[\Delta P_k\mid\mathcal{F}_{k-1},\,j^\star=j].
\end{equation}

\paragraph{Drop side (RHS of~\eqref{eq:Mk-credit-vs-drop}).} Conditional on $\mathcal{F}_{k-1}$ and $\{j^\star=j\}$, the item $e_k$ is uniformly distributed on $R_j^{k-1}$. Therefore
\[
\kaliEXPECTED[\Delta P_k\mid\mathcal{F}_{k-1},\,j^\star=j]\ =\ \kaliEXPECTED_{e'\sim\mathbb{U}(R_j^{k-1})}\bigl[P(R_j^{k-1})-P(R_j^{k-1}\setminus\{e'\})\bigr]\ =\ {\frac{(v'_i(R_j^{k-1})-\alpha)_+}{|R_j^{k-1}|}},
\]
where the last equality is Claim~\ref{obs:P-recursion} (the recursion for $P$, rearranged).

\paragraph{Credit side (LHS of~\eqref{eq:Mk-credit-vs-drop}).} Let $A:=A_{e_k}^{<e_k}$ denote the contents of the recipient bundle just before $e_k$ is allocated.

\emph{Case 1: $v'_i(A)<\alpha$.} Then $c_i(e_k)=v'_i(e_k\mid A)$. Apply Claim~\ref{clm:marginal-beta} to $v'_i$ with the conditioning set $A$ (which satisfies $v'_i(A)<\alpha$) and source set $B:=R_j^{k-1}\subseteq B_j$, using $\beta:=v'_i(R_j^{k-1})$. Together with the uniformity of $e_k$ on $R_j^{k-1}$,
\[
\kaliEXPECTED[c_i(e_k)\mid\mathcal{F}_{k-1},\,j^\star=j]\ =\ \kaliEXPECTED_{e\sim\mathbb{U}(R_j^{k-1})}[v'_i(e\mid A)]\ \ge\ \frac{(v'_i(R_j^{k-1})-\alpha)_+}{|R_j^{k-1}|}.
\]

\emph{Case 2: $v'_i(A)\ge\alpha$.} Then $c_i(e_k)=v'_i(\{e_k\})$. By the uniformity of $e_k$ on $R_j^{k-1}$ and subset-additivity of $v'_i$ on $B_j$,
\[
\kaliEXPECTED[c_i(e_k)\mid\mathcal{F}_{k-1},\,j^\star=j]\ =\ \frac{1}{|R_j^{k-1}|}\sum_{e\in R_j^{k-1}} v'_i(\{e\})\ =\ \frac{v'_i(R_j^{k-1})}{|R_j^{k-1}|}\ {\ge\ \frac{(v'_i(R_j^{k-1})-\alpha)_+}{|R_j^{k-1}|}},
\]
{since $(x-\alpha)_+ \le x$ for every $x\ge 0$.}

In both cases, the LHS of~\eqref{eq:Mk-credit-vs-drop} is at least {$(v'_i(R_j^{k-1})-\alpha)_+/|R_j^{k-1}|$}, which equals the RHS. This proves~\eqref{eq:Mk-credit-vs-drop}, hence~\eqref{eq:Mk-conditional-goal}. \qedhere
\end{proof}

}

\subsection{Proof of Lemma \ref{lem:MMSoverAdditive}}\label{app:vprime}

{In our proof of Lemma \ref{lem:MMSoverAdditive} we make use of the following claim.}

\begin{claim}
\label{cl:decreaseSubmodular}
Let $v$ be a monotone submodular valuation function and consider any item $e\in\items$.
Then, for every $0 \le t {\le} v(\{e\})$, the following function $v'$ is a monotone submodular valuation function:
\begin{enumerate}[label=\textbf{\arabic*.}, leftmargin=2.2em]
    \item $v'(\{e\}) = t$.
    \item $v'(S) = v(S)$ for every $S \subseteq (\items \setminus \{e\})$.
    \item $v'(S \cup \{e\}) = \min\{\,v(S \cup \{e\}),\ v'(S) + t\,\}$ for every $S \subseteq (\items \setminus \{e\})$.
\end{enumerate}
Moreover, $v'(S)\le v(S)$ for every $S\subseteq \items$.
\end{claim}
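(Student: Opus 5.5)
I will verify the four claimed properties directly: pointwise domination $v'\le v$, normalization, monotonicity, and submodularity of $v'$. Throughout, I write every set either as $T$ or as $T\cup\{e\}$ with $T\subseteq\items\setminus\{e\}$, and use the explicit formula $v'(T\cup\{e\})=\min\{v(T\cup\{e\}),v(T)+t\}$ (Item~3 with Item~2). Normalization and consistency with Item~1 come out immediately: $v'(\emptyset)=v(\emptyset)=0$, and $v'(\{e\})=\min\{v(\{e\}),t\}=t$ because $t\le v(\{e\})$.

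\emph{Domination} is also immediate. If $e\notin S$ the two functions agree, and otherwise $v'(S)$ is a minimum that includes $v(S)$ as one of its terms.

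\emph{Monotonicity.} I add a single element $x$ to $S$ and split into three cases. If $x\ne e$ and $e\notin S$, monotonicity follows from that of $v$. If $x=e$, both terms of the minimum are at least $v(S)$, by monotonicity of $v$ and $t\ge 0$. If $e\in S$ and $x\neq e$, then each of the two terms in the minimum for $S\cup\{x\}$ dominates the corresponding term for $S$, again by monotonicity of $v$.

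\emph{Submodularity.} Here I plan to prove the lattice form $v'(S)+v'(T)\ge v'(S\cup T)+v'(S\cap T)$ by cases according to whether $e$ lies in $S$ and/or $T$.
\begin{itemize}
\item If $e\notin S\cup T$, the inequality is just submodularity of $v$.
\item If $e$ lies in exactly one set, say $S=S_0\cup\{e\}$, I check which term attains the minimum in $v'(S)$. If it is $v(S)$, I use $v'(S\cup T)\le v(S\cup T)$ together with submodularity of $v$ on the pair $(S,T)$. If it is $v(S_0)+t$, I use $v'(S\cup T)\le v(S_0\cup T)+t$ together with submodularity of $v$ on $(S_0,T)$. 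In both subcases $S\cap T=S_0\cap T$.
\item If $e\in S\cap T$, I write $S=S_0+e$ and $T=T_0+e$, which gives four subcases according to which term attains each of the two minima on the left. In each subcase I upper bound $v'(S\cup T)$ and $v'(S\cap T)$ by a matching choice of terms.
\begin{itemize}
\item Both minima equal $v(\cdot)$: apply submodularity of $v$ to $(S,T)$.
\item Both minima equal $v(\cdot_0)+t$: apply submodularity to $(S_0,T_0)$ and add $2t$.
\item The mixed case $v'(S)=v(S)$, $v'(T)=v(T_0)+t$: apply submodularity to $(S,T_0)$, noting that $S\cup T_0=S\cup T$ and $(S\cap T_0)\cup\{e\}=S\cap T$. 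Then bound $v'(S\cup T)\le v(S\cup T)$ and $v'(S\cap T)\le v(S\cap T_0)+t$.
\end{itemize}
\end{itemize}

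The main obstacle is this mixed subcase, since it is the one place where a naive bound fails. The point to be careful about is which terms to pair on the right-hand side. Using $v(S\cup T_0)=v(S\cup T)$ is what makes the pairing close up, and the other mixed subcase follows by symmetry.
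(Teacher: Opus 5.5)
Your proposal is correct and follows the paper's proof essentially step for step: domination and the three monotonicity cases are checked directly, and submodularity is proved in lattice form by splitting on whether $e$ lies in neither, one, or both of $S,T$, applying submodularity of $v$ to the same pairs $(S,T)$, $(S_0,T)$, $(S_0,T_0)$ and $(S,T_0)$ in the corresponding subcases. Your explicit checks of normalization and of the consistency of Item~1 with Item~3 (using $t\le v(\{e\})$) are small additions that the paper leaves implicit.
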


\begin{proof}
\textbf{Pointwise domination.}
If $e\notin S$, then $v'(S)=v(S)$. If $e\in S$, write $S=T\cup\{e\}$ with $T\subseteq \items\setminus\{e\}$.
Then
\[
v'(S)=v'(T\cup\{e\})=\min\{v(T\cup\{e\}),\,v'(T)+t\}\le v(T\cup\{e\})=v(S).
\]

\medskip
\noindent\textbf{Monotonicity.}
Let $S\subseteq \items$ and $x\in \items\setminus S$. We prove $v'(S\cup\{x\})\ge v'(S)$.

If $e\notin S$ and $x\neq e$, then $v'(S)=v(S)$ and $v'(S\cup\{x\})=v(S\cup\{x\})\ge v(S)=v'(S)$.

If $x=e$ and $e\notin S$, then
\[
v'(S\cup\{e\})=\min\{v(S\cup\{e\}),\,v(S)+t\}\ge v(S),
\]
since both terms inside the minimum are at least $v'(S)$.

If $e\in S$ and $x\neq e$, write $S=T\cup\{e\}$ with $T\subseteq \items\setminus\{e\}$.
Then $S\cup\{x\}=(T\cup\{x\})\cup\{e\}$, and using monotonicity of $v$ and Item~2,
\[
v(T\cup\{x\}\cup\{e\})\ge v(T\cup\{e\}),\qquad v'(T\cup\{x\})=v(T\cup\{x\})\ge v(T)=v'(T).
\]
Therefore
\[
v'(S\cup\{x\})
=\min\{v(T\cup\{x\}\cup\{e\}),\,v'(T\cup\{x\})+t\}
\ge \min\{v(T\cup\{e\}),\,v'(T)+t\}=v'(S).
\]

\medskip
\noindent\textbf{Submodularity.}
We prove $v'(S)+v'(T)\ge v'(S\cup T)+v'(S\cap T)$ for all $S,T\subseteq\items$.

If $e\notin S\cup T$, then $v'=v$ on all four sets and the inequality holds since $v$ is submodular.

Assume $e\in S\cup T$. We handle the remaining cases.

\smallskip
\noindent\emph{Case 1: $e\in S$ and $e\notin T$.}
Write $S=S_0\cup\{e\}$ where $S_0\subseteq \items\setminus\{e\}$, and note $T\subseteq \items\setminus\{e\}$.
Then $v'(T)=v(T)$ and $v'(S\cap T)=v(S_0\cap T)$.
Also,
\[
v'(S)=\min\{v(S_0\cup\{e\}),\,v(S_0)+t\},\quad
v'(S\cup T)=\min\{v(S_0\cup T\cup\{e\}),\,v(S_0\cup T)+t\}.
\]
If $v'(S)=v(S_0\cup\{e\})$, then using $v'(S\cup T)=v'(S_0\cup T\cup\{e\})\le v(S_0\cup T\cup\{e\})$,
\[
v'(S)+v'(T)=v(S_0\cup\{e\})+v(T)\ge v(S_0\cup T\cup\{e\})+v(S_0\cap T)\ge v'(S\cup T)+v'(S\cap T),
\]
by submodularity of $v$.
If $v'(S)=v(S_0)+t$, then using $v'(S\cup T)\le v(S_0\cup T)+t$,
\[
v'(S)+v'(T)=v(S_0)+t+v(T)\ge v(S_0\cup T)+v(S_0\cap T)+t\ge v'(S\cup T)+v'(S\cap T),
\]
again by submodularity of $v$.

\smallskip
\noindent\emph{Case 2: $e\in S$ and $e\in T$.}
Write $S=S_0\cup\{e\}$ and $T=T_0\cup\{e\}$ where $S_0,T_0\subseteq \items\setminus\{e\}$.
By the definition of $v'$, we have
\[
v'(S)=\min\{\,v(S),\,v(S_0)+t\,\},\qquad
v'(T)=\min\{\,v(T),\,v(T_0)+t\,\}.
\]
Also,
\[
S\cup T=(S_0\cup T_0)\cup\{e\},\qquad S\cap T=(S_0\cap T_0)\cup\{e\},
\]
and therefore
\begin{align*}
v'(S\cup T)
&=\min\{\,v(S\cup T),\,v(S_0\cup T_0)+t\,\},\\
v'(S\cap T)
&=\min\{\,v(S\cap T),\,v(S_0\cap T_0)+t\,\}.
\end{align*}

We prove $v'(S)+v'(T)\ge v'(S\cup T)+v'(S\cap T)$ by a case analysis according to which term
attains each minimum.

\smallskip
\noindent\underline{Subcase 2.1:} $v'(S)=v(S)$ and $v'(T)=v(T)$.
Then $v'(S\cup T)\le v(S\cup T)$ and $v'(S\cap T)\le v(S\cap T)$, hence
\[
v'(S)+v'(T)=v(S)+v(T)\ge v(S\cup T)+v(S\cap T)\ge v'(S\cup T)+v'(S\cap T),
\]
where the first inequality is submodularity of $v$.

\smallskip
\noindent\underline{Subcase 2.2:} $v'(S)=v(S)$ and $v'(T)=v(T_0)+t$.
Since $T=T_0\cup\{e\}$, we also have $S\cup T=S\cup T_0$ and $S\cap T=(S\cap T_0)\cup\{e\}$.
Moreover, $v'(S\cup T)\le v(S\cup T)$ and $v'(S\cap T)\le v(S\cap T_0)+t$.
Therefore,
\begin{align*}
v'(S)+v'(T)
&=v(S)+v(T_0)+t\\
&\ge v(S\cup T_0)+v(S\cap T_0)+t \\
&= v(S\cup T)+\bigl(v(S\cap T_0)+t\bigr)\\
&\ge v'(S\cup T)+v'(S\cap T),
\end{align*}
where the inequality is submodularity of $v$ applied to the pair $(S,T_0)$.

\smallskip
\noindent\underline{Subcase 2.3:} $v'(S)=v(S_0)+t$ and $v'(T)=v(T)$.
This is symmetric to Subcase 2.2 (swap the roles of $S$ and $T$).

\smallskip
\noindent\underline{Subcase 2.4:} $v'(S)=v(S_0)+t$ and $v'(T)=v(T_0)+t$.
In this subcase,
\[
v'(S)+v'(T)=v(S_0)+v(T_0)+2t.
\]
Also, $v'(S\cup T)\le v(S_0\cup T_0)+t$ and $v'(S\cap T)\le v(S_0\cap T_0)+t$.
Thus,
\begin{align*}
v'(S\cup T)+v'(S\cap T)
&\le \bigl(v(S_0\cup T_0)+t\bigr)+\bigl(v(S_0\cap T_0)+t\bigr)\\
&= v(S_0\cup T_0)+v(S_0\cap T_0)+2t\\
&\le v(S_0)+v(T_0)+2t\\
&= v'(S)+v'(T),
\end{align*}
where the second inequality is submodularity of $v$ applied to $(S_0,T_0)$.

This completes the case analysis for $e\in S$ and $e\in T$, and therefore establishes submodularity of $v'$.

Thus $v'$ is submodular in all cases, completing the proof.
\end{proof}

\subsubsection{Additivizing disjoint bundles while preserving submodularity}

\paragraph{Lemma \ref{lem:MMSoverAdditive}}\label{par:MMSoverAdditive}
Let $v$ be a monotone submodular valuation function and let $B_1,\ldots,B_n$ be disjoint subsets of $\items$.
Then there exists a monotone submodular function $v'$ with the following properties:
\begin{itemize}
    \item $v'(S)\le v(S)$ for every $S\subseteq \items$.
    \item For each $j\in[n]$,
    \[
    v'(B_j)=v(B_j)
    \qquad\text{and}\qquad
    \sum_{e\in B_j} v'(\{e\}) = v'(B_j).
    \]
  {(Note: submodularity of $v'$ together with $\sum_{e\in B_j} v'(\{e\}) = v'(B_j)$ implies that $v'$ is additive on every subset of each $B_j$: for every $S\subseteq B_j$, $v'(S)=\sum_{e\in S} v'(\{e\})$.)}
\end{itemize}

\begin{proof}
Fix a bundle index $j$ and fix an \emph{internal order} of the items in $B_j$:
\[
B_j=\{e^j_1,\ldots,e^j_{k_j}\}.
\]
Define the chain prefixes $P^j_r:=\{e^j_1,\ldots,e^j_r\}$ (with $P^j_0=\emptyset$) and define additive weights
\[
w(e^j_r)\ :=\ v(P^j_r)-v(P^j_{r-1}) \ =\ v(e^j_r\mid P^j_{r-1}).
\]
By telescoping,
\[
\sum_{r=1}^{k_j} w(e^j_r) \ =\ v(P^j_{k_j})-v(P^j_0)\ =\ v(B_j).
\]

\medskip
\noindent\textbf{Step 1: $w$ lower-bounds $v$ on subsets of $B_j$.}

{By Claim~\ref{clm:chain-marginal-lower-bound}, applied with $B:=B_j$ under the internal order $e^j_1,\ldots,e^j_{k_j}$, the chain-order marginals $w(e^j_r)=v(e^j_r\mid P^j_{r-1})$ satisfy
\[
\sum_{e\in S} w(e)\ \le\ v(S)\qquad\text{for every }S\subseteq B_j,
\]
with equality on $S=B_j$ .}

\medskip
\noindent\textbf{Step 2: Build $v'$ by iterated truncations.}
Start with $v^{(0)}:=v$.
We will process items in a global sequence that respects the internal order inside each bundle:
process $e^1_1,e^1_2,\ldots,e^1_{k_1}$, then $e^2_1,\ldots,e^2_{k_2}$, and so on.

Suppose we are processing an item $e=e^j_r$.
Let the current function be $v^{(\ell)}$.
We apply Claim~\ref{cl:decreaseSubmodular} to $v^{(\ell)}$ and $e$ with parameter $t:=w(e)$, obtaining $v^{(\ell+1)}$.
By Claim~\ref{cl:decreaseSubmodular}, each step preserves {monotonicity}, submodularity and ensures pointwise domination:
\[
v^{(\ell+1)}(S)\le v^{(\ell)}(S)\le v(S)\qquad\forall S.
\]
Moreover, this truncation forces the \emph{maximum possible marginal} of $e$ into any set to be at most $t=w(e)$,
while leaving all sets not containing $e$ unchanged.

Because the items in different bundles are disjoint, truncating an item in $B_j$ never changes values of sets contained in $B_{j'}$
for $j'\neq j$ (those sets do not contain $e$). Hence we may analyze each bundle separately.

\medskip
\noindent\textbf{Step 3: {The final $v'$ satisfies $v'(B_j)=v(B_j)$ on every bundle $B_j$, and its restriction to each $B_j$ is additive with $v'(\{e\})=w(e)$.}}

Fix $j$ and consider the moment immediately after processing all items in $B_j$ in the iterated truncations.
We claim that for every $r\in\{0,1,\ldots,k_j\}$, the length-$r$ prefix $P^j_r=\{e^j_1,\ldots,e^j_r\}$ of $B_j$ (in the fixed internal order), at every iteration of the process after the truncation of item $e^j_r$, satisfies
\[
v'(P^j_r)\ =\ \sum_{u=1}^{r} w(e^j_u),
\]
and in particular $v'(B_j)=\sum_{u=1}^{k_j} w(e^j_u)=v(B_j)$.

We prove this by induction on $r$.
For $r=0$, $v'(\emptyset)=0$.
Assume it holds for $r-1$.
When processing $e^j_r$, we apply Claim~\ref{cl:decreaseSubmodular} with $t=w(e^j_r)$.
At that moment, the set $P^j_{r-1}$ contains no unprocessed items from $B_j$ and does not contain $e^j_r$. 
Therefore the new value of the prefix including $e^j_r$ is
\[
v'(P^j_r)=v'(P^j_{r-1}\cup\{e^j_r\})=\min\{\,v(\,P^j_r\,),\ v'(P^j_{r-1})+w(e^j_r)\,\}.
\]
By the induction hypothesis, $v'(P^j_{r-1})=\sum_{u=1}^{r-1} w(e^j_u)$, so the second term is
$\sum_{u=1}^{r} w(e^j_u)=v(P^j_r)$ by the definition of $w$ (telescoping along the chain).
Hence the two arguments of the minimum are equal, and we get
\[
v'(P^j_r)=v'(P^j_{r-1})+w(e^j_r)=\sum_{u=1}^{r} w(e^j_u).
\]
{By Claim \ref{cl:decreaseSubmodular}, this value remains the same for all the subsequent truncations.} This completes the induction and proves $v'(B_j)=v(B_j)$.

Finally, because the construction sets $v'(\{e^j_r\})=w(e^j_r)$ at the moment $e^j_r$ is processed, {and this value is preserved by every subsequent truncation: each later step processes some other item $e'\ne e^j_r$, and by Claim~\ref{cl:decreaseSubmodular} such a truncation leaves the value of every set not containing $e'$ unchanged. In particular, of $\{e^j_r\}$, so $v'(\{e^j_r\})=w(e^j_r)$ remains true at the end of the construction.} Therefore, we have
\[
\sum_{e\in B_j} v'(\{e\}) = \sum_{r=1}^{k_j} w(e^j_r) = v'(B_j).
\]
\end{proof}
}

\newcommand{\renderbibliography}{\bibliographystyle{ACM-Reference-Format}\bibliography{references}}

@inproceedings{conf/sigecom/LiptonMMS04,
  author       = {Richard J. Lipton and
                  Evangelos Markakis and
                  Elchanan Mossel and
                  Amin Saberi},
  editor       = {Jack S. Breese and
                  Joan Feigenbaum and
                  Margo I. Seltzer},
  title        = {On approximately fair allocations of indivisible goods},
  booktitle    = {Proceedings 5th {ACM} Conference on Electronic Commerce (EC-2004),
                  New York, NY, USA, May 17-20, 2004},
  pages        = {125--131},
  publisher    = {{ACM}},
  year         = {2004},
  url          = {https://doi.org/10.1145/988772.988792},
  doi          = {10.1145/988772.988792},
  bibsource    = {dblp computer science bibliography, https://dblp.org}
}

@inproceedings{conf/wine/BuLLST24,
  author       = {Xiaolin Bu and
                  Zihao Li and
                  Shengxin Liu and
                  Jiaxin Song and
                  Biaoshuai Tao},
  editor       = {Marios Mavronicolas and
                  Qi Qi and
                  Grant Schoenebeck},
  title        = {Logarithmic Comparison-Based Query Complexity for Fair Division of
                  Indivisible Goods},
  booktitle    = {Web and Internet Economics - 20th International Conference, {WINE}
                  2024, Edinburgh, UK, December 2-5, 2024, Proceedings},
  series       = {Lecture Notes in Computer Science},
  pages        = {348--365},
  publisher    = {Springer},
  year         = {2024},
  url          = {https://doi.org/10.1007/978-3-032-08560-3\_20},
  doi          = {10.1007/978-3-032-08560-3\_20},
  bibsource    = {dblp computer science bibliography, https://dblp.org}
}

@article{journals/talg/AmanatidisMNS17,
  author       = {Georgios Amanatidis and
                  Evangelos Markakis and
                  Afshin Nikzad and
                  Amin Saberi},
  title        = {Approximation Algorithms for Computing Maximin Share Allocations},
  journal      = {{ACM} Trans. Algorithms},
  volume       = {13},
  number       = {4},
  pages        = {52:1--52:28},
  year         = {2017},
  url          = {https://doi.org/10.1145/3147173},
  doi          = {10.1145/3147173},
  bibsource    = {dblp computer science bibliography, https://dblp.org}
}

@article{budish2011combinatorial,
  author    = {Eric Budish},
  title     = {The combinatorial assignment problem: Approximate competitive equilibrium from equal incomes},
  journal   = {Journal of Political Economy},
  volume    = {119},
  number    = {6},
  pages     = {1061--1103},
  year      = {2011},
  publisher = {University of Chicago Press}
}

@article{journals/geb/LehmannLN06,
  author       = {Benny Lehmann and
                  Daniel Lehmann and
                  Noam Nisan},
  title        = {Combinatorial auctions with decreasing marginal utilities},
  journal      = {Games Econ. Behav.},
  volume       = {55},
  number       = {2},
  pages        = {270--296},
  year         = {2006},
  url          = {https://doi.org/10.1016/j.geb.2005.02.006},
  doi          = {10.1016/J.GEB.2005.02.006},
  bibsource    = {dblp computer science bibliography, https://dblp.org}
}

@article{journals/jacm/KurokawaPW18,
  author       = {David Kurokawa and
                  Ariel D. Procaccia and
                  Junxing Wang},
  title        = {Fair Enough: Guaranteeing Approximate Maximin Shares},
  journal      = {J. {ACM}},
  volume       = {65},
  number       = {2},
  pages        = {8:1--8:27},
  year         = {2018},
  url          = {https://doi.org/10.1145/3140756},
  doi          = {10.1145/3140756},
  bibsource    = {dblp computer science bibliography, https://dblp.org}
}

@article{Foley67,
  author  = {Duncan K. Foley},
  title   = {Resource Allocation and the Public Sector},
  journal = {Yale Economic Essays},
  year    = {1967},
  volume  = {7},
  pages   = {45--98}
}

@article{Varian74,
  author  = {Hal R. Varian},
  title   = {Equity, Envy, and Efficiency},
  journal = {Journal of Economic Theory},
  year    = {1974},
  volume  = {9},
  number  = {1},
  pages   = {63--91}
}

@article{journals/corr/abs-2511-13056,
  author       = {Xin Huang and
                  Shengwei Zhou},
  title        = {An {FPTAS} for 7/9-Approximation to Maximin Share Allocations},
  journal      = {CoRR},
  volume       = {abs/2511.13056},
  year         = {2025},
  url          = {https://doi.org/10.48550/arXiv.2511.13056},
  doi          = {10.48550/ARXIV.2511.13056},
  eprinttype   = {arXiv},
  eprint       = {2511.13056},
  bibsource    = {dblp computer science bibliography, https://dblp.org}
}

@article{journals/corr/abs-2505-19961,
  author       = {Uriel Feige},
  title        = {The residual maximin share},
  journal      = {CoRR},
  volume       = {abs/2505.19961},
  year         = {2025},
  url          = {https://doi.org/10.48550/arXiv.2505.19961},
  doi          = {10.48550/ARXIV.2505.19961},
  eprinttype   = {arXiv},
  eprint       = {2505.19961},
  bibsource    = {dblp computer science bibliography, https://dblp.org}
}

@article{journals/corr/abs-2506-21493,
  author       = {Uriel Feige},
  title        = {From multi-allocations to allocations, with subadditive valuations},
  journal      = {CoRR},
  volume       = {abs/2506.21493},
  year         = {2025},
  url          = {https://doi.org/10.48550/arXiv.2506.21493},
  doi          = {10.48550/ARXIV.2506.21493},
  eprinttype   = {arXiv},
  eprint       = {2506.21493},
  bibsource    = {dblp computer science bibliography, https://dblp.org}
}

@inproceedings{conf/aaai/AkramiR25a,
  author       = {Hannaneh Akrami and
                  Nidhi Rathi},
  editor       = {Toby Walsh and
                  Julie Shah and
                  Zico Kolter},
  title        = {Achieving Maximin Share and {EFX/EF1} Guarantees Simultaneously},
  booktitle    = {Thirty-Ninth {AAAI} Conference on Artificial Intelligence, Thirty-Seventh
                  Conference on Innovative Applications of Artificial Intelligence,
                  Fifteenth Symposium on Educational Advances in Artificial Intelligence,
                  {AAAI} 2025, Philadelphia, PA, USA, February 25 - March 4, 2025},
  pages        = {13529--13537},
  publisher    = {{AAAI} Press},
  year         = {2025},
  url          = {https://doi.org/10.1609/aaai.v39i13.33477},
  doi          = {10.1609/AAAI.V39I13.33477},
  bibsource    = {dblp computer science bibliography, https://dblp.org}
}

@inproceedings{conf/soda/BarmanS26,
  author       = {Siddharth Barman and
                  Mashbat Suzuki},
  editor       = {Kasper Green Larsen and
                  Barna Saha},
  title        = {Compatibility of Fairness and Nash Welfare under Subadditive Valuations},
  booktitle    = {Proceedings of the 2026 Annual {ACM-SIAM} Symposium on Discrete Algorithms,
                  {SODA} 2026, Vancouver, BC, Canada, January 11-14, 2026},
  pages        = {1724--1746},
  publisher    = {{SIAM}},
  year         = {2026},
  url          = {https://doi.org/10.1137/1.9781611978971.61},
  doi          = {10.1137/1.9781611978971.61},
  bibsource    = {dblp computer science bibliography, https://dblp.org}
}

@article{journals/corr/abs-2205-05363,
  author       = {Uriel Feige and
                  Alexey Norkin},
  title        = {Improved maximin fair allocation of indivisible items to three agents},
  journal      = {CoRR},
  volume       = {abs/2205.05363},
  year         = {2022},
  url          = {https://doi.org/10.48550/arXiv.2205.05363},
  doi          = {10.48550/ARXIV.2205.05363},
  eprinttype   = {arXiv},
  eprint       = {2205.05363},
  bibsource    = {dblp computer science bibliography, https://dblp.org}
}

@inproceedings{conf/sigecom/AshuriG25,
  author       = {Arash Ashuri and
                  Vasilis Gkatzelis},
  editor       = {Itai Ashlagi and
                  Aaron Roth},
  title        = {Simultaneously Satisfying {MXS} and {EFL}},
  booktitle    = {Proceedings of the 26th {ACM} Conference on Economics and Computation,
                  {EC} 2025, Stanford University, Stanford, CA, USA, July 7-10, 2025},
  pages        = {689--718},
  publisher    = {{ACM}},
  year         = {2025},
  url          = {https://doi.org/10.1145/3736252.3742613},
  doi          = {10.1145/3736252.3742613},
  bibsource    = {dblp computer science bibliography, https://dblp.org}
}

@inproceedings{conf/aaai/BarmanBMN18,
  author       = {Siddharth Barman and
                  Arpita Biswas and
                  Sanath Krishnamurthy and
                  Yadati Narahari},
  editor       = {Sheila A. McIlraith and
                  Kilian Q. Weinberger},
  title        = {Groupwise Maximin Fair Allocation of Indivisible Goods},
  booktitle    = {Proceedings of the Thirty-Second {AAAI} Conference on Artificial Intelligence,
                  (AAAI-18), the 30th innovative Applications of Artificial Intelligence
                  (IAAI-18), and the 8th {AAAI} Symposium on Educational Advances in
                  Artificial Intelligence (EAAI-18), New Orleans, Louisiana, USA, February
                  2-7, 2018},
  pages        = {917--924},
  publisher    = {{AAAI} Press},
  year         = {2018},
  url          = {https://doi.org/10.1609/aaai.v32i1.11463},
  doi          = {10.1609/AAAI.V32I1.11463},
  bibsource    = {dblp computer science bibliography, https://dblp.org}
}

@inproceedings{conf/sigecom/FeigeG25,
  author       = {Uriel Feige and
                  Vadim Grinberg},
  editor       = {Itai Ashlagi and
                  Aaron Roth},
  title        = {Fair allocations with subadditive and {XOS} valuations},
  booktitle    = {Proceedings of the 26th {ACM} Conference on Economics and Computation,
                  {EC} 2025, Stanford University, Stanford, CA, USA, July 7-10, 2025},
  pages        = {160--185},
  publisher    = {{ACM}},
  year         = {2025},
  url          = {https://doi.org/10.1145/3736252.3742514},
  doi          = {10.1145/3736252.3742514},
  bibsource    = {dblp computer science bibliography, https://dblp.org}
}

@inproceedings{conf/ijcai/0001CMS25,
  author       = {George Christodoulou and
                  Vasilis Christoforidis and
                  Symeon Mastrakoulis and
                  Alkmini Sgouritsa},
  title        = {Maximin Share Guarantees for Few Agents with Subadditive Valuations},
  booktitle    = {Proceedings of the Thirty-Fourth International Joint Conference on
                  Artificial Intelligence, {IJCAI} 2025, Montreal, Canada, August 16-22,
                  2025},
  pages        = {3788--3795},
  publisher    = {ijcai.org},
  year         = {2025},
  url          = {https://doi.org/10.24963/ijcai.2025/421},
  doi          = {10.24963/IJCAI.2025/421},
  bibsource    = {dblp computer science bibliography, https://dblp.org}
}

@article{journals/corr/abs-2303-12444,
  author       = {Gilad Ben Uziahu and
                  Uriel Feige},
  title        = {On Fair Allocation of Indivisible Goods to Submodular Agents},
  journal      = {CoRR},
  volume       = {abs/2303.12444},
  year         = {2023},
  url          = {https://doi.org/10.48550/arXiv.2303.12444},
  doi          = {10.48550/ARXIV.2303.12444},
  eprinttype   = {arXiv},
  eprint       = {2303.12444},
  bibsource    = {dblp computer science bibliography, https://dblp.org}
}

@inproceedings{conf/aaai/AkramiR25,
  author       = {Hannaneh Akrami and
                  Nidhi Rathi},
  editor       = {Toby Walsh and
                  Julie Shah and
                  Zico Kolter},
  title        = {Epistemic {EFX} Allocations Exist for Monotone Valuations},
  booktitle    = {Thirty-Ninth {AAAI} Conference on Artificial Intelligence, Thirty-Seventh
                  Conference on Innovative Applications of Artificial Intelligence,
                  Fifteenth Symposium on Educational Advances in Artificial Intelligence,
                  {AAAI} 2025, Philadelphia, PA, USA, February 25 - March 4, 2025},
  pages        = {13520--13528},
  publisher    = {{AAAI} Press},
  year         = {2025},
  url          = {https://doi.org/10.1609/aaai.v39i13.33476},
  doi          = {10.1609/AAAI.V39I13.33476},
  bibsource    = {dblp computer science bibliography, https://dblp.org}
}

@article{journals/siamdm/PlautR20,
  author       = {Benjamin Plaut and
                  Tim Roughgarden},
  title        = {Almost Envy-Freeness with General Valuations},
  journal      = {{SIAM} J. Discret. Math.},
  volume       = {34},
  number       = {2},
  pages        = {1039--1068},
  year         = {2020},
  url          = {https://doi.org/10.1137/19M124397X},
  doi          = {10.1137/19M124397X},
  bibsource    = {dblp computer science bibliography, https://dblp.org}
}

@inproceedings{conf/wine/FeigeST21,
  author       = {Uriel Feige and
                  Ariel Sapir and
                  Laliv Tauber},
  editor       = {Michal Feldman and
                  Hu Fu and
                  Inbal Talgam{-}Cohen},
  title        = {A Tight Negative Example for {MMS} Fair Allocations},
  booktitle    = {Web and Internet Economics - 17th International Conference, {WINE}
                  2021, Potsdam, Germany, December 14-17, 2021, Proceedings},
  series       = {Lecture Notes in Computer Science},
  pages        = {355--372},
  publisher    = {Springer},
  year         = {2021},
  url          = {https://doi.org/10.1007/978-3-030-94676-0\_20},
  doi          = {10.1007/978-3-030-94676-0\_20},
  bibsource    = {dblp computer science bibliography, https://dblp.org}
}

@article{journals/teco/CaragiannisKMPS19,
  author       = {Ioannis Caragiannis and
                  David Kurokawa and
                  Herv{\'{e}} Moulin and
                  Ariel D. Procaccia and
                  Nisarg Shah and
                  Junxing Wang},
  title        = {The Unreasonable Fairness of Maximum Nash Welfare},
  journal      = {{ACM} Trans. Economics and Comput.},
  volume       = {7},
  number       = {3},
  pages        = {12:1--12:32},
  year         = {2019},
  url          = {https://doi.org/10.1145/3355902},
  doi          = {10.1145/3355902},
  bibsource    = {dblp computer science bibliography, https://dblp.org}
}

@article{journals/jacm/ChaudhuryGM24,
  author       = {Bhaskar Ray Chaudhury and
                  Jugal Garg and
                  Kurt Mehlhorn},
  title        = {{EFX} Exists for Three Agents},
  journal      = {J. {ACM}},
  volume       = {71},
  number       = {1},
  pages        = {4:1--4:27},
  year         = {2024},
  url          = {https://doi.org/10.1145/3616009},
  doi          = {10.1145/3616009},
  bibsource    = {dblp computer science bibliography, https://dblp.org}
}

@article{journals/ai/GhodsiHSSY22,
  author       = {Mohammad Ghodsi and
                  Mohammad Taghi Hajiaghayi and
                  Masoud Seddighin and
                  Saeed Seddighin and
                  Hadi Yami},
  title        = {Fair allocation of indivisible goods: Beyond additive valuations},
  journal      = {Artif. Intell.},
  volume       = {303},
  pages        = {103633},
  year         = {2022},
  url          = {https://doi.org/10.1016/j.artint.2021.103633},
  doi          = {10.1016/J.ARTINT.2021.103633},
  bibsource    = {dblp computer science bibliography, https://dblp.org}
}

@inproceedings{conf/ijcai/AkramiGST23,
  author       = {Hannaneh Akrami and
                  Jugal Garg and
                  Eklavya Sharma and
                  Setareh Taki},
  title        = {Simplification and Improvement of {MMS} Approximation},
  booktitle    = {Proceedings of the Thirty-Second International Joint Conference on
                  Artificial Intelligence, {IJCAI} 2023, 19th-25th August 2023, Macao,
                  SAR, China},
  pages        = {2485--2493},
  publisher    = {ijcai.org},
  year         = {2023},
  url          = {https://doi.org/10.24963/ijcai.2023/276},
  doi          = {10.24963/IJCAI.2023/276},
  bibsource    = {dblp computer science bibliography, https://dblp.org}
}

@inproceedings{conf/soda/AkramiG24,
  author       = {Hannaneh Akrami and
                  Jugal Garg},
  editor       = {David P. Woodruff},
  title        = {Breaking the 3/4 Barrier for Approximate Maximin Share},
  booktitle    = {Proceedings of the 2024 {ACM-SIAM} Symposium on Discrete Algorithms,
                  {SODA} 2024, Alexandria, VA, USA, January 7-10, 2024},
  pages        = {74--91},
  publisher    = {{SIAM}},
  year         = {2024},
  url          = {https://doi.org/10.1137/1.9781611977912.4},
  doi          = {10.1137/1.9781611977912.4},
  bibsource    = {dblp computer science bibliography, https://dblp.org}
}

@article{journals/teco/BarmanK20,
  author       = {Siddharth Barman and
                  Sanath Kumar Krishnamurthy},
  title        = {Approximation Algorithms for Maximin Fair Division},
  journal      = {{ACM} Trans. Economics and Comput.},
  volume       = {8},
  number       = {1},
  pages        = {5:1--5:28},
  year         = {2020},
  url          = {https://doi.org/10.1145/3381525},
  doi          = {10.1145/3381525},
  bibsource    = {dblp computer science bibliography, https://dblp.org}
}

@inproceedings{conf/sigecom/GhodsiHSSY18,
  author       = {Mohammad Ghodsi and
                  Mohammad Taghi Hajiaghayi and
                  Masoud Seddighin and
                  Saeed Seddighin and
                  Hadi Yami},
  editor       = {{\'{E}}va Tardos and
                  Edith Elkind and
                  Rakesh Vohra},
  title        = {Fair Allocation of Indivisible Goods: Improvements and Generalizations},
  booktitle    = {Proceedings of the 2018 {ACM} Conference on Economics and Computation,
                  Ithaca, NY, USA, June 18-22, 2018},
  pages        = {539--556},
  publisher    = {{ACM}},
  year         = {2018},
  url          = {https://doi.org/10.1145/3219166.3219238},
  doi          = {10.1145/3219166.3219238},
  bibsource    = {dblp computer science bibliography, https://dblp.org}
}

@inproceedings{conf/soda/HeidariKSS26,
  author       = {Ehsan Heidari and
                  Alireza Kaviani and
                  Masoud Seddighin and
                  AmirMohammad Shahrezaei},
  editor       = {Kasper Green Larsen and
                  Barna Saha},
  title        = {Improved Maximin Share Guarantee for Additive Valuations},
  booktitle    = {Proceedings of the 2026 Annual {ACM-SIAM} Symposium on Discrete Algorithms,
                  {SODA} 2026, Vancouver, BC, Canada, January 11-14, 2026},
  pages        = {2239--2290},
  publisher    = {{SIAM}},
  year         = {2026},
  url          = {https://doi.org/10.1137/1.9781611978971.81},
  doi          = {10.1137/1.9781611978971.81},
  bibsource    = {dblp computer science bibliography, https://dblp.org}
}

@article{journals/tcs/AmanatidisMN20,
  author       = {Georgios Amanatidis and
                  Evangelos Markakis and
                  Apostolos Ntokos},
  title        = {Multiple birds with one stone: Beating 1/2 for {EFX} and {GMMS} via
                  envy cycle elimination},
  journal      = {Theor. Comput. Sci.},
  volume       = {841},
  pages        = {94--109},
  year         = {2020},
  url          = {https://doi.org/10.1016/j.tcs.2020.07.006},
  doi          = {10.1016/J.TCS.2020.07.006},
  bibsource    = {dblp computer science bibliography, https://dblp.org}
}

@article{fan2015exponential,
  author       = {Xiequan Fan and
                  Ion Grama and
                  Quansheng Liu},
  title        = {Exponential inequalities for martingales with applications},
  journal      = {Electron. J. Probab.},
  volume       = {20},
  pages        = {1--22},
  year         = {2015},
  doi          = {10.1214/EJP.v20-3496}
}

@article{freedman1975tail,
  author       = {David A. Freedman},
  title        = {On tail probabilities for martingales},
  journal      = {Ann. Probab.},
  volume       = {3},
  number       = {1},
  pages        = {100--118},
  year         = {1975},
  doi          = {10.1214/aop/1176996452}
}

@article{journals/tcs/GourvesM19,
  author       = {Laurent Gourv{\`{e}}s and
                  J{\'{e}}r{\^{o}}me Monnot},
  title        = {On maximin share allocations in matroids},
  journal      = {Theor. Comput. Sci.},
  volume       = {754},
  pages        = {50--64},
  year         = {2019},
  url          = {https://doi.org/10.1016/j.tcs.2018.05.018},
  doi          = {10.1016/J.TCS.2018.05.018},
  bibsource    = {dblp computer science bibliography, https://dblp.org}
}

@article{journals/corr/abs-2604-18216,
  author       = {Hannaneh Akrami and
                  Alexander Mayorov and
                  Kurt Mehlhorn and
                  Shreyas Srinivas and
                  Christoph Weidenbach},
  title        = {A Counterexample to {EFX} $n \ge 3$ Agents, $m \ge n + 5$ Items, Submodular Valuations via {SAT}-Solving},
  journal      = {CoRR},
  volume       = {abs/2604.18216},
  year         = {2026},
  url          = {https://doi.org/10.48550/arXiv.2604.18216},
  doi          = {10.48550/ARXIV.2604.18216},
  eprinttype   = {arXiv},
  eprint       = {2604.18216},
  bibsource    = {dblp computer science bibliography, https://dblp.org}
}

@article{mackenzie2026counterexamplesefxsubmodularsubadditive,
  author       = {Simon Mackenzie and
                  Mashbat Suzuki},
  title        = {Counterexamples to {EFX} for Submodular and Subadditive Valuations},
  journal      = {CoRR},
  volume       = {abs/2605.06451},
  year         = {2026}
}

\renderbackmatter

\end{document}